  \DeclareFontShape{T1}{cmr}{m}{scit}{<->ssub*cmr/m/sc}{}%
\declaretheorem[style=plain,numberwithin=section]{theorem}
\declaretheorem[style=plain,numberlike=theorem]{lemma,corollary}
\declaretheorem[style=remark,numberlike=theorem]{remark}
\declaretheorem[style=plain,numberlike=theorem]{definition}
\declaretheorem[style=plain,numberlike=theorem]{proposition}
\declaretheorem[style=definition]{problem,conjecture}
\DeclarePairedDelimiter\rbra{\lparen}{\rparen}
\DeclarePairedDelimiter\sbra{\lbrack}{\rbrack}
\DeclarePairedDelimiter\cbra{\{}{\}}
\DeclarePairedDelimiter\abs{\lvert}{\rvert}
\DeclarePairedDelimiter\Abs{\lVert}{\rVert}
\DeclarePairedDelimiter\ceil{\lceil}{\rceil}
\DeclarePairedDelimiter\floor{\lfloor}{\rfloor}
\DeclarePairedDelimiter\ket{\lvert}{\rangle}
\DeclarePairedDelimiter\bra{\langle}{\rvert}
\DeclarePairedDelimiter\ave{\langle}{\rangle}
\newcommand{\ketbra}[2]{\ensuremath{\ket{#1}\!\bra{#2}}}
\newcommand{\tr} {\operatorname{tr}}
\newcommand{\poly} {\operatorname{poly}}
\newcommand{\rank} {\operatorname{rank}}
\newcommand{\supp} {\operatorname{supp}}
\newcommand{\sgn} {\operatorname{sgn}}
\newcommand{\negl} {\textrm{negl}}
\newcommand{\BQP}{\textnormal{\textsf{BQP}}\xspace}
\newcommand{\QMA}{\textnormal{\textsf{QMA}}\xspace}
\newcommand{\NQP}{\textnormal{\textsf{NQP}}\xspace}
\newcommand{\coNQP}{\textnormal{\textsf{coNQP}}\xspace}
\newcommand{\coCeP}{\mathsf{coC_{=}P}}
\newcommand{\CeP}{\mathsf{C_{=}P}}
\newcommand{\PP}{\textnormal{\textsf{PP}}\xspace}
\newcommand{\SZK}{\textnormal{\textsf{SZK}}\xspace}
\newcommand{\QSZK}{\textnormal{\textsf{QSZK}}\xspace}
\newcommand{\NIQSZK}{\textnormal{\textsf{NIQSZK}}\xspace}
\newcommand{\Real} {\operatorname{Re}}
\newcommand{\Imag} {\operatorname{Im}}
\newcommand{\Sq}{\mathrm{S}_q}
\newcommand{\D}{\mathrm{D}}
\newcommand{\F}{\mathrm{F}}
\newcommand{\td}{\mathrm{T}}
\newcommand{\Talpha}{{\mathrm{T}_\alpha}}
\newcommand{\Lalpha}{{\Lambda_\alpha}}
\newcommand{\TV}{\mathrm{TV}}
\newcommand{\QSD}{\textnormal{\textsc{QSD}}\xspace}
\newcommand{\PureQSD}{\textnormal{\textsc{PureQSD}}\xspace}
\newcommand{\QSDalpha}{\texorpdfstring{\textnormal{\textsc{QSD}\textsubscript{\ensuremath{\alpha}}}}\xspace}
\newcommand{\PoweredQSDalpha}{\texorpdfstring{\textnormal{\textsc{PoweredQSD}\textsubscript{\ensuremath{\alpha}}}}\xspace}
\newcommand{\PureQSDalpha}{\textnormal{\textsc{PureQSD}\textsubscript{\ensuremath{\alpha}}}\xspace}
\newcommand{\PurePoweredQSDalpha}{\textnormal{\textsc{PurePoweredQSD}\textsubscript{\ensuremath{\alpha}}}\xspace}
\newcommand{\PurePoweredQSDinfty}{\textnormal{\textsc{PurePoweredQSD}\textsubscript{\ensuremath{\infty}}}\xspace}
\newcommand{\QJSP}{\textnormal{\textsc{QJSP}}\xspace}
\newcommand{\measQTDP}{\textnormal{\textsc{measQTDP}}\xspace}
\newcommand{\binset}{\{0,1\}}
\newcommand{\innerprod}[2]{\left\langle #1 | #2 \right\rangle}
\newcommand{\innerprodF}[2]{\langle #1 , #2 \rangle}
\newcommand{\dx}{\mathrm{d}x}
\newcommand{\dd}{\mathrm{d}}
\newcommand{\bbN}{\mathbb{N}}
\newcommand{\bbR}{\mathbb{R}}
\newcommand{\sfF}{\mathsf{F}}
\newcommand{\sfO}{\mathsf{O}}
\newcommand{\ttU}{\mathtt{U}}
\newcommand{\calE}{\mathcal{E}}
\newcommand{\calF}{\mathcal{F}}
\newcommand{\calI}{\mathcal{I}}
\newcommand{\calP}{\mathcal{P}}
\newcommand{\calH}{\mathcal{H}}
\newcommand{\CNOT}{\textnormal{\textsc{CNOT}}\xspace}
\renewcommand{\algorithmicrequire}{\textbf{Input:}} 
\renewcommand{\algorithmicensure}{\textbf{Output:}} 
\begin{document}
\setlength{\abovedisplayskip}{6pt}
\setlength{\belowdisplayskip}{6pt}

\title{On estimating Schatten norm and power distances between quantum states}

\author[1,2]{Yupan Liu\thanks{Email: \url{yupan.liu@epfl.ch}}}
\author[3,4]{Qisheng Wang\thanks{Email: \url{QishengWang1994@gmail.com}}}
\affil[1]{School of Computer and Communication Sciences, \'Ecole Polytechnique F\'ed\'erale de Lausanne}
\affil[2]{Graduate School of Mathematics, Nagoya University}
\affil[3]{School of Computer Science, Shanghai Jiao Tong University}
\affil[4]{School of Informatics, University of Edinburgh}
\date{}

\maketitle
\pagenumbering{roman}
\thispagestyle{empty}

\begin{abstract}
    We study the computational complexity of estimating the quantum Schatten $\alpha$-norm distance ${\mathrm{T}_\alpha}(\rho_0,\rho_1)$, defined via the Schatten $\alpha$-norm $\|A\|_{\alpha} \coloneqq \mathrm{tr}(|A|^{\alpha})^{1/\alpha}$, given $\operatorname{poly}(n)$-size state-preparation circuits of $n$-qubit quantum states $\rho_0$ and $\rho_1$. 
    This quantity serves as a lower bound on the trace distance and, for $\alpha > 1$, is interchangeable with its powered version ${\Lambda_\alpha}(\rho_0,\rho_1)$.
    For any constant $\alpha > 1$, we develop an efficient \textit{rank-independent} quantum estimator for ${\mathrm{T}_\alpha}(\rho_0,\rho_1)$, via ${\Lambda_\alpha}(\rho_0,\rho_1)$, with time complexity $\operatorname{poly}(n)$, achieving an \textit{exponential} speedup over the prior best results of $\exp(n)$ due to \hyperlink{cite.WGL+22}{Wang, Guan, Liu, Zhang, and Ying (TIT 2024)}. Our improvement leverages efficiently computable \textit{uniform} polynomial approximations of \textit{signed} positive power functions within quantum singular value transformation, thereby eliminating the dependence on the rank of the quantum states. 
    When $0<\alpha<1$ is a constant, ${\rm T}_\alpha(\rho_0,\rho_1)$ is no longer a distance metric, whereas $\Lambda_\alpha(\rho_0,\rho_1)$ remains one; we refer to the latter as the quantum Schatten $\alpha$-power distance. Accordingly, we provide a \emph{rank-efficient} quantum estimator for $\Lambda_\alpha (\rho_0,\rho_1)$, which serves as a natural substitute for the Schatten $\alpha$-norm distance below $\alpha=1$.
    
    Our quantum algorithm reveals a dichotomy in the computational complexity of the \textsc{Quantum State Distinguishability Problem with Schatten $\alpha$-norm} ($\textsc{QSD}_{\alpha}$), which involves deciding whether ${\mathrm{T}_\alpha}(\rho_0,\rho_1)$ is at least $2/5$ or at most $1/5$. This dichotomy arises between the cases of $\alpha > 1$ and $0 < \alpha\leq 1$:
    \begin{enumerate}[label={\upshape(\arabic*)}]
        \item For any constant $\alpha>1$, $\textsc{QSD}_{\alpha}$ is $\mathsf{BQP}$-complete. 
        \item For any $1 \leq \alpha(n) \leq 1+{\rm negl}(n)$, $\textsc{QSD}_{\alpha}$ is $\mathsf{QSZK}$-complete, implying that no efficient quantum estimator for $\mathrm{T}_\alpha(\rho_0,\rho_1)$ exists unless $\mathsf{BQP} = \mathsf{QSZK}$. This $\mathsf{QSZK}$-hardness result also extends to the promise problem defined by $\Lambda_\alpha(\rho_0,\rho_1)$ for constant $0<\alpha<1$.
    \end{enumerate}
    The hardness results follow from reductions based on new rank-dependent  inequalities for ${\rm T}_\alpha(\rho_0,\rho_1)$ when $1\leq \alpha \leq \infty$ and for $\Lambda_\alpha(\rho_0,\rho_1)$ when $0<\alpha<1$, which are of independent interest.
\end{abstract}

\newpage
\tableofcontents
\thispagestyle{empty}

\newpage
\pagenumbering{arabic}
\section{Introduction}

Closeness testing of quantum states is a central topic in quantum property testing~\cite{MdW16}, which aims to develop (efficient) quantum testers for properties of quantum objects. This problem is also closely related to verifying the functionality of quantum devices, such as $Q_0$ and $Q_1$, which are commonly designed to prepare the respective $n$-qubit (mixed) quantum states $\rho_0$ and $\rho_1$. The goal of (tolerant) quantum state testing is to design efficient quantum algorithms that test whether $\rho_0$ is $2/5$-far from or $1/5$-close to $\rho_1$ with respect to a given closeness measure. Notably, this problem generalizes classical (tolerant) distribution testing (see~\cite{Canonne20} and~\cite[Chapter 11]{Goldreich17}) from a non-commutative perspective. 

When the ``source codes'' of distribution- or state-preparation circuits are given, a surprising correspondence was established between such closeness testing problems -- measured by the $\ell_1$ norm distance~\cite{SV97,GV99} or entropy difference~\cite{GSV98} -- and interactive proof systems that admit statistical zero-knowledge (\SZK{}). This correspondence links closeness testing problems to both complexity theory and cryptography. 
A similar correspondence was later identified in the quantum world: closeness testing of quantum states with respect to the trace distance (given by Schatten $1$-norm)~\cite{Wat02,Wat09}, denoted by \QSD{}, or the von Neumann entropy difference~\cite{BASTS10} was shown to be \QSZK{}-complete.\footnote{The \QSZK{} containment of the closeness testing problem with respect to the trace distance, denoted by $\QSD[a(n),b(n)]$, holds only in the polarizing regime $a(n)^2-b(n) \geq 1/O(\log{n})$, as shown in~\cite{Wat02,Wat09}. A recent work~\cite{Liu23} slightly improved the parameter regime for this containment.\label{fnote:polarizing regime}} 

In contrast, when the closeness measure follows an $\ell_2$-norm-like definition, such as the Hilbert-Schmidt distance or the quantum linear entropy, the corresponding closeness testing problems are in \BQP{} using the SWAP test~\cite{BCWdW01,EAO+02}. Taken together, these results reveal a dichotomy in the complexity of closeness testing: when the measure is $\ell_1$-norm-like, the problems are \QSZK{}-hard and their query or sample complexities have \textit{polynomial} dependence on the dimension or rank of the states; whereas for $\ell_2$-norm-like measures, the problems are contained in \BQP{} and their query or sample complexities are \textit{rank-independent}. 

What about the closeness testing problems with respect to generalizations that approximate the trace distance or the von Neumann entropy? The quantum Schatten $\alpha$-norm distance $\Talpha(\cdot,\cdot)$, formulated in terms of the Schatten $\alpha$-norm, and its powered version $\Lalpha(\cdot,\cdot)$ are defined by
\[
\Talpha(\rho_0,\rho_1) \coloneqq \frac{1}{2} \tr\rbra*{|\rho_0-\rho_1|^\alpha}^{1/\alpha}
\quad\text{and}\quad
\Lalpha(\rho_0,\rho_1) \coloneqq \frac{1}{2} \tr\rbra*{|\rho_0-\rho_1|^\alpha}.
\]
Notably, the quantum Schatten $\alpha$-norm distance generalizes both the trace distance ($\alpha=1$) and the Hilbert-Schmidt distance ($\alpha=2$). 
Similarly, the quantum $q$-Tsallis entropy $\Sq(\rho)$ extends both von Neumann entropy ($q=1$) and quantum linear entropy ($q=2$). 

Interestingly, prior results show a divergence in behavior for closeness measures looser than the $\ell_2$ norm: The closeness testing problem with respect to $\Talpha(\rho_0,\rho_1)$, denoted by \QSDalpha{} (see \cref{def:QSDalpha}), is in \BQP{} only for \textit{even} integer $\alpha \geq 2$ via the Shift test~\cite{EAO+02};\footnote{Similarly, the closeness testing problem with respect to $\Lalpha(\rho_0,\rho_1)$, denoted by \PoweredQSDalpha{} (see \cref{def:PoweredQSDalpha-less-1}), is also in \BQP{} for \textit{even} integer $\alpha \geq 2$ using the same technique in~\cite{EAO+02}.} while for odd integers $\alpha \geq 3$, the query and sample complexities generally depend on the rank~\cite{WGL+22}. However, the techniques in~\cite{EAO+02} yield \BQP{} algorithms for estimating $\Sq(\rho)$ for \textit{all} integer $q \geq 2$. 
A recent work~\cite{LW25} further explored the closeness testing problem with respect to $\Sq(\rho_0) - \Sq(\rho_1)$, and extended the observed dichotomy from integers -- where the transition occurs between $q=1$ and $q\geq 2$ -- to a continuous setting, showing a sharp distinction between $q=1$ and any constant $q>1$. These results naturally lead to an intriguing question:

\begin{problem}
    \label{prob:quantum-Lp-distance}
    What is the computational complexity of the closeness testing problem with respect to $\Talpha(\rho_0,\rho_1)$? Does a similar dichotomy hold between $\alpha=1$ and constants $\alpha > 1$, or does the complexity vary largely depending on whether $\alpha$ is even or odd? 
\end{problem}

Why study $\ell_{\alpha}$ problems for possibly non-integer $\alpha>1$? The trace distance ($\alpha=1$) is a fundamental closeness measure of quantum states, capturing the maximum success probability of quantum state discrimination~\cite{Hol73,Hel67} and playing a key role in applications such as the security of quantum key distribution~\cite{BHLMO05,RK05}. For $\alpha>1$, such as $\alpha=1.001$, the quantum Schatten $\alpha$-norm distance provides a natural lower bound on the trace distance, and addressing \Cref{prob:quantum-Lp-distance} could make this bound efficiently computable. Moreover, insights from $\ell_{\alpha}$ problems have previously contributed to progress on well-studied $\ell_1$ problems, as seen in~\cite{LN04}. 

Beyond their connections to $\ell_1$ problems, $\ell_\alpha$ problems for $\alpha>1$ are of independent interest. In classical scenarios, they have applications in machine learning (e.g.,~\cite{KBSZ11}), as well as in streaming and sketching algorithms (e.g.~\cite{Indyk06}). In quantum scenarios, the Hilbert-Schmidt distance ($\alpha=2$) is widely used in quantum information theory (e.g.,~\cite{HRJJ04,PSW20}), and more recently, has been leveraged in designing near-term (variational) quantum algorithms (e.g., ~\cite{ACSZC19,EBS+23}). Consequently, positive answers to \Cref{prob:quantum-Lp-distance} may offer new opportunities to refine, extend, or develop techniques relevant to these areas.

A classical counterpart to \Cref{prob:quantum-Lp-distance} was investigated in~\cite{Waggoner15} nearly a decade ago. 
The main takeaway aligns with~\cite{LW25}: For constant $\alpha > 1$, the sample complexity for distinguishing whether $\TV_{\alpha}(D_0,D_1)$ is at least $2/5$ or at most $1/5$ is \textit{independent} of the dimension of the probability distributions $D_0$ and $D_1$, fewer samples are needed as $\alpha$ increases.\footnote{The closeness measure $\TV_{\alpha}(D_0,D_1)$ represents the classical $\ell_\alpha$ distance based on the $\ell_\alpha$ norm and generalizes the total variation distance, which is recovered at $\alpha=1$.} Classically, these upper bounds can be achieved by drawing a polynomial number of samples and computing the $\ell_{\alpha}$ norm distance between the resulting empirical distributions. However, this approach does not directly extend to the quantum world for two reasons: (1) quantum states $\rho_0$ and $\rho_1$ are generally not simultaneously diagonalizable; and (2) even when they are, estimating their eigenvalues associated with the unknown common eigenbasis remains challenging.\footnote{Consider a variant of the closeness testing problem \QSD{} in which $\rho_1$ is fixed to be the maximally mixed state $I/2^n$. Even though both $\rho_0$ and $\rho_1$ can be simultaneously diagonalized in the eigenbasis of $\rho_0$, this problem remains difficult to solve efficiently in general unless $\BQP = \NIQSZK$~\cite{Kobayashi03,CCKV08,BASTS10}. The complexity class \NIQSZK{} refers to non-interactive quantum statistical zero-knowledge~\cite{Kobayashi03}.} 
Addressing these challenges is central to resolving \Cref{prob:quantum-Lp-distance}, which is the main focus of our work.

In addition to the regime $\alpha\geq 1$, where the quantum Schatten $\alpha$-norm distance $\Talpha(\cdot,\cdot)$ serves as a distance metric, we also study the regime $0<\alpha<1$. In this regime, $\Talpha(\cdot,\cdot)$ no longer satisfies the triangle inequality, whereas its powered version $\Lalpha(\cdot,\cdot)$ does define a distance metric, a fact dating back to~\cite{Rotfel69} (see also the finite-dimensional formulations in~\cite{BL10,FP19}). This property makes $\Lalpha(\cdot,\cdot)$ a natural substitute for the Schatten $\alpha$-norm distance below $\alpha=1$, while retaining sensitivity to the singular-value profile of $\rho_0-\rho_1$ beyond what is captured by the trace distance alone. We refer to $\Lalpha(\cdot,\cdot)$ as the quantum Schatten $\alpha$-power distance, also known as the quantum Schatten $\alpha$-quasi-norm distance. The regime $0<\alpha<1$ is further motivated by the broader role of Schatten quasi-norm powers in matrix analysis and low-rank optimization~\cite{MMBS15,SLC16,GLRH20}, as well as by  quantum information quantities based on sublinear trace powers, such as quantum R\'enyi-type entropies~\cite{MLDSFT13,Tomamichel15,KFR26}. From this perspective, we also investigate quantum algorithms for estimating $\Lalpha(\cdot,\cdot)$ when $0 < \alpha < 1$ in our work.

\subsection{Main results}

We begin by stating our first main theorem about efficient quantum algorithms for estimating the quantum Schatten $\alpha$-norm distances for $\alpha > 1$ and the Schatten $\alpha$-power distances for $0 < \alpha < 1$, in both cases for \emph{constant} $\alpha$:
\begin{theorem}[Quantum estimators for $\Talpha$ and $\Lalpha$, informal] 
    \label{thm:QSDalpha-containment-informal}
    Given quantum query access to the state-preparation circuits of $n$-qubit quantum states $\rho_0$ and $\rho_1$, 
    \begin{itemize}
        \item For constant $\alpha > 1$, there is a quantum algorithm that estimates $\Talpha(\rho_0,\rho_1)$ to within additive error $1/5$ with query complexity $O(1)$.
        Furthermore, if the state-preparation circuits have $\poly(n)$-size descriptions, then the time complexity of the algorithm is $\poly(n)$. Consequently, \QSDalpha{} is in \BQP{}. 
        \item For constant $0 < \alpha < 1$, there is a quantum algorithm that estimates $\Lalpha(\rho_0, \rho_1)$ to within additive error $1/5$ with query complexity $\widetilde{O}(r^{\frac{2}{\alpha}-1})$, where $r$ is a known upper bound on the ranks of both $\rho_0$ and $\rho_1$.
    \end{itemize}
\end{theorem}

More precisely, for a given additive error $\epsilon$, the explicit complexities of estimating $\Talpha(\rho_0,\rho_1)$ and $\Lalpha(\rho_0, \rho_1)$ are presented below and compared to the results in the prior work.
A technical comparison of our approach with this result is provided in \Cref{subsec:techniques-upper-bounds}. 

\begin{itemize}
    \item For constant $\alpha > 1$, the query complexity of estimating $\Talpha(\rho_0,\rho_1)$ is $O\rbra{1/\epsilon^{\alpha+1+\frac{1}{\alpha-1}}}$ (see \cref{thm:Lalpha-query-algo}). 
    In combination with the samplizer \cite{WZ23,WZ24b}, estimating $\Talpha(\rho_0,\rho_1)$ can be done using $O\rbra{\log^2\rbra{1/\epsilon}/\epsilon^{3\alpha+2+\frac{2}{\alpha-1}}}$ samples of $\rho_0$ and $\rho_1$ and $O\rbra{n\log^2\rbra{1/\epsilon}/\epsilon^{3\alpha+2+\frac{2}{\alpha-1}}}$ elementary quantum gates (see \cref{thm:Lalpha-sample-algo}).
    In combination with the samplizer implied by \cite{TWZ25} (cf.\ \cite{CWZ25}), another approach to estimating $\Talpha(\rho_0,\rho_1)$ is obtained, which has a better sample complexity of $O\rbra{1/\epsilon^{2\alpha+2+\frac{2}{\alpha-1}}}$ while with a worse time complexity (i.e., the number of elementary quantum gates) of $\widetilde{O}\rbra{\poly\rbra{n}/\epsilon^{8\alpha+8+\frac{8}{\alpha-1}}}$ (see \cref{remark:sample-time-tradeoff}).\footnote{The two quantum algorithms have different sample complexities and time complexities: one has a better sample complexity and the other has a better time complexity. They are both built on the quantum query algorithms but use different samplization processes.} 
    All of these upper bounds can be expressed as $\poly(1/\epsilon)$ for constant $\alpha > 1$. 
    In addition, if the state-preparation circuits of $\rho_0$ and $\rho_1$ have size $\poly(n)$, then any of these three approaches implies a quantum algorithm with time complexity $\poly(n,1/\epsilon)$. 

    Previous quantum algorithms for estimating the quantum Schatten $\alpha$-norm distance $\Talpha(\rho_0,\rho_1)$ have all relied on its powered variant, specifically the quantum Schatten $\alpha$-power distance: 
    \[\Lalpha(\rho_0,\rho_1) \coloneqq \frac{1}{2} \tr\rbra*{\abs{\rho_0-\rho_1}^\alpha} = 2^{\alpha-1} \cdot \Talpha(\rho_0,\rho_1)^{\alpha}.\]
    Thus, the estimates of $\Talpha(\rho_0,\rho_1)$ and $\Lalpha(\rho_0,\rho_1)$ are polynomially related. 
    When $\alpha \geq 2$ is an even integer, estimating $\Lalpha(\rho_0,\rho_1)$ follows from a folklore result via the Shift test~\cite{EAO+02}, using $O\rbra{1/\epsilon}$ queries or $O\rbra{1/\epsilon^2}$ samples.\footnote{The sample complexity was noted in~\cite[Equations (83) and (84)]{QKW22}.} 
    However, for odd integer $\alpha > 1$, no efficient quantum algorithm is known in general. 
    The closeness testing of quantum states with respect to $\Talpha(\rho_0,\rho_1)$ for $\alpha=3$, with query complexity $O(1/\epsilon^{3/2})$, has been noted only in~\cite{GL20}.
    For general non-integer constants $\alpha>1$, a query complexity of $\widetilde{O}\rbra{r^{3+\frac{1}{\cbra{\alpha/2}}}/\epsilon^{4\alpha + \frac{\alpha}{\cbra{\alpha/2}}}}$ was given in \cite{WGL+22} for estimating $\Talpha(\rho_0,\rho_1)$, with polynomial dependence on the maximum rank $r$ of $\rho_0$ and $\rho_1$, where $\cbra{x} = x - \floor{x}$ denotes the fractional part of $x$. 
    In comparison, the quantum query algorithm for $\alpha > 1$ in \cref{thm:QSDalpha-containment-informal} exponentially improves the prior best result in \cite{WGL+22} in the query complexity and time complexity.

    \item For constant $0 < \alpha < 1$, the query complexity of estimating $\Lalpha(\rho_0,\rho_1)$ is $O\rbra{r^{\frac{2}{\alpha}-1}\log\rbra{r/\epsilon}/\epsilon^{\frac{2}{\alpha}}}$ (see \cref{thm:alpha-quasi-query}), improving the prior best query complexity $\widetilde{O}\rbra{r^{\frac{5}{\alpha}}/\epsilon^{\frac{5}{\alpha}+1}}$ given in \cite{WGL+22}. 
    In combination with the samplizer \cite{WZ23,WZ24b}, estimating $\Lalpha(\rho_0,\rho_1)$ can be done using $O\rbra{r^{\frac{5}{\alpha}-3}\log^4\rbra{r/\epsilon}/\epsilon^{\frac{5}{\alpha}}}$ samples of $\rho_0$ and $\rho_1$ and $O\rbra{nr^{\frac{5}{\alpha}-3}\log^4\rbra{r/\epsilon}/\epsilon^{\frac{5}{\alpha}}}$ elementary quantum gates (see \cref{thm:alpha-quasi-sample}).
    In combination with the samplizer implied by \cite{TWZ25} (cf.\ \cite{CWZ25}), another approach to estimating $\Lalpha(\rho_0,\rho_1)$ is obtained, which has a better sample complexity of $O\rbra{r^{\frac{4}{\alpha}-2}\log^2\rbra{r/\epsilon}/\epsilon^{\frac{4}{\alpha}}}$ while with a worse time complexity of $\widetilde{O}\rbra{\poly\rbra{n} r^{\frac{16}{\alpha}-8}\log^8\rbra{r/\epsilon}/\epsilon^{\frac{16}{\alpha}}}$ (see \cref{thm:alpha-le-1-better-samplizer}). 
    All of these upper bounds can be expressed as $\poly(r/\epsilon)$ for constant $0 < \alpha < 1$. 
    In addition, if the state-preparation circuits of $\rho_0$ and $\rho_1$ have size $\poly(n)$, then any of these three approaches implies a quantum algorithm with time complexity $\poly(n,r,1/\epsilon)$. 
\end{itemize}

\vspace{1em}
By combining our efficient quantum estimator for $\Talpha(\rho_0,\rho_1)$ in the regime of fixed real constant $\alpha>1$ (\Cref{thm:QSDalpha-containment-informal}), with our hardness results for \QSDalpha{} (\Cref{thm:QSDalpha-hardness-informal}), we identify a sharp phase transition between $\alpha=1$ and any fixed real constant $\alpha > 1$, addressing \Cref{prob:quantum-Lp-distance}. For clarity, we summarize our main theorems and the quantitative bounds on quantum query and sample complexities, derived from both our results and prior work, in \Cref{table:QSDalpha-complexities}. 

\begin{table}[H]
\centering
\adjustbox{max width=\textwidth}{
\begin{threeparttable}
\caption{Computational, query, and sample complexities of (\textsc{Powered})\QSDalpha{}.}
\label{table:QSDalpha-complexities}
\begin{tabular}{cccccc}
    \toprule
     & & $0<\alpha<1$ & $\alpha=1$ & $1 < \alpha(n) \leq 1\!+\!\negl(n)$ & $\alpha>1$\\
    \midrule
    \multicolumn{2}{c}{(\textsc{Powered})\QSDalpha{}} & \makecell{\QSZK{}-hard\\ \footnotesize{\Cref{thm:PoweredQSDalpha-QSZK-hard-less-1}}} & \makecell{\QSZK{}-complete\tnote{*}\\ \footnotesize{\cite{Wat02,Wat09}}} & \makecell{\QSZK{}-complete\tnote{*}\\ \footnotesize{\Cref{thm:QSDalpha-hardness-informal}\ref{thmitem:QSDalpha-hard-regime-informal}
}} & \makecell{\BQP{}-complete\\ \footnotesize{\Cref{thm:QSDalpha-containment-informal,thm:QSDalpha-hardness-informal}\ref{thmitem:QSDalpha-easy-regime-informal}}}\\
    \midrule
    \multirow{2}{*}{\makecell{Query\\\scriptsize{Complexity}}} & \makecell{\scriptsize{Upper}\\\scriptsize{Bound}} & \makecell{$\widetilde{O}\rbra*{ r^{\frac{2}{\alpha}-1} / \epsilon^{\frac{2}{\alpha}}}$ \\\footnotesize{\Cref{thm:alpha-quasi-query}}}
    & \makecell{$\widetilde{O}\rbra*{r/\epsilon^2}$\\\footnotesize{\cite{WZ24}}} & \makecell{$\widetilde{O}\rbra*{r^{3+\frac{2}{\alpha}}/\epsilon^{4\alpha+2}}$\\\footnotesize{\cite{WGL+22}}} & \makecell{$O\rbra*{1/\epsilon^{\alpha+1+\frac{1}{\alpha-1}}}$\\\footnotesize{\Cref{thm:Lalpha-query-algo}}}\\
    \cmidrule{2-6}
     & \makecell{\scriptsize{Lower}\\\scriptsize{Bound}} & \makecell{$\Omega\rbra[\big]{r^{\frac{1}{\alpha}-1}/\epsilon^{\frac{1}{\alpha}}}$\\\footnotesize{\Cref{thm:lower-bounds-PoweredQSDalpha}\ref{thmitem:queries-0leAle1}
}} & \makecell{$\Omega\rbra*{r^{1/2}/\epsilon}$\\\footnotesize{\cite{CWZ25}}} & \makecell{$\Omega\rbra*{r^{1/2}/\epsilon}$\\\footnotesize{\Cref{thm:query-lower-bounds-QSDalpha}}} & \makecell{$\Omega\rbra*{1/\epsilon}$\\\footnotesize{\Cref{thm:lower-bounds-PureQSDalpha}\ref{thmitem:easy-regime-queries}}}\\
     \midrule
    \multirow{2}{*}{\makecell{Sample\\\scriptsize{Complexity}}} & \makecell{\scriptsize{Upper}\\\scriptsize{Bound}} & \makecell{$\widetilde{O}\rbra*{ r^{\frac{4}{\alpha}-2} / \epsilon^{\frac{4}{\alpha}}}$\\\footnotesize{\Cref{thm:alpha-le-1-better-samplizer}}} & \makecell{$\widetilde{O}\rbra*{r^2/\epsilon^4}$\\\footnotesize{\cite{CWZ25}}} & \makecell{$\poly(r,1/\epsilon)$\\\footnotesize{Noted in \cite[Footnote 2]{WGL+22}}} & \makecell{$\widetilde{O}\rbra*{1/\epsilon^{3\alpha+2+\frac{2}{\alpha-1}}}$\\\footnotesize{\Cref{thm:Lalpha-sample-algo}}}\\
    \cmidrule{2-6}
     & \makecell{\scriptsize{Lower}\\\scriptsize{Bound}} & \makecell{$\Omega\rbra[\big]{r^{\frac{2}{\alpha}-1}/\epsilon^{\frac{2}{\alpha}}}$\\\footnotesize{\Cref{thm:lower-bounds-PoweredQSDalpha}\ref{thmitem:samples-0leAle1}}} & \makecell{$\Omega\rbra*{r/\epsilon^2}$\\\footnotesize{\cite{OW21}}} & \makecell{$\Omega\rbra*{r/\epsilon^2}$\\\footnotesize{\Cref{thm:sample-lower-bound-QSDalpha}}} & \makecell{$\Omega\rbra*{1/\epsilon^2}$\\\footnotesize{\Cref{thm:lower-bounds-PureQSDalpha}\ref{thmitem:easy-regime-samples}}}\\
    \bottomrule
\end{tabular}
\begin{tablenotes}
    \footnotesize
    \item[*] For any $\alpha(n) \in \big[1,1+\negl(n)]$, the promise problem $\QSD_{\alpha}[a,b]$ is contained in \QSZK{} only under the polarizing regime $a(n)^2-b(n)\geq 1/O(\log{n})$, which can be slightly improved when $\alpha=1$ (see \Cref{fnote:polarizing regime}). However, establishing containment in a complexity class typically requires the natural regime $a(n)-b(n) \geq 1/\poly(n)$, as in \Cref{thm:QSDalpha-containment-informal}.
\end{tablenotes}
\end{threeparttable}
}
\end{table}

Finally, we present our second main theorem, which addresses the computational hardness of \QSDalpha{} ($\alpha\geq 1$) and \PoweredQSDalpha{} ($0<\alpha<1$), as outlined in \Cref{thm:QSDalpha-hardness-informal}. In this context, \PureQSDalpha{} refers to a restricted variant of \QSDalpha{} (see also \cref{def:QSDalpha}), where the states of interest are pure. For convenience, we say that $\alpha$ is \emph{fixed} if either $\alpha$ is a real constant or $\alpha=\infty$. 

\begin{theorem}[Computational hardness of \QSDalpha{} and \PoweredQSDalpha{}]
    \label{thm:QSDalpha-hardness-informal}
    The promise problems \QSDalpha{} and \PoweredQSDalpha{} capture the computational power of the respective complexity classes in the following regimes:
    \begin{enumerate}[label={\upshape(\arabic*)}, topsep=0.33em, itemsep=0.33em, parsep=0.33em]
        \item \label{thmitem:QSDalpha-easy-regime-informal}\textbf{\emph{Easy regimes:}} For every \emph{fixed} $1 \leq \alpha \leq \infty$, \PureQSDalpha{} is \BQP{}-hard. Consequently, both \QSDalpha{} and \PoweredQSDalpha{} are \BQP{}-complete for every \emph{constant} $\alpha > 1$. 
        \item \label{thmitem:QSDalpha-hard-regime-informal}\textbf{\emph{Hard regimes for \QSDalpha{}:}} For every negligible function $\mu(n)$ and every $1 \leq \alpha \leq 1+\mu(n)$, \QSDalpha{} is \QSZK{}-complete, where the \QSZK{} containment of $\QSDalpha[a,b]$ holds only in the polarizing regime $a(n)^2-b(n)\geq 1/O(\log{n})$. 
        \item \label{thmitem:PoweredQSDalpha-hard-regime-informal}\textbf{\emph{Hard regimes for \PoweredQSDalpha{}:}} For every \emph{constant} $0<\alpha<1$, \PoweredQSDalpha{} is \QSZK{}-hard.
    \end{enumerate}
\end{theorem}
It is noteworthy that the \QSZK{}-hardness regime in \Cref{thm:QSDalpha-hardness-informal}\ref{thmitem:QSDalpha-hard-regime-informal} essentially extends to the less restrictive regime $1 \leq \alpha(n) \leq 1+\frac{1}{n}$.

\subsection{Proof techniques: Quantum algorithms for \texorpdfstring{$\Talpha$}{} and \texorpdfstring{$\Lalpha$}{}}
\label{subsec:techniques-upper-bounds}

At a high level, Quantum Singular Value Transformation~\cite{GSLW19} implies that the main challenge in designing a quantum algorithm based on a smooth function -- such as Grover search~\cite{Grover96} and the OR function, or the HHL algorithm~\cite{HHL09} and the multiplicative inverse function (see~\cite{MRTC21} for additional examples) -- reduces to finding an efficiently computable polynomial approximation. 
Once such an approximation is obtained, the algorithm follows straightforwardly using techniques from~\cite{GSLW19}, with its efficiency determined entirely by the properties of the polynomial. 

Now we focus on quantum algorithms for estimating the Schatten power distance $\Lalpha\rbra{\rho_0, \rho_1}$. We begin by reviewing~\cite{WGL+22} and then provide a high-level overview of our approach. 
The quantum query complexity of estimating the quantum Schatten $\alpha$-norm distance for non-integer $\alpha$ was first considered in \cite[Theorem IV.1]{WGL+22}. 
Their approach begins with the identity
\[
2\Lalpha(\rho_0,\rho_1)=\Abs{\rho_0 - \rho_1}_\alpha^{\alpha} = \tr\rbra*{ \abs*{\nu_-}^{\alpha/2} \Pi_{\nu_+} \abs*{\nu_-}^{\alpha/2} },
\]
where $\nu_\pm = \rho_0 \pm \rho_1$ and $\Pi_{\nu_+}$ denotes the projector onto the support subspace of $\nu_+$. 
According to this identity, they aim to prepare a quantum state that is a block-encoding of (normalized) $\abs*{\nu_-}^{\alpha/2} \Pi_{\nu_+} \abs*{\nu_-}^{\alpha/2}$.\footnote{See \cref{def:block-encoding} for the formal definition of block-encoding.} 
To this end, they first prepare a quantum state that is a block-encoding of $\Pi_{\nu_+}$, and then perform a unitary operator that is a block-encoding of $\abs*{\nu_-}^{\alpha/2}$ on it.\footnote{This is because of the evolution of subnormalized density operators \cite[Lemma II.2]{WGL+22}.} 
Finally, the (unnormalized) Schatten power distance, $\Lalpha(\rho_0,\rho_1)$, can be obtained by estimating the trace of $\abs*{\nu_-}^{\alpha/2} \Pi_{\nu_+} \abs*{\nu_-}^{\alpha/2}$ using quantum amplitude estimation \cite{BHMT02}. 
After the error analysis, their approach was shown to have query complexity $\widetilde O\rbra{r^{3+1/\cbra{\alpha/2}}/\epsilon^{4+1/\cbra{\alpha/2}}} = \poly\rbra{r, 1/\epsilon}$.\footnote{Here, $\cbra{x} \coloneqq x - \floor{x}$ denotes the fractional part of $x$.}
The dependence on the rank is \textit{inherent} in the approach of \cite{WGL+22}, as they have to prepare a rank-dependent quantum state that is a block-encoding of $\Pi_{\nu_+}$, making the rank parameters unavoidable in the error analysis. 

To overcome this technical issue, we utilize an identity different from theirs:
\[
2\Lalpha(\rho_0,\rho_1) = \Abs{\rho_0-\rho_1}_\alpha^\alpha = \tr\rbra*{\rho_0 \cdot \sgn\rbra{\nu_-} \cdot \abs*{\nu_-}^{\alpha-1}} - \tr\rbra*{\rho_1 \cdot \sgn\rbra{\nu_-} \cdot \abs*{\nu_-}^{\alpha-1}}.
\]
The idea is to estimate the terms $\tr\rbra{\rho_j \cdot \sgn\rbra{\nu_-} \cdot \abs*{\nu_-}^{\alpha-1}}$ for $j \in \cbra{0, 1}$ individually, and then combine them to obtain an estimate of $\Lalpha(\rho_0,\rho_1)$.
Our algorithm is sketched as follows: 
\begin{enumerate}[topsep=0.33em, itemsep=0.33em, parsep=0.33em]
    \item Find a good approximation polynomial for $\sgn\rbra{x} \cdot \abs{x}^{\alpha - 1}$. 
    \item Implement a unitary block-encoding $U$ of $\sgn\rbra{\nu_-} \cdot \abs{\nu_-}^{\alpha-1}$ using Quantum Singular Value Transformation (QSVT) \cite{GSLW19} and Linear Combinations of Unitaries (LCU) \cite{CW12,BCC+15}, given the state-preparation circuits of $\rho_0$ and $\rho_1$.
    \item Perform the Hadamard test \cite{AJL09} on $U$ and $\rho_j$ with outcome $b_j \!\in\! \cbra{0,\!1}$ for each $j \!\in\! \cbra{0,\!1}$.
    \item Estimate $\Lalpha(\rho_0,\rho_1)$ by computing the expected value of $b_0-b_1$. 
\end{enumerate}

Our algorithm is actually inspired by the trace distance estimation in \cite{WZ24}, which is essentially the case of $\alpha = 1$. 
Even though the approach in \cite{WZ24} still has a rank-dependent query complexity of $\widetilde O\rbra{r/\epsilon^2}$, compared to the $\widetilde O\rbra{r^5/\epsilon^6}$ in \cite{WGL+22}.\footnote{Some readers may wonder if our approach applies to trace distance estimation ($\alpha = 1$) so that the rank-dependent query complexity $\widetilde O\rbra{r/\epsilon^2}$ and sample complexity $O\rbra{r^2/\epsilon^5}$ in \cite{WZ24} can be made rank-independent. 
However, it turns out that the answer is generally no, as the dependence on the rank $r$ is actually intrinsic for trace distance estimation due to the quantum query complexity lower bound $\Omega\rbra{\sqrt{r}/\epsilon}$ in \cite{CWZ25} (see also \Cref{lemma:lower-bounds-QSD}\ref{thmitem:query-lower-bound-QSD}) and the quantum sample complexity lower bound $\Omega\rbra{r/\epsilon^2}$ in \cite{OW21} (see \cref{lemma:lower-bounds-QSD}\ref{thmitem:sample-lower-bound-QSD}).}

\paragraph{Constant $\alpha > 1$.}
Nevertheless, we discover an approach to estimating the quantum Schatten $\alpha$-norm distance with a \textit{rank-independent} complexity as long as $\alpha$ is constantly greater than $1$. 
Specifically, we use the best uniform approximation polynomial $P_d\rbra{x}$ (of degree $d$) for the function $\sgn\rbra{x} \cdot \abs{x}^{q}$ given in \cite[Theorem 8.1.1]{Ganzburg08} such that
\[
\max_{x \in \sbra{-1, 1}} \abs[\big]{ P_d\rbra{x} - \sgn\rbra{x} \cdot \abs{x}^{q} } \to \frac{1}{d^q}, \text{ as } d \to \infty. 
\]
Our use of the best uniform approximation by polynomials is inspired by the recent work \cite{LW25} on estimating the $q$-Tsallis entropy of quantum states for non-integer $q$, where they used the best uniform approximation polynomial for $x^q$ in the non-negative range $\sbra{0, 1}$ (given in \cite{Timan63}). 
The difference is that in our case, we have to further consider the sign of $x$, thereby requiring the polynomial approximation to behave well in the negative part. 
It turns out that the polynomial approximation given in \cite{Ganzburg08} is suitable for our purpose. 
Having noticed this, we then use the now standard techniques (used in \cite{LGLW23,LW25}) such as Chebyshev truncations and the de La Vall\'{e}e Poussin partial sum (cf.\ \cite{Rivlin90}) to construct efficiently computable asymptotically best approximation polynomials such that
\[
\max_{x \in \sbra{-1, 1}} \abs*{P\rbra{x} - \frac{1}{2} \sgn\rbra{x} \cdot \abs{x}^{q}} \leq \epsilon, ~ \max_{x \in \sbra{-1, 1}} \abs{P\rbra{x}} \leq 1, \text{ and } \deg\rbra{P} = O\rbra*{\frac{1}{\epsilon^{1/q}}}.
\]

Using this efficiently computable polynomial (with $q = \alpha - 1$) and with further analysis, we can then estimate the quantum Schatten $\alpha$-norm distance to within additive error $\epsilon$ with the desired query upper bound in \Cref{thm:QSDalpha-containment-informal}. Moreover, using the samplizer \cite{WZ23,WZ24b,TWZ25,CWZ25}, a quantum query-to-sample simulation, we can also achieve the desired sample upper bound. 

\paragraph{Constant $0 < \alpha < 1$.}
We can also obtain a quantum algorithm that estimates the Schatten power distance $\Lalpha\rbra{\rho_0, \rho_1}$ for $0 < \alpha < 1$. 
The idea is similar to that for $\alpha > 1$ but even simpler. 
Specifically, we use the polynomial approximation for scaled $\sgn\rbra{x} \cdot \abs{x}^{\alpha-1}$, which is already known in \cite{GSLW19} (see \cref{lemma:uniformPolyApprox-signedNegativePower}). 
Due to the significant truncation error when $x$ is close to $0$, the approach obtained in this way thus has query complexity $\widetilde{O}\rbra{r^{\frac{2}{\alpha}-1}/\epsilon^{\frac{2}{\alpha}}}$, which improves the prior best result of $\widetilde{O}\rbra{r^{\frac{5}{\alpha}}/\epsilon^{\frac{5}{\alpha}+1}}$ due to \cite{WGL+22} but is still polynomial in the rank $r$ of $\rho_0$ and $\rho_1$, and the dependence on rank is unavoidable (see \Cref{thm:lower-bounds-PoweredQSDalpha}).

\subsection{Proof techniques: Computational hardness for \texorpdfstring{$\Talpha$}{} and \texorpdfstring{$\Lalpha$}{}}

To establish the \BQP{}- and \QSZK{}-hardness results stated in \Cref{thm:QSDalpha-hardness-informal}, we reduce the promise problems \PureQSD{} and \QSD{}, where $\alpha=1$, to the corresponding promise problems \PureQSDalpha{} ($\alpha>1$),  \QSDalpha{} ($\alpha>1$ negligibly close to $1$), and \PoweredQSDalpha{} ($0<\alpha<1$). The key technique underlying these reductions is given by the following rank-dependent inequalities that generalize the case of $\alpha=2$ from~\cite{Coles12,CCC19}: 
\begin{theorem}[$\Talpha$ and $\Lalpha$ vs.~$\td$, informal]
    \label{thm:Talpha-vs-T-informal}
    For all quantum states $\rho_0$ and $\rho_1$, 
    \begin{enumerate}[label={\upshape(\arabic*)}, topsep=0.33em, itemsep=0.33em, parsep=0.33em]
        \item \label{thmitem:Talpha-vs-T-informal}For \emph{fixed} $\alpha \in [1,\infty]$, the following inequality holds:
        \[ 2^{1-\frac{1}{\alpha}} \cdot \Talpha(\rho_0,\rho_1) \leq \td(\rho_0,\rho_1) \leq 2 \rbra*{ \rank(\rho_0)^{1-\alpha} + \rank(\rho_1)^{1-\alpha} }^{-\frac{1}{\alpha}} \cdot \Talpha(\rho_0,\rho_1).\]
        For $\alpha = \infty$, the inequalities hold in the limit as $\alpha \to \infty$.
        \item \label{thmitem:Lalpha-vs-T-informal}For \emph{constant} $\alpha \in (0,1)$, the following inequality holds:
        \[ \td(\rho_0,\rho_1)^{\alpha} \leq \Lalpha(\rho_0,\rho_1) \leq \frac{\rank(\rho_0)^{1-\alpha}+\rank(\rho_1)^{1-\alpha}}{2}\,\td(\rho_0,\rho_1)^{\alpha}. \]
\end{enumerate}
\end{theorem}

The proof of \Cref{thm:Talpha-vs-T-informal} follows from considering orthogonal positive semi-definite matrices $\varsigma_0$ and $\varsigma_1$ satisfying $\rho_0-\rho_1 = \varsigma_0-\varsigma_1$, and analyzing their properties carefully. 

\vspace{1em}
We then illuminate the hardness results in \Cref{thm:QSDalpha-hardness-informal}:
\begin{itemize}[topsep=0.33em, itemsep=0.33em, parsep=0.33em]
    \item In the easy regime for constant $\alpha>1$, \Cref{thm:Talpha-vs-T-informal}\ref{thmitem:Talpha-vs-T-informal} becomes an equality when both $\rho_0$ and $\rho_1$ are pure states. This equality implies that the \BQP{}-hardness of \PureQSDalpha{}, as well as the query and sample complexity lower bounds, holds for all $1\leq \alpha \leq\infty$, thereby establishing \Cref{thm:QSDalpha-hardness-informal}\ref{thmitem:QSDalpha-easy-regime-informal}. 
    \item In the hard regime for \QSDalpha{}, \Cref{thm:Talpha-vs-T-informal}\ref{thmitem:Talpha-vs-T-informal} is \textit{sensitive} to $\alpha$. In particular, for $\alpha(n) =1+\frac{1}{n}$, if quantum states $\rho_0$ and $\rho_1$ are $\tau$-far, meaning $\td(\rho_0,\rho_1) \geq \tau$, it follows only that $\td_{\alpha(n)=1+\frac{1}{n}}(\rho_0,\rho_1) \geq \tau/2$. However, when $\alpha(n) \leq 1+\frac{1}{n^{1+\delta}}$ for any constant $\delta>0$ that can be made arbitrarily small, the same trace distance condition ensures that $\td_{\alpha}(\rho_0,\rho_1) \geq \tau$ as $n \to \infty$, leading to the \QSZK{}-hardness result in \Cref{thm:QSDalpha-hardness-informal}\ref{thmitem:QSDalpha-hard-regime-informal}. 
    \item In the hard regime for \PoweredQSDalpha{}, where $0<\alpha<1$ is constant, a direct application of \Cref{thm:Talpha-vs-T-informal}\ref{thmitem:Lalpha-vs-T-informal} is insufficient because of the multiplicative factor in the upper bound on $\Lalpha(\sigma_0,\sigma_1)$, while the other direction in the reduction is straightforward. To absorb this factor, we apply the polarization lemma for the trace distance to obtain two new $\widetilde{O}(n)$-qubit quantum states $\sigma_0$ and $\sigma_1$ such that $\td(\sigma_0,\sigma_1)$ is either at least $1-2^{-n^2}$ or at most $2^{-n^2}$. Consequently, using \Cref{thm:Talpha-vs-T-informal}\ref{thmitem:Lalpha-vs-T-informal}, we obtain the desired bound $\Lambda_\alpha(\sigma_0,\sigma_1) \leq 2^{-\widetilde{O}(n)}$ when $\td(\sigma_0,\sigma_1)\leq 2^{-n^2}$, thereby establishing the \QSZK{}-hardness result in \Cref{thm:QSDalpha-hardness-informal}\ref{thmitem:PoweredQSDalpha-hard-regime-informal}. 
\end{itemize}

Lastly, we explain the \QSZK{} containment for $\alpha>1$ negligibly close to $1$.
Simply combining \Cref{thm:Talpha-vs-T-informal} and the \QSZK{} containment of \QSD{} from~\cite{Wat02,Wat09} does not work, as the resulting \QSZK{} containment of $\QSDalpha[a,b]$ holds only for $a(n)^2/2-b(n) \geq 1/O(\log{n})$, which is even weaker than the polarizing regime defined in \Cref{fnote:polarizing regime}. To address this, we establish a \textit{partial} polarization lemma for $\Talpha$ (\Cref{lemma:partial-polarization-Talpha}), which ensures that, given quantum states $\rho_0$ and $\rho_1$ such that $\td(\rho_0,\rho_1)$ is either at least $a$ or at most $b$, we can construct new quantum states $\widetilde{\rho}_0$ and $\widetilde{\rho}_1$ such that $\Talpha(\widetilde{\rho}_0,\widetilde{\rho}_1)$ is either at least $\frac{1}{2}-\frac{1}{2} e^{-k}$ or at most $1/16$, as long as the parameters $a$ and $b$ are in the polarizing regime. 
\Cref{thm:QSDalpha-hardness-informal}\ref{thmitem:QSDalpha-hard-regime-informal} follows by combining this partial polarization lemma for $\Talpha$ with the polarization lemma for $\td$ in~\cite{Wat02}. 

\subsection{Discussion and open problems}
\label{subsec:discussion}

Although the quantum Schatten $\alpha$-norm distance $\Talpha(\cdot,\cdot)$ and its powered version $\Lalpha(\cdot,\cdot)$ are almost computationally interchangeable for constant $\alpha \geq 1$, their behavior differs significantly when $\alpha = \infty$:
\begin{itemize}[topsep=0.33em, itemsep=0.33em, parsep=0.33em]
    \item The quantity $\td_{\infty}(\rho_0,\rho_1)$ is equal to the largest singular value of $(\rho_0-\rho_1)/2$, namely $\max\cbra*{ \lambda_{\max}\rbra[\big]{\frac{\rho_0-\rho_1}{2}}, \lambda_{\max}\rbra[\big]{\frac{\rho_1-\rho_0}{2}} }$. The associated promise problem $\QSD_{\infty}$ is \BQP{}-hard and is contained in \QMA{}.\footnote{The verification circuit in the \QMA{} containment simply follows from phase estimation~\cite{Kitaev95}, where a (normalized) eigenvector corresponding to $\lambda_{\max}$ serves as a witness state.} However, establishing a \BQP{} containment appears challenging, as both $(\rho_0-\rho_1)/2$ and $(\rho_1-\rho_0)/2$ do not directly admit an efficiently computable basis -- unlike its classical counterpart in~\cite{Waggoner15}, which does. 
    \item The quantity $\Lambda_{\infty}(\rho_0,\rho_1)$ takes values in $\cbra*{0,1/2,1}$ for any states $\rho_0$ and $\rho_1$ and is nonzero if and only if the states are orthogonal, with at least one of them being pure. Thus, even the pure-state-restricted variant of the associated promise problem, $\PurePoweredQSDinfty[1,0]$, is $\CeP$-hard (see \Cref{sec:PurePoweredQSD-hardness}). Here, $\CeP=\coNQP$~\cite{ADH97,YY99}, a subclass of \PP{} that provides a precise variant of \BQP{}, ensuring acceptance for all \textit{yes} instances. 
\end{itemize}
This fundamental difference between these quantities raises an intriguing question on $\QSD_\infty$:
\begin{enumerate}[label={\upshape(\roman*)},itemsep=0.3em,topsep=0.3em,parsep=0.3em]
    \item What is the computational complexity of the promise problem $\QSD_{\infty}$, defined by $\td_{\infty}(\cdot,\cdot)$? Can we show that $\QSD_{\infty}$ is also in \BQP{}, or is it inherently more difficult? 
\end{enumerate}

Two further open problems concern quantitative bounds for \QSDalpha{} and \PoweredQSDalpha{}:
\begin{enumerate}[label={\upshape(\roman*)},itemsep=0.3em,topsep=0.3em,parsep=0.3em]
    \setcounter{enumi}{1}
    \item Can the query and sample bounds in \Cref{table:QSDalpha-complexities} be improved for constant $\alpha \geq 1$? Moreover, can tight bounds be established when the states have \textit{small support}, analogous to the classical case in~\cite[Table 1]{Waggoner15}? 
    \item Can the query and sample bounds in \Cref{table:QSDalpha-complexities} be improved for constant $0<\alpha<1$, and can the quadratic gap between the current lower and upper bounds be closed? 
\end{enumerate}

\subsection{Related works}
\label{sec:related-works}

Schatten $p$-norm estimation $\tr(|A|^p)$ of $O(\log n)$-local Hermitian $A$ on $n$ qubits to within additive error $2^{n-p} \epsilon \|A\|^p$ for $\epsilon(n) \leq 1/\poly(n)$ and real $p(n) \leq \poly(n)$ was shown to be $\mathsf{DQC1}$-complete in \cite{CM18}. 
Given a unitary block-encoding of a matrix $A$, in \cite{LS20}, they presented a quantum algorithm that estimates the Schatten $p$-norm $\rbra{\tr\rbra{\abs{A}^p}}^{1/p}$ to relative error $\epsilon$ for integer $p$, where a condition number $\kappa$ satisfying $A \geq I/\kappa$ is required for the case of odd $p$. 

The query complexity of $N$-dimensional quantum state certification (i.e., determine whether two quantum states are identical or $\epsilon$-far) with respect to trace distance was shown to be $O\rbra{N/\epsilon}$ in \cite{GL20}. 
The query complexity of trace distance estimation was shown to be $\widetilde O\rbra{r^5/\epsilon^6}$ in \cite{WGL+22} and later improved to $\widetilde O\rbra{r/\epsilon^2}$ in \cite{WZ24}, where $r$ is the rank of the quantum states, confirming a conjecture in \cite{CCC19} that \textit{low-rank} trace distance estimation is in $\BQP$.
Both Low-rank trace distance and fidelity estimations are known to be $\BQP$-complete \cite{AISW20,WZ24}.  
Based on the approach of \cite{WZ24}, space-bounded quantum state discrimination with respect to trace distance was shown to be $\mathsf{BQL}$-complete in \cite{LGLW23}. 
In addition to trace distance, fidelity is another important measure of the closeness between quantum states.
The query complexity of fidelity estimation was shown to be $\widetilde O\rbra{r^{12.5}/\epsilon^{13.5}}$ in \cite{WZC+23} and later improved to $\widetilde O\rbra{r^{5.5}/\epsilon^{6.5}}$ in \cite{WGL+22} and to $\widetilde O\rbra{r^{2.5}/\epsilon^5}$ in \cite{GP22}. 
Recently, the query complexity of pure-state trace distance and fidelity estimations was shown to be $\Theta\rbra{1/\epsilon}$ in \cite{Wang24pureQSD} and was recently extended in \cite{FW25} to estimating fidelity of a mixed state to a pure state.

In addition to the query complexity, the sample complexity has also been studied in the literature. 
In \cite{BOW19}, the sample complexity of $N$-dimensional quantum state certification was shown to be $\Theta\rbra{N/\epsilon^2}$ with respect to trace distance and $\Theta\rbra{N/\epsilon}$ with respect to fidelity.
The sample complexity of trace distance estimation is known to be $\widetilde O\rbra{r^2/\epsilon^5}$ in \cite{WZ24} and that of fidelity estimation is known to be $\widetilde O\rbra{r^{5.5}/\epsilon^{12}}$, where $r$ is the rank of quantum states. 
The sample complexity of pure-state squared fidelity estimation is known to be $\Theta\rbra{1/\epsilon^2}$ via the SWAP test \cite{BCWdW01}, where the matching lower bound was given in \cite{ALL22}. 
Recently, the sample complexity of pure-state trace distance and fidelity estimations was shown to be $\Theta\rbra{1/\epsilon^2}$ in \cite{WZ24c}, which was achieved by using the samplizer in \cite{WZ24b}.

\paragraph{Recent developments.}
In \cite{CWZ25}, a sample complexity of $O\rbra{1/{\epsilon^{2\alpha + 2 + \frac{2}{\alpha-1}}}}$ was presented after the conference version of this work \cite{LW25Lalpha}, improving the sample complexity $O\rbra{1/\epsilon^{3\alpha+2+\frac{2}{\alpha-1}}}$ given in \cref{thm:Lalpha-sample-algo}. 

\section{Preliminaries}

We assume a fundamental knowledge of quantum computation and quantum information theory. For an introduction, we refer the reader to the textbook~\cite{NC10}. 

We adopt the following notations throughout the paper: (1) $\sbra{n} \coloneqq \cbra{1, 2, \dots, n}$; (2) $\widetilde{O}(f)$ denotes $O(f \operatorname{polylog}(f))$, while $\widetilde{\Omega}(f)$ denotes $\Omega(f/\operatorname{polylog}(f))$; (3) $\ket{\bar{0}}$ represents $\ket{0}^{\otimes a}$ for integer $a>1$; and (4) a function $\mu(n)$ is said to be \textit{negligible} if for every integer $c \geq 1$, there is an integer $n_c > 0$ such that, for all $n \geq n_c$, $\mu\rbra{n} < n^{-c}$. We also use $\negl(n)$ to denote a negligible function.
In addition, the Schatten $\alpha$-norm of a matrix $A$ is defined as
\[
\Abs{A}_{\alpha} \coloneqq \rbra[\big]{\tr\rbra{\abs{A}^{\alpha}}}^{1/\alpha} = \rbra*{\tr\rbra[\Big]{\rbra[\big]{A^{\dagger} A}^{\alpha/2}}}^{1/\alpha}. 
\]

For simplicity, we use the notation $\Abs{A}$ to denote the operator norm (equivalently, the Schatten $\infty$-norm) of a matrix $A$. 
We also require the notion of the diamond norm distance between quantum channels. Let $\D(\calH)$ denote the set of all density matrices, also known as quantum states, which are positive semi-definite and have trace one, defined on a finite-dimensional Hilbert space $\calH$. 
For any two quantum channels $\calE$ and $\calF$ acting on $\D(\calH)$, the \textit{diamond norm distance} between them is defined as
\[ \Abs{\calE-\calF}_{\diamond} \coloneqq \sup_{\rho \in \D(\calH\otimes \calH')} \Abs*{\rbra*{\calE\otimes \calI_{\calH'}}(\rho) - \rbra*{\calF \otimes \calI_{\calH'}}(\rho)}_1. \]

\subsection{Closeness measures for quantum states}

We start by defining the trace distance and providing useful properties of this distance: 
\begin{definition}[Trace distance]
    Let $\rho_0$ and $\rho_1$ be two quantum states that are mixed in general. 
    The trace distance between $\rho_0$ and $\rho_1$ is defined by
    \[\td(\rho_0,\rho_1) \coloneqq \frac{1}{2}\tr\rbra{|\rho_0-\rho_1|}= \frac{1}{2}\Abs{\rho_0-\rho_1}_1.\]
\end{definition}
It is worth noting that the trace distance is a distance metric (e.g.,~\cite[Lemma 9.1.8]{Wil13}), with values ranging between $0$ and $1$. 
Additionally, we need the following inequalities that relate the trace distance to the (Uhlmann) fidelity, $\F(\rho_0,\rho_1) \coloneqq \tr|\sqrt{\rho_0}\sqrt{\rho_1}|$:
\begin{lemma}[Trace distance vs.~Uhlmann fidelity, adapted from~\cite{FvdG99}]
\label{lemma:traceDist-vs-fidelity}
Let $\rho_0$ and $\rho_1$ be two quantum states. Then, it holds that 
\[1-\F(\rho_0,\rho_1) \leq \td(\rho_0,\rho_1) \leq \sqrt{1-\F^2(\rho_0,\rho_1)}.\]
\end{lemma}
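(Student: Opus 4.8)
The plan is to establish the two inequalities separately. For the upper bound $\td(\rho_0,\rho_1) \le \sqrt{1-\F^2(\rho_0,\rho_1)}$, I would reduce to the case of pure states via Uhlmann's theorem; for the lower bound $1-\F(\rho_0,\rho_1)\le \td(\rho_0,\rho_1)$, I would pass to a classical statement about measurement-outcome distributions. The two halves are argued by genuinely different means and do not stem from a single optimization.

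For the upper bound, recall Uhlmann's theorem: $\F(\rho_0,\rho_1)=\max \abs{\innerprod{\psi_0}{\psi_1}}$, where the maximum ranges over all purifications $\ket{\psi_0},\ket{\psi_1}$ of $\rho_0,\rho_1$ on a common enlarged space. Fix purifications attaining this maximum, so that $\abs{\innerprod{\psi_0}{\psi_1}}=\F(\rho_0,\rho_1)$. Since each $\rho_j$ is recovered from $\ketbra{\psi_j}{\psi_j}$ by a partial trace, which is a quantum channel, and the trace distance is contractive under quantum channels, I obtain $\td(\rho_0,\rho_1)\le \td\rbra{\ketbra{\psi_0}{\psi_0},\ketbra{\psi_1}{\psi_1}}$. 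For two pure states the trace distance has the closed form $\td\rbra{\ketbra{\psi_0}{\psi_0},\ketbra{\psi_1}{\psi_1}}=\sqrt{1-\abs{\innerprod{\psi_0}{\psi_1}}^2}$, which I would verify by diagonalizing the traceless, rank-at-most-$2$ Hermitian operator $\ketbra{\psi_0}{\psi_0}-\ketbra{\psi_1}{\psi_1}$ inside the two-dimensional span of $\ket{\psi_0},\ket{\psi_1}$; its nonzero eigenvalues are $\pm\sqrt{1-\abs{\innerprod{\psi_0}{\psi_1}}^2}$. Combining these facts yields $\td(\rho_0,\rho_1)\le\sqrt{1-\F^2(\rho_0,\rho_1)}$.

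For the lower bound, I would use the measurement characterization of fidelity: for every POVM $\cbra{E_m}$, writing $p_m=\tr(E_m\rho_0)$ and $q_m=\tr(E_m\rho_1)$ for the induced outcome distributions, one has $\F(\rho_0,\rho_1)=\min_{\cbra{E_m}}\sum_m\sqrt{p_m q_m}$, the minimized classical Bhattacharyya fidelity. Fix a POVM attaining the minimum, so $\F(\rho_0,\rho_1)=\sum_m\sqrt{p_m q_m}$. The elementary identity $1-\sum_m\sqrt{p_m q_m}=\tfrac12\sum_m(\sqrt{p_m}-\sqrt{q_m})^2$ (using $\sum_m p_m=\sum_m q_m=1$), together with the pointwise bound $(\sqrt{p_m}-\sqrt{q_m})^2\le\abs{\sqrt{p_m}-\sqrt{q_m}}\cdot(\sqrt{p_m}+\sqrt{q_m})=\abs{p_m-q_m}$, gives the classical Fuchs--van de Graaf inequality $1-\sum_m\sqrt{p_m q_m}\le \tfrac12\sum_m\abs{p_m-q_m}=\TV(p,q)$. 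Finally, since measurement by $\cbra{E_m}$ is a quantum channel, the total variation distance of the outcomes is dominated by the quantum trace distance, $\TV(p,q)\le\td(\rho_0,\rho_1)$. Chaining these bounds yields $1-\F(\rho_0,\rho_1)=1-\sum_m\sqrt{p_m q_m}\le\TV(p,q)\le\td(\rho_0,\rho_1)$.

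The genuinely nontrivial inputs are the two quantum-information lemmas invoked above --- Uhlmann's theorem and the measurement characterization of fidelity --- both standard and citable rather than things I would reprove here. Granting them, the remaining work is essentially bookkeeping: the two-by-two diagonalization giving the pure-state formula and the elementary classical inequality. The main conceptual obstacle to watch is that the optimal objects differ between the halves (purifications optimal for Uhlmann's theorem for the upper bound, a POVM optimal for fidelity for the lower bound), so the sharpness of each inequality must be tracked separately.
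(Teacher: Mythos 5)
Your proof is correct. Note that the paper does not prove this lemma at all---it is imported directly from the cited reference [FvdG99]---and your argument (Uhlmann's theorem plus contractivity of trace distance under partial trace for the upper bound, and the Fuchs--Caves measurement characterization of fidelity reduced to the classical Bhattacharyya inequality for the lower bound) is precisely the standard Fuchs--van de Graaf proof, so there is nothing to reconcile between your route and the paper's.
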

Moreover, it is evident that $\F\big(\rho_0^{\otimes l},\rho_1^{\otimes l}\big) = \F(\rho_0,\rho_1)^l$ for any integer $l \geq 1$. 

\vspace{1em}
Next, we define the quantum Schatten $\alpha$-norm distance $\Talpha(\cdot,\cdot)$ and the quantum Schatten $\alpha$-\emph{power} distance $\Lalpha(\cdot,\cdot)$, which generalize the trace distance ($\alpha=1$) via the Schatten norm and its \emph{powered} version, respectively. Notably, the quantum Schatten $\alpha$-norm distance coincides with the Hilbert-Schmidt distance when $\alpha=2$:

\begin{definition}[Quantum Schatten norm and power distances]
    \label{def:Lalpha-distance}
    Let $\rho_0$ and $\rho_1$ be two quantum states that are mixed in general. 
    The quantum Schatten $\alpha$-norm distance $\Talpha(\cdot,\cdot)$ and the quantum Schatten $\alpha$-power distance $\Lalpha(\cdot,\cdot)$ between $\rho_0$ and $\rho_1$ are defined as follows:
     \[\Talpha(\rho_0,\rho_1) \coloneqq \frac{1}{2} \Abs*{\rho_0 - \rho_1}_{\alpha}
     \quad\text{and}\quad 
     \Lalpha(\rho_0,\rho_1) \coloneqq \frac{1}{2} \Abs*{\rho_0 - \rho_1}_{\alpha}^{\alpha}.\]
\end{definition}

By the monotonicity of the Schatten norm, e.g.,~\cite[Equation (1.31)]{AS17}, it holds that:
\[ \forall \alpha \geq 1, \quad 0 \leq \Lalpha(\rho_0,\rho_1) \leq \Talpha(\rho_0,\rho_1) \leq \td(\rho_0,\rho_1) \leq 1.\]

As a corollary of the Davis convexity theorem~\cite{Davis57}, the quantum Schatten $\alpha$-norm distance $\Talpha(\cdot,\cdot)$ defines a distance metric for $\alpha \in [1,+\infty]$:
\begin{lemma}[Triangle inequality for $\Talpha$ with $\alpha\geq 1$, adapted from~{\cite[Proposition 1.16]{AS17}}]
    \label{lemma:Talpha-triangle-inequality}
    For any quantum states $\rho_0$, $\rho_1$, and $\rho_2$, the following holds:
    \[ \forall\alpha\in[1,+\infty], \quad \Talpha(\rho_0,\rho_1) + \Talpha(\rho_1,\rho_2) \geq \Talpha(\rho_0,\rho_2). \]
\end{lemma}

By the subadditivity of $\Abs*{f(|A|+|B|)}$ for every nonnegative concave function $f$, every symmetric norm, and all positive semi-definite matrices $A$ and $B$~\cite[Theorem 1.2]{BL10}, taking $f(t)=t^\alpha$ and the Schatten $1$-norm implies that $\Lalpha(\cdot,\cdot)$ defines a metric for $\alpha \in (0,1)$:

\begin{lemma}[Triangle inequality for $\Lalpha$ with $\alpha\in(0,1)$, adapted from~{\cite[Section 2.3]{FP19}}]
    \label{lemma:Lalpha-triangle-inequality}
    For any quantum states $\rho_0$, $\rho_1$, and $\rho_2$, the following holds:
    \[ \forall\alpha\in (0,1), \quad \Lalpha(\rho_0,\rho_1) + \Lalpha(\rho_1,\rho_2) \geq \Lalpha(\rho_0,\rho_2). \]
\end{lemma}

Lastly, we require the following relationship for additive error estimation between the quantum Schatten $\alpha$-norm distance and its powered version when $\alpha \geq 1$:
\begin{proposition}[$\Talpha$ vs.~$\Lalpha$] 
    \label{prop:Talpha-vs-powered-Talpha}
    The quantum Schatten $\alpha$-norm distance $\Talpha(\cdot,\cdot)$ and the quantum Schatten $\alpha$-power distance $\Lalpha(\cdot,\cdot)$ are related by the equality 
    \[\Talpha(\rho_0,\rho_1) = 2^{\frac{1}{\alpha}-1} \cdot \Lalpha(\rho_0,\rho_1)^{\frac{1}{\alpha}}.\] 
    Accordingly, if $x \geq 0$ is an estimate of $\Lalpha(\rho_0,\rho_1)$ to within additive error $\epsilon$, then $2^{\frac{1}{\alpha}-1} \cdot x^{\frac{1}{\alpha}}$ serves as an estimate of $\Talpha(\rho_0,\rho_1)$ to within additive error $2^{\frac{1}{\alpha}-1} \cdot \epsilon^{\frac{1}{\alpha}}$. 
\end{proposition}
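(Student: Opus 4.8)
The plan is to separate the statement into two independent parts: the exact identity relating $\Talpha$ and $\Lalpha$, and the error-propagation claim. The first part is a one-line algebraic rearrangement, while the second is where the only real content lies.

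First I would establish the identity. Writing $N \coloneqq \Abs*{\rho_0-\rho_1}_{\alpha}$, the definitions give $\Talpha(\rho_0,\rho_1) = \tfrac12 N$ and $\Lalpha(\rho_0,\rho_1) = \tfrac12 N^{\alpha}$. Solving the second relation for $N$ yields $N = \rbra*{2\Lalpha(\rho_0,\rho_1)}^{1/\alpha}$, and substituting into the first gives
\[
    \Talpha(\rho_0,\rho_1) = \tfrac12 \rbra*{2\Lalpha(\rho_0,\rho_1)}^{1/\alpha} = 2^{\frac{1}{\alpha}-1} \cdot \Lalpha(\rho_0,\rho_1)^{\frac{1}{\alpha}},
\]
which is exactly the claimed equality. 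This step requires no inequalities and holds for every $\alpha \geq 1$.

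For the error-propagation claim, suppose $x$ satisfies $\abs*{x - \Lalpha(\rho_0,\rho_1)} \leq \epsilon$. Using the identity just proved, the quantity to bound is $\abs*{2^{\frac{1}{\alpha}-1} x^{1/\alpha} - \Talpha(\rho_0,\rho_1)} = 2^{\frac{1}{\alpha}-1} \cdot \abs*{x^{1/\alpha} - \Lalpha(\rho_0,\rho_1)^{1/\alpha}}$, so everything reduces to controlling $\abs*{x^{1/\alpha} - \Lalpha(\rho_0,\rho_1)^{1/\alpha}}$ in terms of $\abs*{x - \Lalpha(\rho_0,\rho_1)}$. Since $\alpha \geq 1$, the exponent $p \coloneqq 1/\alpha$ lies in $(0,1]$, so the map $t \mapsto t^p$ is concave on $[0,\infty)$, hence subadditive, yielding the H\"{o}lder-type estimate $\abs*{a^p - b^p} \leq \abs*{a-b}^p$ for all $a,b \geq 0$. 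Applying this with $a = x$ and $b = \Lalpha(\rho_0,\rho_1)$ (both nonnegative) gives $\abs*{x^{1/\alpha} - \Lalpha(\rho_0,\rho_1)^{1/\alpha}} \leq \abs*{x - \Lalpha(\rho_0,\rho_1)}^{1/\alpha} \leq \epsilon^{1/\alpha}$, and substituting back produces the advertised additive error $2^{\frac{1}{\alpha}-1} \cdot \epsilon^{1/\alpha}$.

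The only step that is not purely mechanical is the subadditivity inequality $\abs*{a^p - b^p} \leq \abs*{a-b}^p$, so that is where I would focus attention. I would prove it by assuming $a \geq b \geq 0$ without loss of generality, setting $c = a-b \geq 0$, and reducing to $(b+c)^p \leq b^p + c^p$; after homogenizing this becomes the one-variable inequality $(1+s)^p \leq 1 + s^p$ for $s \geq 0$, which follows from concavity of $t \mapsto t^p$. I do not anticipate a genuine obstacle here: the substance of the proposition is simply that converting an additive-error estimate of the powered distance into one for the distance itself costs an $\alpha$-th root in the error, which is precisely the modulus of continuity of the $\alpha$-th root map near the origin.
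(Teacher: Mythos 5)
Your proposal is correct and follows essentially the same route as the paper: both prove the identity by direct algebra and then reduce the error-propagation claim to the elementary inequality $\abs{a^{1/\alpha} - b^{1/\alpha}} \leq \abs{a-b}^{1/\alpha}$ for $a, b \geq 0$. The only difference is how that inequality is justified --- you derive it from concavity/subadditivity of $t \mapsto t^{1/\alpha}$, whereas the paper shows $f(x) = (x+\delta)^{1/\alpha} - x^{1/\alpha}$ is decreasing in $x$ via its derivative --- which is an interchangeable one-line argument for the same lemma.
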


\begin{proof}
    The equality between $\Talpha(\rho_0,\rho_1)$ and $\Lalpha(\rho_0,\rho_1)$ follows from a direct calculation. 
    Let $x$ be an estimate of $\Lalpha(\rho_0,\rho_1)$ to within additive error $\epsilon$, equivalently $\abs*{\Lalpha(\rho_0,\rho_1) - x} \leq \epsilon$.
    Then, assuming that $\abs*{(x +\delta)^{1/\alpha} - x^{1/\alpha} } \leq \abs{\delta}^{1/\alpha}$ for any $\alpha \geq 1$ and $x \geq 0$, it follows that
    \[ \abs*{\Talpha(\rho_0,\rho_1) - 2^{\frac{1}{\alpha}-1} x^{\frac{1}{\alpha}}} 
    = 2^{\frac{1}{\alpha}-1} \cdot \abs*{\Lalpha(\rho_0,\rho_1)^{\frac{1}{\alpha}} - x^{\frac{1}{\alpha}}}
    \leq 2^{\frac{1}{\alpha}-1} \cdot \abs*{\Lalpha(\rho_0,\rho_1) - x}^{\frac{1}{\alpha}}
    \leq 2^{\frac{1}{\alpha}-1} \cdot \epsilon^{\frac{1}{\alpha}}.\]

    By symmetry, it suffices to consider $x+\delta$ and $x$ with $\delta\geq 0$; the case $\delta=0$ is immediate. 
    It suffices to show that $f(x) \coloneqq (x +\delta)^{1/\alpha} - x^{1/\alpha} \leq \delta^{1/\alpha} = f(0)$ for any $\delta > 0$, $\alpha \geq 1$, and $x \geq 0$. We complete the proof by noting that $f'(x) = \frac{1}{\alpha}(x+\delta)^{\frac{1}{\alpha}-1} - \frac{1}{\alpha} x^{\frac{1}{\alpha}-1} \leq 0$, and thus $f(x)$ is monotonically non-increasing when $x\geq 0$. 
\end{proof}

\subsection{Closeness testing of quantum states via state-preparation circuits}
\label{subsec:state-closeness-testing}

We begin by defining the closeness testing of quantum states with respect to the trace distance, denoted as \QSD{}$[a,b]$,\footnote{While \Cref{def:QSD} aligns with the classical counterpart of \QSD{} defined in~\cite[Section 2.2]{SV97}, it is slightly less general than the definition in~\cite[Section 3.3]{Wat02}. Specifically, \Cref{def:QSD} assumes that the input length $m$ and the output length $n$ are \textit{polynomially equivalent}, whereas~\cite[Section 3.3]{Wat02} allows for cases where the output length (e.g., a single qubit) is \textit{much smaller} than the input length.} and two variants of this problem:   

\begin{definition}[Quantum State Distinguishability Problem, \QSD{}, adapted from~{\cite[Section 3.3]{Wat02}}]
	\label{def:QSD}
    Let $Q_0$ and $Q_1$ be quantum circuits acting on $m$ qubits \emph{(}``input length''\emph{)} and having $n$ specified output qubits \emph{(}``output length''\emph{)}, where $m(n)$ is a polynomial function of $n$. Let $\rho_i$ denote the quantum state obtained by running $Q_i$ on state $\ket{0}^{\otimes m}$ and tracing out the non-output qubits. For efficiently computable functions $a(n)$ and $b(n)$ satisfying $0\leq b(n) < a(n) \leq 1$, the promise problem $\QSD[a,b]$ asks one to decide whether: 
	\begin{itemize}[topsep=0.33em, itemsep=0.33em, parsep=0.33em]
		\item \emph{Yes:} A pair of quantum circuits $(Q_0,Q_1)$ such that $\td(\rho_0,\rho_1) \geq a(n)$; 
		\item \emph{No:} A pair of quantum circuits $(Q_0,Q_1)$ such that $\td(\rho_0,\rho_1) \leq b(n)$.
	\end{itemize}

    \noindent Furthermore, we denote the restricted version, where $\rho_0$ and $\rho_1$ are pure states, as \PureQSD{}. 
\end{definition}

In this work, we consider the \textit{purified quantum access input model}, as defined in~\cite{Wat02}, in both white-box and black-box scenarios: 
\begin{itemize}[leftmargin=2em]
    \item \textbf{White-box input model}: The input of the problem \QSD{} consists of descriptions of polynomial-size quantum circuits $Q_0$ and $Q_1$. Specifically, for $b\in\binset$, the description of $Q_b$ includes a sequence of polynomially many $1$- and $2$-qubit gates.
    \item \textbf{Black-box input model}: In this model, instead of providing the descriptions of the quantum circuits $Q_0$ and $Q_1$, only query access to $Q_b$ is allowed, denoted as $O_b$ for $b\in\binset$. For convenience, we also allow query access to $Q_b^{\dagger}$ and controlled-$Q_b$, denoted by $O_b^{\dagger}$ and controlled-$O_b$, respectively. 
\end{itemize}

In addition to query complexity, defined within the black-box input model, \textit{sample complexity} refers to the number of copies of quantum states $\rho_0$ and $\rho_1$ needed to accomplish a specific closeness testing task.

\subsubsection{Computational hardness of \QSD{} and \PureQSD{}}
Polarization lemmas for the total variation distance~\cite{SV97} and the trace distance~\cite{Wat02} share the same inequalities, enabling the \QSZK{}-hardness of $\QSD$ using the parameters specified in~\cite[Theorem 3.14]{BDRV19}:
\begin{lemma}[\QSD{} is \QSZK{}-hard]
    \label{lemma:QSD-is-QSZKhard}
    Let $a(n)$ and $b(n)$ be efficiently computable functions satisfying $a^2(n)-b(n) \geq 1/O(\log{n})$. For any constant $\tau \in (0,1/2)$, $\QSD[a,b]$ is \QSZK{}-hard when $a(n) \leq 1-2^{-n^{\tau}}$ and $b(n) \geq 2^{-n^{\tau}}$ for every $n\in\bbN$.
\end{lemma}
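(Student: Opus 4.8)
The plan is to establish \QSZK{}-hardness by reducing the canonical \QSZK{}-complete version of \QSD{} to $\QSD[a,b]$ for the target parameters, via a polarization (gap-amplification) reduction. Recall that \QSD{} with a fixed constant gap in the polarizing regime---say $\QSD[2/3,1/3]$, for which $(2/3)^2 > 1/3$---is \QSZK{}-complete by~\cite{Wat02,Wat09}. Thus it suffices to exhibit an efficient mapping that sends a constant-gap instance $(\rho_0,\rho_1)$ to a new pair of efficiently preparable states $(\widetilde\rho_0,\widetilde\rho_1)$ whose trace distance lands inside the promise of $\QSD[a,b]$, preserving the yes/no answer.

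The engine of the reduction is the trace-distance polarization lemma. The key observation, following~\cite{Wat02}, is that the two gap-amplifying operations obey exactly the same inequalities as their classical total-variation counterparts in~\cite{SV97}. The direct-product operation $(\rho_0,\rho_1)\mapsto(\rho_0^{\otimes k},\rho_1^{\otimes k})$ drives the distance toward $1$: by the multiplicativity $\F(\rho_0^{\otimes k},\rho_1^{\otimes k})=\F(\rho_0,\rho_1)^k$ together with \Cref{lemma:traceDist-vs-fidelity}, one obtains $\td(\rho_0^{\otimes k},\rho_1^{\otimes k}) \geq 1-\F(\rho_0,\rho_1)^k$, while subadditivity of the trace distance under tensor products controls it from above. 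A complementary quantum ``XOR''-type operation shrinks the distance multiplicatively, pushing small distances further toward $0$. Composing these two operations exactly as in the Sahai--Vadhan template amplifies any gap satisfying the polarizing condition $a_0^2 > b_0$ to an extreme gap; quantitatively, invoking the parameters of~\cite[Theorem 3.14]{BDRV19}---which transfer verbatim because the governing inequalities coincide---one amplifies the constant gap to $\td(\widetilde\rho_0,\widetilde\rho_1)\geq 1-2^{-n^{\tau}}$ in the yes case and $\td(\widetilde\rho_0,\widetilde\rho_1)\leq 2^{-n^{\tau}}$ in the no case, for any fixed $\tau\in(0,1/2)$.

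It then remains to check that this extreme gap sits inside the promise of $\QSD[a,b]$. In the yes case, $\td(\widetilde\rho_0,\widetilde\rho_1)\geq 1-2^{-n^{\tau}}\geq a(n)$ since $a(n)\leq 1-2^{-n^{\tau}}$; in the no case, $\td(\widetilde\rho_0,\widetilde\rho_1)\leq 2^{-n^{\tau}}\leq b(n)$ since $b(n)\geq 2^{-n^{\tau}}$. The hypothesis $a^2(n)-b(n)\geq 1/O(\log n)$ guarantees $a(n)>b(n)$, so $\QSD[a,b]$ is a genuine promise problem; the polarization itself needs only the (constant, hence polarizing) gap of the starting instance. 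Since the polarization builds new state-preparation circuits of size $\poly(n)$ and is computed in polynomial time, the map is a valid reduction, and \QSZK{}-hardness of $\QSD[a,b]$ follows.

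The main obstacle is the polarization lemma itself: one must verify that the quantum gap-amplification operations---particularly the quantum analog of the XOR operation---satisfy the same quantitative inequalities as the classical construction, so that the tight parameters of~\cite[Theorem 3.14]{BDRV19} (reaching $2^{-n^{\tau}}$ with $\tau<1/2$ under only a $1/O(\log n)$ polarizing gap) carry over. The direct-product direction is clean, handled by the fidelity multiplicativity and \Cref{lemma:traceDist-vs-fidelity}; the delicate part is tracking the number of alternating rounds needed to reach the target extremes while keeping all intermediate circuits of polynomial size, which follows the Sahai--Vadhan accounting once the shared inequalities are in place.
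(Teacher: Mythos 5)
Your proposal is correct and follows essentially the same route the paper takes: the paper justifies this lemma in one sentence by observing that Watrous's trace-distance polarization obeys the same inequalities as the Sahai--Vadhan total-variation polarization, so the quantitative parameters of~\cite[Theorem 3.14]{BDRV19} carry over, and then hardness follows by polarizing a constant-gap \QSZK{}-hard \QSD{} instance into the promise window $[b,a]$ exactly as you describe. The only point deserving extra care --- which you flag but defer to the ``Sahai--Vadhan accounting'' --- is the self-referential parameter bookkeeping (the security level $k$ must exceed $(n')^{\tau}$ where $n'$ is the size of the \emph{polarized} instance, which itself grows with $k$); this is where the constant starting gap and the restriction $\tau<1/2$ actually matter, and it resolves because the blowup from a constant gap is polynomial in $k$.
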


We also require a polarization lemma for the trace distance~\cite{Wat02}, with parameters derived from~\cite[Theorem 3.14]{BDRV19} and its time complexity follows from~\cite[Lemma 38]{CCKV08}:
\begin{lemma}[A polarization lemma for the trace distance, adapted from~{\cite[Section 4.1]{Wat02}}]
    \label{lemma:polarization-trace-distance}
    Let $Q_0$ and $Q_1$ be quantum circuits that prepare quantum states $\rho_0$ and $\rho_1$, respectively. There exists a deterministic procedure that, given an input $(Q_0,Q_1,a, b, k)$ where $a^2 > b$, outputs new quantum circuits $\widetilde{Q}_0$ and $\widetilde{Q}_1$ that prepare corresponding states $\widetilde{\rho}_0$ and $\widetilde{\rho}_1$, respectively. The resulting states satisfy the following:
    \begin{align*}
        \td(\rho_0,\rho_1) \geq a &\quad\Longrightarrow\quad \td(\widetilde{\rho}_0,\widetilde{\rho}_1) \geq 1-2^{-k},\\
        \td(\rho_0,\rho_1) \leq b &\quad\Longrightarrow\quad \td(\widetilde{\rho}_0,\widetilde{\rho}_1) \leq 2^{-k}. 
    \end{align*}
    Here, the states $\widetilde{\rho}_0$ and $\widetilde{\rho}_1$ are defined over $\widetilde{O}\rbra[\big]{n k^{O\rbra[\big]{\frac{b\ln(2/a^2)}{a^2-b}}}}$ qubits. Furthermore, when $k \leq O(1)$ or $a^2-b \geq \Omega(1)$, the time complexity of the procedure is polynomial in the size of $Q_0$ and $Q_1$, $k$, and $\exp\big( \frac{b \log(1/a^2)}{a^2-b} \big)$.
\end{lemma}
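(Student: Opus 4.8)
The plan is to adapt the classical polarization lemma for statistical difference of Sahai and Vadhan~\cite{SV97} to quantum states, following Watrous~\cite{Wat02}, by composing two coherent state transformations whose effects on the trace distance are, respectively, amplifying and shrinking. The \emph{amplifying} transformation is tensoring: from circuits preparing $\rho_0,\rho_1$ we build circuits preparing $\rho_0^{\otimes m},\rho_1^{\otimes m}$. Using the multiplicativity of fidelity, $\F(\rho_0^{\otimes m},\rho_1^{\otimes m}) = \F(\rho_0,\rho_1)^m$, together with the trace-distance--fidelity inequalities of \Cref{lemma:traceDist-vs-fidelity}, a far pair ($\td \geq a$, hence $\F \leq \sqrt{1-a^2}$) is pushed to $\td \geq 1 - (1-a^2)^{m/2}$, while a close pair ($\td \leq b$, hence $\F \geq 1-b$) stays below $\sqrt{1-(1-b)^{2m}} \leq \sqrt{2mb}$. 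The \emph{shrinking} transformation is the quantum analogue of the XOR construction: for $c \in \binset$ define
\[
    \rho_c' \coloneqq \frac{1}{2^{\ell-1}} \sum_{x_1 \oplus \cdots \oplus x_\ell = c} \rho_{x_1} \otimes \cdots \otimes \rho_{x_\ell},
\]
which is preparable coherently by a polynomial-size circuit that writes a uniform superposition over the $2^{\ell-1}$ strings of fixed parity and runs the purifications of $Q_{x_i}$ in parallel. A direct calculation gives $\rho_0' - \rho_1' = 2^{-(\ell-1)}(\rho_0-\rho_1)^{\otimes \ell}$, and hence the exact multiplicative identity $\td(\rho_0',\rho_1') = \td(\rho_0,\rho_1)^{\ell}$, so this operation crushes small trace distances while keeping distances near $1$ close to $1$ for bounded $\ell$.

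The polarization then proceeds in two stages. First, a direct-product stage with $m = \Theta\rbra{\ln(1/b)/a^2}$ opens the promised gap $(a,b)$ with $a^2 > b$ into a constant gap (say $\geq 2/3$ versus $\leq 1/3$); this is precisely the step where the hypothesis $a^2 > b$ is indispensable, since one needs $\sqrt{2mb} < 1 - (1-a^2)^{m/2}$ to hold simultaneously, forcing $a^2 \gtrsim b$, and it is the source of the $\exp\rbra{b\ln(2/a^2)/(a^2-b)}$-type blow-up. Second, once a constant gap is available, one iterates the composed operation (tensoring followed by the XOR construction, with small constant exponents) $O(\log k)$ times; each round squares the distance from the repelling threshold, so the far case is driven to $\geq 1 - 2^{-k}$ and the close case to $\leq 2^{-k}$. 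Tracking the dimension through both stages --- each operation multiplying the number of qubits by a $\poly(k)$ factor over the $O\rbra{b\ln(2/a^2)/(a^2-b)}$ effective rounds dictated by the gap --- yields the stated qubit count $\widetilde O\rbra{n\,k^{O(b\ln(2/a^2)/(a^2-b))}}$, and the running-time bound (polynomial when $k \leq O(1)$ or $a-b \geq \Omega(1)$) follows from the analysis of \cite[Lemma 38]{CCKV08}, with the parameter choices taken from \cite[Theorem 3.14]{BDRV19}.

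The main obstacle I expect is the quantum realization and analysis of the shrinking (XOR) operation: unlike the classical case there is no XOR on quantum states, so one must verify that the mixture $\rho_c'$ above is both coherently preparable with only polynomial overhead from the purified access to $Q_0,Q_1$ and satisfies the clean identity $\rho_0'-\rho_1' = 2^{-(\ell-1)}(\rho_0-\rho_1)^{\otimes\ell}$ that drives the multiplicative behavior of the trace distance. The remaining delicate point is the parameter bookkeeping: one must thread the exponents $m$ and $\ell$ through the two stages so that the far and close bounds meet the targets $1-2^{-k}$ and $2^{-k}$ while keeping the qubit growth and the running time within the claimed bounds, matching the constants of \cite{BDRV19,CCKV08}.
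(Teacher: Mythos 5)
Your two building blocks are the right ones, and they are exactly the ingredients behind the cited construction: the XOR identity $\rho_0'-\rho_1' = 2^{-(\ell-1)}(\rho_0-\rho_1)^{\otimes\ell}$ (hence $\td(\rho_0',\rho_1')=\td(\rho_0,\rho_1)^{\ell}$) and the tensoring bounds via \Cref{lemma:traceDist-vs-fidelity}. (Note the paper does not reprove this lemma; it imports it from \cite{Wat02} with parameters from \cite{BDRV19} and time complexity from \cite{CCKV08}. But the paper's own partial polarization lemma for $\Talpha$, \Cref{lemma:partial-polarization-Talpha}, is proved from these same two pieces, \Cref{lemma:XOR-Talpha,lemma:direct-product-Talpha}.) The genuine gap is that you compose them in the wrong order, and this is not fixable by bookkeeping. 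The direct product cannot come first: under the sole hypothesis $a^2>b$ there is in general \emph{no} choice of $m$ that opens a constant gap. Pushing the far case to $\geq 2/3$ requires $m a^2 \geq \Omega(1)$ (this is forced, e.g., by pure states at distance $a$, for which the tensored distance is $\sqrt{1-(1-a^2)^m}\approx a\sqrt{m}$). But the close case genuinely grows linearly in $m$ for mixed states: take $\rho_0=\ketbra{0}{0}$ and $\rho_1=(1-b)\ketbra{0}{0}+b\ketbra{1}{1}$, for which $\td\rbra{\rho_0^{\otimes m},\rho_1^{\otimes m}} \geq 1-(1-b)^m \geq 1-e^{-mb}$. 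So keeping the close case below $1/3$ forces $mb \leq O(1)$, and the two constraints are compatible only when $a^2/b$ exceeds a sufficiently large constant --- not when, say, $a^2 = 1.01\,b$ with $b$ small, which the hypothesis allows. Your claim that $a^2>b$ is what makes $\sqrt{2mb} < 1-(1-a^2)^{m/2}$ achievable is exactly where this breaks; with your choice $m=\Theta(\ln(1/b)/a^2)$ the close-case bound $\sqrt{2mb}=\Theta\rbra[\big]{\sqrt{b\ln(1/b)}/a}$ is vacuous in that regime.

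The XOR step must come \emph{first} precisely because it is the only one of the two operations that improves the ratio: it maps $(a,b)\mapsto(a^{\ell},b^{\ell})$, so $\lambda = a^2/b \mapsto \lambda^{\ell}$. The correct schedule (Sahai--Vadhan, Watrous, and the paper's proof of \Cref{lemma:partial-polarization-Talpha}) is: XOR with $\ell=\Theta(\log_{\lambda} k)$ so that $a^{2\ell} \geq \Omega(k)\,b^{\ell}$; then tensor with $m \approx 1/a^{2\ell}$ copies, driving the far case to $1-e^{-\Omega(k)}$ while the close case stays at $mb^{\ell} = O(1/k)$; then (for the trace-distance statement with both thresholds exponential in $k$) a final XOR with exponent $O(k)$ to push the close case to $2^{-k}$ while the far case remains $\geq 1-2^{-k}$. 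This ordering is also the source of the quantities in the statement: $\ell = O\rbra[\big]{\frac{b\ln k}{a^2-b}}$ and $m \leq \exp\rbra[\big]{O\rbra[\big]{\frac{b\ln(2/a^2)}{a^2-b}}}\cdot\poly(k)$. Your second stage suffers from the same inversion: starting from a constant gap such as $(2/3,1/3)$, any constant $m\geq 4$ (needed for the fidelity bound to improve the far guarantee past $2/3$) already makes the close-case guarantee $mB \geq 4/3$ (and also $\sqrt{2mB}>1$) vacuous, so rounds of ``tensor then XOR'' do not square the gap; within each round the XOR must precede the tensoring.
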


Using the construction in~\cite[Theorem 12]{RASW23} (see also~\cite[Lemma 4.23]{LGLW23} and~\cite[Theorem 4.1]{WZ24}), the following \BQP{}-hardness result holds:\footnote{For the choice $l(n'-1)=2n'+1$, the resulting promise thresholds follow by direct calculation. In particular, for \emph{yes} instances, the threshold becomes $\sqrt{1-2^{-2l(n'-1)}} = \sqrt{1-2^{-4n'-2}} \geq 1-2^{-n'}$, whereas for \emph{no} instances, it becomes $2^{\frac{-l(n'-1)+1}{2}} = 2^{\frac{-(2n'+1)+1}{2}} = 2^{-n'}$.}
\begin{lemma}[\PureQSD{} is \BQP{}-hard, adapted from~{\cite[Lemma 2.17]{LW25}}]
    \label{lemma:PureQSD-is-BQPhard}
    Let $\alpha(n)$ and $\beta(n)$ be efficiently computable functions such that $\alpha(n)-\beta(n) \geq 1/\poly(n)$. For any polynomial $l(n)$, let  $n' \coloneqq n+1$, $\PureQSD[\alpha(n'),\beta(n')]$ is \BQP{}-hard when $\alpha(n') \leq \sqrt{1-2^{-2l(n'-1)}}$ and $\beta(n') \geq 2^{\frac{-l(n'-1)+1}{2}}$ for every integer $n' \geq 2$.

    \noindent Specifically, by choosing $l(n'-1)=2n'+1$, it holds that\emph{:}  For every integer $n' \geq 2$, 
    \[\PureQSD\sbra*{1-2^{-n'}, 2^{-n'}} \text{ is } \BQP{}\text{-hard under Karp reduction}. \] 
\end{lemma}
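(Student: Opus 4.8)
The plan is to give a polynomial-time many-one reduction from the canonical \BQP{}-complete promise problem---deciding whether a given polynomial-size quantum circuit accepts $\ket{0^m}$ with probability at least $2/3$ or at most $1/3$---to \PureQSD{}. First I would drive the completeness/soundness gap to be exponentially small: by standard amplification (running $t=\poly(n)$ independent copies and taking a majority) I obtain a circuit $C$ on $m=\poly(n)$ qubits with a designated output qubit such that its acceptance probability $p \coloneqq \Abs{(\ket{1}\!\bra{1}\otimes I)\,C\ket{0^m}}^2$ satisfies $p \geq 1-2^{-l(n)}$ on yes-instances and $p \leq 2^{-l(n)}$ on no-instances, where $l$ is the target amplification polynomial. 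Writing the output qubit first, $C\ket{0^m} = \sqrt{1-p}\,\ket{0}\ket{\phi_0} + \sqrt{p}\,\ket{1}\ket{\phi_1}$ for some normalized $\ket{\phi_0},\ket{\phi_1}$.

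Next I would build the two pure states by appending a single fresh flag qubit (so the output length becomes $n' = n+1$) and coherently copying the verdict. Concretely, let $\ket{\psi_1} \coloneqq C\ket{0^m}\otimes\ket{0}_f$ be prepared by $\widetilde{Q}_1$, and let $\ket{\psi_0}$ be prepared by $\widetilde{Q}_0$, which runs $C$ and then applies a \CNOT{} from the output qubit to the flag, giving $\ket{\psi_0} = \sqrt{1-p}\,\ket{0}\ket{\phi_0}\ket{0}_f + \sqrt{p}\,\ket{1}\ket{\phi_1}\ket{1}_f$. Both states are genuinely pure and preparable by $\poly(n)$-size circuits. The key computation is the overlap: because the two accepting branches disagree on the flag while the two rejecting branches agree, and the accept/reject branches are orthogonal on the output qubit, all cross terms vanish and $\langle \psi_0 \vert \psi_1 \rangle = (1-p)\cdot 1 + p\cdot\langle 1\vert 0\rangle_f = 1-p$. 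Since these are pure states, $\td(\rho_0,\rho_1) = \sqrt{1-\abs{\langle \psi_0\vert\psi_1\rangle}^2} = \sqrt{1-(1-p)^2}$, a monotonically increasing function of $p$.

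It then remains to match the promise thresholds. Plugging in the amplified bounds, a yes-instance gives $1-p \leq 2^{-l}$, hence $\td \geq \sqrt{1-2^{-2l}} \geq 1-2^{-l}$, while a no-instance gives $1-p \geq 1-2^{-l}$, hence $\td \leq \sqrt{1-(1-2^{-l})^2} \leq 2^{(1-l)/2}$. Choosing the amplification polynomial a suitable constant factor larger than the target (to absorb the square root on the soundness side) therefore forces $\td(\rho_0,\rho_1) \geq 1-2^{-l(n'-1)}$ on yes-instances and $\td(\rho_0,\rho_1) \leq 2^{-l(n'-1)}$ on no-instances, which is exactly $\PureQSD[a(n'),b(n')]$ with $a(n') \leq 1-2^{-l(n'-1)}$ and $b(n') \geq 2^{-l(n'-1)}$; the gap $a-b \geq 1/2 \geq 1/\poly(n')$ is automatically preserved. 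The map $x \mapsto (\widetilde{Q}_0,\widetilde{Q}_1)$ outputs circuit descriptions and is computable in $\poly(n)$ time, completing the reduction, and specializing the amplification so that $l(n'-1) = n'$ yields the stated $\PureQSD\sbra{1-2^{-n'},2^{-n'}}$.

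The main obstacle is not the construction but the parameter bookkeeping, and in particular the asymmetry intrinsic to the pure-state trace distance: since $\td = \sqrt{1-\abs{\langle\psi_0\vert\psi_1\rangle}^2}$, an overlap that is $2^{-l}$-close to $1$ only certifies $\td = O(2^{-l/2})$, so the soundness side loses a factor of two in the exponent and the amplification polynomial must be chosen accordingly. I would also take care that the flag-copying gadget adds exactly one qubit so that $n' = n+1$ as claimed, and that taking all qubits as the output register (tracing out nothing) keeps both states pure, as required by \PureQSD{}.
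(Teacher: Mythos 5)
Your gadget is, up to one cosmetic difference, exactly the construction the paper relies on: the paper does not prove \cref{lemma:PureQSD-is-BQPhard} itself but cites it from \cite{LW25} (construction from \cite{RASW23}, see also \cite{LGLW23,WZ24}), and the construction is revealed in \cref{sec:PurePoweredQSD-hardness} as the flag-\CNOT{} circuit $C^{\dagger}\,\CNOT_{\sfO\to\sfF}\,C$. Your pair $\big(\CNOT\, C\ket{0^m}\ket{0}_f,\; C\ket{0^m}\ket{0}_f\big)$ differs from that pair only by the final uncomputation $C^{\dagger}$ applied to both states, which changes neither purity nor trace distance; your overlap computation $\innerprod{\psi_0}{\psi_1}=1-p$ is correct, as is the conversion $\td=\sqrt{1-(1-p)^2}$ and the factor-of-two loss in the exponent on the soundness side.

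The genuine gap is in your very first step, the amplification, and it is not mere bookkeeping. The thresholds in the lemma are functions of the \emph{output length} $n'$ of the instance your reduction produces, not of the input size: for the stated instantiation you must output circuits of output length $n'$ with $\td\geq 1-2^{-n'}$ on yes-instances and $\td\leq 2^{-n'}$ on no-instances, which via your formula forces soundness error $p\leq 2^{-2n'-1}$. Because \PureQSD{} requires \emph{exactly} pure states, you cannot trace out the work qubits (the flag is entangled with them except in the zero-error limit), so $n'$ equals the full width of your amplified circuit plus one, i.e.\ $n'\geq t\,m_0+1$ after $t$-fold majority amplification of a width-$m_0$ verifier. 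But repetition-based amplification has soundness error at best $3^{-t}$ (unanimous acceptance; majority is worse), so you would need $t\log_2 3\geq 2(t m_0+1)+1$, which is impossible for any $m_0\geq 1$: no repetition scheme can make the error exponentially small in the width of the amplified circuit itself, yet that is exactly what the lemma demands. Your write-up amplifies to error $2^{-l(n)}$ with $n$ the input size and then silently identifies $n$ with $n'-1$; that identification is where the problem hides. The fix --- and the reason the paper leans on the constructions of \cite{RASW23,LGLW23,WZ24} --- is \emph{width-efficient} error reduction, e.g.\ QSVT-based singular-value discrimination, which achieves error $2^{-q}$ for an arbitrary polynomial $q$ at the cost of circuit \emph{size} growing with $q$ while adding only $O(1)$ ancilla qubits; since the width $m$ is then independent of $q$, one may take $q\geq 2\,l(m+2)+O(1)$, after which the rest of your analysis goes through verbatim.
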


\subsubsection{Quantitative lower bounds for \QSD{} and \PureQSD{}}
We begin by stating a query complexity lower bound for \QSD{}, applicable to \textit{any} promise error $\epsilon \in (0,1/2)$. For any $n$-qubit quantum state $\rho$ of rank $r$, we can define an $n$-qubit state $\rho_{\ttU}$ such that the eigenvalues of $\rho_{\ttU}$ form a uniform distribution on the support of $\rho$. 
Consider the spectral decomposition $\rho = \sum_{i \in [r]} \mu_i \ket{v_i}\bra{v_i}$, where $\{\ket{v_i}\}_{i\in [r]}$ is an orthonormal basis, we have $\td\rbra*{\rho,\rho_{\ttU}} = \TV(\mathbf{\mu},U_{r})$, where $U_{r}$ is a uniform distribution over $[r]$. Then, the following lemma applies to a broad range of $\epsilon$:

\begin{lemma}[Quantitative lower bounds for \QSD{}]
    \label{lemma:lower-bounds-QSD}
    Let $\epsilon \in (0, 1/2]$ and $\Pi_r$ be a projector onto a subspace of even dimension $r$. 
    Consider the following promise problem: given an unknown quantum state $\rho$ such that $\supp\rbra{\rho} \subseteq \supp\rbra{\Pi_r}$, determine whether $\rho = \rho_{\ttU}$ or that $\rho$ has eigenvalues $\frac{1\pm 2\epsilon}{r}$ each with multiplicity $\frac{r}{2}$ (in which case $\td(\rho,\rho_{\ttU}) \geq \epsilon$), where $\rho_{\ttU} = \Pi_r/r$. 
    Then, 
    \begin{enumerate}[label={\upshape(\arabic*)}, leftmargin=2em, topsep=0.33em, itemsep=0.33em, parsep=0.33em]
        \item \label{thmitem:query-lower-bound-QSD}\emph{\textbf{Queries}~(\cite[Theorem 2.13]{CWZ25})}: In the purified quantum query access model, the quantum query complexity is $\Omega\rbra[\big]{\sqrt{r}/\epsilon}$. 
        \item \label{thmitem:sample-lower-bound-QSD}\emph{\textbf{Samples}~(\cite[Corollary 4.3]{OW21})}: The quantum sample complexity is $\Omega\rbra[\big]{r/\epsilon^2}$. 
    \end{enumerate}
\end{lemma}

Notably, the lower bound $\Omega\rbra{\sqrt{r}/\epsilon
}$ in \Cref{lemma:lower-bounds-QSD}\ref{thmitem:query-lower-bound-QSD} improves the previous results: (i) $\Omega\rbra{r^{1/3}}$ due to \cite[Theorem 2]{CFMdW10} for the uniformity testing of discrete distributions and (ii) $\widetilde{\Omega}\rbra{\sqrt{r}}$ due to \cite[Theorem 5]{BKT20} for estimating the total variation distance. 

It is noteworthy that the quantum query model used in~\cite{CFMdW10,BKT20} differs from the purified quantum access model. However, this lower bound also applies to our query model, as discussed after Definition 3 in~\cite{GL20}. 

\vspace{1em}
Next, we present lower bounds on the query and sample complexities for \PureQSD{} by inspecting the proof of the corresponding theorems in~\cite{Wang24pureQSD}. It is noteworthy that the query complexity bound (\Cref{lemma:lower-bounds-PureQSD}\ref{thmitem:query-lower-bound-PureQSD}) follows as a corollary of~\cite[Theorem 4]{Belovs19}:

\begin{lemma}[Quantitative lower bounds for \PureQSD{}]
    \label{lemma:lower-bounds-PureQSD}    
    For any $\epsilon \in (0,1/2)$, there exist $n$-qubit pure states $\ket{\psi_0}$ and $\ket{\psi_1}$ such that deciding whether $\td\rbra*{\ketbra{\psi_0}{\psi_0},\ketbra{\psi_1}{\psi_1}}$ is at least $\epsilon$ or exactly $0$ requires:
    \begin{enumerate}[label={\upshape(\arabic*)}, leftmargin=2em, topsep=0.33em, itemsep=0.33em, parsep=0.33em]
        \item \emph{\textbf{Queries}~(\cite[Theorem V.2]{Wang24pureQSD}}: In the purified quantum access model, the quantum query complexity is $\Omega(1/\epsilon)$. 
        \label{thmitem:query-lower-bound-PureQSD}
        \item \emph{\textbf{Samples}~(\cite[Theorem B.2]{Wang24pureQSD})}: The quantum sample complexity is $\Omega\rbra[\big]{1/\epsilon^2}$. 
        \label{thmitem:sample-lower-bound-PureQSD}
    \end{enumerate}
\end{lemma}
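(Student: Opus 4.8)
The plan is to treat both bounds through the elementary identity for pure states
\[ \td\rbra*{\ketbra{\psi_0}{\psi_0},\ketbra{\psi_1}{\psi_1}} = \sqrt{1 - \abs*{\braket{\psi_0}{\psi_1}}^2}, \]
which is the equality case of \Cref{lemma:traceDist-vs-fidelity} together with $\F = \abs{\braket{\psi_0}{\psi_1}}$ for pure states. Thus the promise ``$\td$ is exactly $0$'' means $\abs{\braket{\psi_0}{\psi_1}} = 1$ (the states coincide up to a global phase), while ``$\td \geq \epsilon$'' means $\abs{\braket{\psi_0}{\psi_1}} \leq \sqrt{1-\epsilon^2}$. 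Hence both parts reduce to detecting a small overlap deficit, and it suffices to exhibit hard instances realizing these two overlaps and to lower bound the resources needed to tell them apart.

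For the sample bound \ref{thmitem:sample-lower-bound-PureQSD} I would use the two-point (Le Cam) method with a single fixed pair of instances. Take $\ket{\psi_0} = \ket{0}$ always; in the null case set $\ket{\psi_1} = \ket{0}$, and in the alternative set $\ket{\psi_1} = \sqrt{1-\epsilon^2}\,\ket{0} + \epsilon\,\ket{1}$, so that $\td = \epsilon$ exactly. A sample-based tester receives $k$ copies of each of $\rho_0$ and $\rho_1$; since the $\rho_0$-register and the null $\rho_1$-register are identical in both scenarios, the common tensor factor cancels and the optimal distinguishing advantage equals $\tfrac12\Abs*{(\ketbra{0}{0})^{\otimes k} - (\ketbra{\psi_1}{\psi_1})^{\otimes k}}_1$. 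Using $\F\big(\rho^{\otimes k},\sigma^{\otimes k}\big) = \F(\rho,\sigma)^k$ and the Fuchs--van de Graaf upper bound from \Cref{lemma:traceDist-vs-fidelity}, this is at most $\sqrt{1 - (1-\epsilon^2)^k} \leq \sqrt{k}\,\epsilon$. Consequently any tester with constant success probability needs $\sqrt{k}\,\epsilon = \Omega(1)$, i.e.\ $k = \Omega(1/\epsilon^2)$, matching the estimate obtained by inspecting the proof of \cite[Theorem B.2]{Wang24pureQSD}.

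For the query bound \ref{thmitem:query-lower-bound-PureQSD} a single fixed pair no longer suffices, since in the black-box model the hardness must come from an oracle-encoded hidden parameter. I would encode a hidden single-qubit phase oracle $U_\phi = \diag(1, e^{i\phi})$ with $\phi \in \cbra{0, \phi_0}$ for $\phi_0 = 2\arcsin\epsilon = \Theta(\epsilon)$, and let $Q_0$ prepare a fixed reference $\ket{+} = \tfrac1{\sqrt2}(\ket0+\ket1)$ while $Q_1$ applies $U_\phi$ to that reference. Then $\abs{\braket{\psi_0}{\psi_1}} = \abs{\cos(\phi/2)}$, so $\td = \abs{\sin(\phi/2)}$, which is $0$ when $\phi=0$ and exactly $\epsilon$ when $\phi = \phi_0$. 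Each query to $Q_1$ (or its inverse/controlled version) costs one query to $U_\phi$, so a tester solving \PureQSD{} on these instances decides $\phi \in \cbra{0,\phi_0}$ using that many oracle calls. This is a gapped-phase discrimination problem, whose $\Omega(1/\phi_0) = \Omega(1/\epsilon)$ query lower bound follows from the adversary bound of \cite[Theorem 4]{Belovs19}; intuitively, $t$ uses of $U_\phi$ move the algorithm's state by only $O(t\phi_0)$ in trace distance, so distinguishing requires $t\phi_0 = \Omega(1)$.

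The routine parts -- the pure-state identity and the Fuchs--van de Graaf estimate -- are immediate, so the main obstacle is the query part: one must verify that the phase-oracle embedding honors the purified quantum access model (controlled and inverse oracle access, ancillae, and global-phase conventions) and that the gapped-phase instance genuinely satisfies the hypotheses of \cite[Theorem 4]{Belovs19}, so that the adversary bound applies and yields the tight $\Omega(1/\epsilon)$ scaling rather than a weaker polynomial dependence. Both constructions use only a constant number of significant qubits and embed trivially into $n$ qubits, so the claimed $n$-qubit instances exist for every relevant $n$.
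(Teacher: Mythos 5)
Your proposal is correct, but it is worth noting that the paper does not actually prove this lemma at all: both bounds are imported wholesale from \cite[Theorems V.2 and B.2]{Wang24pureQSD}, with only a remark that the query bound can be seen as a corollary of \cite[Theorem 4]{Belovs19}. What you have written is a self-contained replacement. Your sample-complexity argument (Le Cam two-point method on $\ket{0}$ versus $\sqrt{1-\epsilon^2}\ket{0}+\epsilon\ket{1}$, fidelity multiplicativity $\F(\rho^{\otimes k},\sigma^{\otimes k})=\F(\rho,\sigma)^k$, and the Fuchs--van de Graaf bound $\td\leq\sqrt{1-\F^2}$, giving $\sqrt{1-(1-\epsilon^2)^k}\leq\sqrt{k}\,\epsilon$ by Bernoulli) is complete and is essentially the argument behind the cited Theorem B.2. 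For the query bound, your phase-discrimination construction is sound, and in fact you undersell it: the hybrid (BBBV-style) argument you present as ``intuition'' is already a full elementary proof, since every query to $Q_1$, $Q_1^{\dagger}$, or controlled-$Q_1$ perturbs the algorithm's state by at most $\Abs{U_{\phi_0}H - H} = \Abs{U_{\phi_0}-I} = 2\sin(\phi_0/2) = 2\epsilon$ in operator norm (queries to the fixed $Q_0$ contribute nothing), so after $t$ queries the two runs differ by $O(t\epsilon)$ and distinguishing forces $t=\Omega(1/\epsilon)$; the deferral to \cite[Theorem 4]{Belovs19}, which you flag as the main unresolved obstacle, is therefore unnecessary. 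Two cosmetic points: the optimal distinguishing advantage is $\tfrac14\Abs{\cdot}_1$ (half the trace distance), not $\tfrac12\Abs{\cdot}_1$, which does not affect the asymptotics; and your pure-state identity $\td=\sqrt{1-\abs{\braket{\psi_0}{\psi_1}}^2}$ is the same fact the paper records as \cref{corr:Talpha-vs-T-pure} at $\alpha=1$. Net effect: the paper's route is shorter (pure citation), while yours makes the lemma verifiable without consulting \cite{Wang24pureQSD} at all.
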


\subsection{Polynomial approximations}
\label{subsec:polynomial-approx}
We now present a few useful results and tools for polynomial approximations. 

\subsubsection{(Best uniform) polynomial approximations}
\label{subsubsec:best-uniform-poly-approx}

Let $f(x)$ be a continuous function defined on the interval $[-1,1]$ that we aim to approximate by a polynomial of degree at most $d$. We define $P^*_d$ as a \textit{best uniform approximation} to $f$ on $[-1,1]$ of degree at most $d$ if, for any degree-$d$ polynomial approximation $P_d$ of $f$, the following holds:  
\[ \max_{x\in[-1,1]} \abs*{f(x) - P^*_d(x)} \leq \max_{x\in[-1,1]} \abs*{f(x) - P_d(x)}.\]

The best uniform (polynomial) approximation of positive (constant) powers $|x|^{\alpha}$ was first established by Serge Bernstein~\cite{Bernstein38,Bernstein38shift}. However, the focus here is on the best uniform approximation of \textit{signed} positive powers $\sgn(x) |x|^{\alpha}$, as stated in \Cref{lemma:bestUniformApprox-signedPositivePower}. This result is often attributed to Bernstein's work (see, e.g.,~\cite[Equation (10.2)]{Totik06}), and a proof of a more general version can be found in~\cite[Theorem 8.1.1]{Ganzburg08}. 

\begin{lemma}[Best uniform approximation of signed positive powers, adapted from~{\cite[Theorem 8.1.1]{Ganzburg08}}]
    \label{lemma:bestUniformApprox-signedPositivePower}
    For any fixed positive real constant order $\alpha$, let $P^*_d \in \bbR[x]$ be the best uniform polynomial approximation to $f(x) = \sgn(x)\abs{x}^{\alpha}$ of degree $d = \ceil*{\rbra*{\beta_{\alpha}/\epsilon}^{1/\alpha}}$, where $\beta_{\alpha}$ is a constant depending on $\alpha$. Then, for sufficiently small $\epsilon$, it holds that 
    \[\max_{x\in[-1,1]} | P^*_d(x) - f(x) | \leq \epsilon.\]
\end{lemma}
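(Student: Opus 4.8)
The plan is to read off the explicit degree bound from the classical \emph{asymptotic} for the best-approximation error, which is the real content of the cited theorem. Throughout, write $f(x) = \sgn(x)\abs{x}^{\alpha}$ and let $E_d \coloneqq \max_{x\in[-1,1]}\abs{P^*_d(x) - f(x)}$ denote the error of the best degree-$d$ uniform approximation. The key input (Step~1) is that this error decays \emph{exactly} like $d^{-\alpha}$: there is a finite positive constant $c_\alpha$, depending only on $\alpha$, with
\[
\lim_{d\to\infty} d^{\alpha}\, E_d = c_{\alpha}.
\]
This is the signed-power counterpart of Bernstein's theorem for $\abs{x}^{\alpha}$, and is exactly what~\cite[Theorem 8.1.1]{Ganzburg08} provides.

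Granting this asymptotic, the rest is an elementary conversion. Since the limit exists and is finite, I would fix any constant $\beta_\alpha > c_\alpha$ (say $\beta_\alpha \coloneqq c_\alpha + 1$); by the definition of the limit there is a threshold $d_0 = d_0(\alpha)$ with $E_d \leq \beta_\alpha/d^{\alpha}$ for all $d \geq d_0$. Now, given the target accuracy $\epsilon$, set $d = \ceil*{\rbra*{\beta_\alpha/\epsilon}^{1/\alpha}}$. Then $d \geq \rbra*{\beta_\alpha/\epsilon}^{1/\alpha}$, so $d^{\alpha} \geq \beta_\alpha/\epsilon$ and hence $\beta_\alpha/d^{\alpha} \leq \epsilon$. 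As long as $\epsilon$ is small enough that this $d$ obeys $d \geq d_0$ --- which is precisely the role of the ``sufficiently small $\epsilon$'' hypothesis --- the previous bound gives $E_d \leq \beta_\alpha/d^{\alpha} \leq \epsilon$, as claimed.

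The only genuine difficulty is Step~1, namely establishing that the rescaled error $d^{\alpha}E_d$ converges to a finite positive constant. This is a deep result in approximation theory; its proof (in the spirit of Bernstein, via self-similar scaling of the extremal error function together with the theory of entire functions of exponential type) I would invoke as a black box rather than reproduce. Once the asymptotic is in hand, the passage to the explicit degree $d = \ceil*{(\beta_\alpha/\epsilon)^{1/\alpha}}$ is immediate.
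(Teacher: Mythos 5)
Your proposal is correct and is essentially the paper's own route: the paper gives no separate proof of this lemma, stating it as \emph{adapted from} Ganzburg's Theorem 8.1.1, which is precisely the asymptotic $\lim_{d\to\infty} d^{\alpha}E_d = c_{\alpha}$ that you invoke as your Step~1 (the paper's introduction even records the error decay $\to c/d^{\alpha}$ explicitly). Your elementary conversion --- fixing $\beta_{\alpha} > c_{\alpha}$, setting $d = \lceil (\beta_{\alpha}/\epsilon)^{1/\alpha} \rceil$ so that $\beta_{\alpha}/d^{\alpha} \leq \epsilon$, and using the ``sufficiently small $\epsilon$'' hypothesis to guarantee $d$ exceeds the threshold $d_0$ --- is exactly the adaptation the paper intends.
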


We also need the following polynomial approximations of negative power functions, which are not uniform approximations: 

\begin{lemma}[Polynomial approximation of negative power functions {\cite[Corollary 67 in the full version]{GSLW19}}] \label{lemma:uniformPolyApprox-signedNegativePower}
    Let $c > 0$ and $\delta, \epsilon \in (0, 1/2]$. 
    There is a degree-$d$ odd polynomial $P_d \in \mathbb{R}\sbra{x}$, where $d = O\rbra{\frac{c+1}{\delta}\log\rbra{\frac{1}{\epsilon}}}$, such that 
    \[
    \max_{x \in \sbra{-1, 1} \setminus \rbra{-\delta, \delta}} \abs*{P_d\rbra{x} - \frac{1}{2} \delta^c  \sgn\rbra{x} x^{-c}} \leq \epsilon \quad \textup{and} \quad \max_{x \in \sbra{-1, 1}} \abs*{P_d\rbra{x}} \leq 1.
    \]
\end{lemma}

\subsubsection{Chebyshev expansion and truncations}
\label{subsubsec:Chebyshev-truncation}

We introduce Chebyshev polynomials and an averaged variant of the Chebyshev truncation. 
We recommend \cite[Chapter 3]{Rivlin90} for a comprehensive review of Chebyshev expansion.

\begin{definition}[Chebyshev polynomials]
The Chebyshev polynomials \emph{(}of the first kind\emph{)} $T_k(x)$ are defined via the following recurrence relation: $T_0(x)\coloneqq1$, $T_1(x)\coloneqq x$, and $T_{k+1}(x)\coloneqq2x T_k(x)-T_{k-1}(x)$. For $x \in [-1,1]$, an equivalent definition is $T_k(\cos \theta) = \cos(k \theta)$.
\end{definition}

To use Chebyshev polynomials (of the first kind) for Chebyshev expansion, we need to define an inner product between two functions, $f$ and $g$, as long as the following integral exists:

\begin{equation}
    \label{eq:polynomial-inner-product}
    \innerprodF{f}{g} \coloneqq \frac{2}{\pi} \int_{-1}^1 \frac{f(x)g(x)}{\sqrt{1-x^2}} \dx.
\end{equation}

The Chebyshev polynomials form an orthonormal basis in the inner product space induced by $\innerprodF{\cdot}{\cdot}$ defined in \Cref{eq:polynomial-inner-product}.
Consequently, any continuous and integrable function $f: [-1,1] \to \bbR$ whose Chebyshev coefficients satisfy $\lim_{k \to \infty} c_k=0$, where $c_k$ is defined in \Cref{eq:Chebyshev-expansion}, has a Chebyshev expansion expressed as:
\begin{equation}
    \label{eq:Chebyshev-expansion}
    f(x)=\frac{1}{2} c_0 T_0(x) + \sum_{k=1}^{\infty} c_k T_k(x), \text{ where }  c_k\coloneqq\innerprodF{T_k}{f}.
\end{equation} 

Instead of approximating functions directly via the Chebyshev truncation $\tilde{P}_d = c_0/2 + \sum_{k=1}^d c_k T_k$, we utilize the \textit{de La Vall\'ee Poussin partial sum}, and then obtain the degree-$d$ \textit{averaged Chebyshev truncation} $\hat{P}_{d'}$, which is a polynomial of degree $d'=2d-1$:
\begin{equation}
    \label{eq:averaged-Chebyshev-truncation}
    \hat{P}_{d'}(x) \coloneqq \frac{1}{d} \sum_{l=d}^{d'} \tilde{P}_l(x) 
    = \frac{\hat{c}_0}{2} + \sum_{k=1}^{d'} \hat{c}_k T_k(x) \text{ where } \hat{c}_k = \begin{cases}
        c_k ,& 0 \leq k \leq d\\
        \frac{2d-k}{d} c_k,& k > d 
    \end{cases},
\end{equation}
we can achieve the truncation error $4 \epsilon$ for any function that admits Chebyshev expansion. 
\begin{lemma}[Asymptotically best approximation by averaged Chebyshev truncation, adapted from Exercise 3.4.6 and 3.4.7 in~\cite{Rivlin90}]
    \label{lemma:averaged-Chebyshev-truncation}
    For any function $f$ that has a Chebyshev expansion, consider the degree-$d$ averaged Chebyshev truncation $\hat{P}_{d'}$ defined in \Cref{eq:averaged-Chebyshev-truncation}. 
    Let $\varepsilon_d(f)$ be the truncation error corresponds to the degree-$d$ best uniform approximation on $[-1,1]$ to $f$. If there is a degree-$d$ polynomial $P^*_d\in\bbR[x]$ such that $\max_{x\in[-1,1]} |f(x)-P^*_d(x)| \leq \epsilon$, then
    \[ \max_{x\in[-1,1]} \big| f(x) - \hat{P}_{d'}(x) \big| \leq 4 \varepsilon_d(f) \leq 4 \max_{x\in[-1,1]} |f(x)-P^*_d(x)| \leq 4\epsilon. \]
\end{lemma}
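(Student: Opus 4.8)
The plan is to recognize $\hat{P}_{d'}$ as the image of $f$ under a linear operator $V_d$ that (i) fixes every polynomial of degree at most $d$ and (ii) has sup-norm at most $3$; the claimed near-best bound $4\varepsilon_d(f)$ then follows from the standard projection-plus-operator-norm argument. Concretely, write $S_l(f) = \tilde{P}_l$ for the degree-$l$ Chebyshev partial sum and set $V_d(f) \coloneqq \frac{1}{d}\sum_{l=d}^{2d-1} S_l(f) = \hat{P}_{d'}$. The first thing I would verify is the \emph{reproduction property}: since $S_l$ leaves fixed any polynomial of degree at most $l$, and every summand has $l \geq d$, we get $V_d(q) = q$ for all $q \in \bbR[x]$ with $\deg q \leq d$.

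The main obstacle is bounding the operator norm (Lebesgue constant) $\|V_d\|_{\infty} \coloneqq \sup_{\|f\|_{\infty}\leq 1}\|V_d(f)\|_{\infty}$ by $3$. The key step here is to pass to the Fourier picture via $x = \cos\theta$, under which the Chebyshev expansion of $f$ becomes the Fourier cosine expansion of the even function $g(\theta) = f(\cos\theta)$ and $S_l$ becomes the symmetric Fourier partial-sum operator. I would then express $V_d$ in terms of the Fej\'er (Ces\`aro) means $\sigma_m \coloneqq \frac{1}{m+1}\sum_{l=0}^{m} S_l$ via the telescoping identity
\[
V_d = \frac{1}{d}\sum_{l=d}^{2d-1} S_l = \frac{1}{d}\Big(\sum_{l=0}^{2d-1} S_l - \sum_{l=0}^{d-1} S_l\Big) = 2\sigma_{2d-1} - \sigma_{d-1}.
\]
Because the Fej\'er kernel is nonnegative with total mass one, each $\sigma_m$ is an averaging operator with $\|\sigma_m\|_{\infty} = 1$, whence $\|V_d\|_{\infty} \leq 2\|\sigma_{2d-1}\|_{\infty} + \|\sigma_{d-1}\|_{\infty} \leq 3$. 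I expect this to be the genuinely delicate point, chiefly in checking that the norm-one bound for $\sigma_m$ transfers correctly between the Fourier/cosine and the Chebyshev normalizations (the supremum over $x\in[-1,1]$ matching the supremum over $\theta$ by evenness).

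Finally I would assemble the estimate. Let $P^*_d$ be a best uniform degree-$d$ approximation to $f$, so that $\|f - P^*_d\|_{\infty} = \varepsilon_d(f)$. Using linearity and the reproduction property $V_d(P^*_d) = P^*_d$, we obtain
\[
f - \hat{P}_{d'} = (f - P^*_d) - V_d(f - P^*_d),
\]
so that $\|f - \hat{P}_{d'}\|_{\infty} \leq (1 + \|V_d\|_{\infty})\,\varepsilon_d(f) \leq 4\,\varepsilon_d(f)$. The remaining inequalities $4\varepsilon_d(f) \leq 4\max_{x\in[-1,1]}|f(x)-P^*_d(x)| \leq 4\epsilon$ are immediate from the definition of $\varepsilon_d(f)$ as the best approximation error together with the hypothesis. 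The reproduction property and this final assembly are routine; the only real work lies in the Fej\'er-mean decomposition and the resulting Lebesgue-constant bound.
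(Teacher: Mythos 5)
Your proof is correct and is essentially the argument the paper relies on: the paper states this lemma without proof, citing Rivlin's Exercises 3.4.6--3.4.7, whose intended solution is exactly your decomposition $V_d = 2\sigma_{2d-1} - \sigma_{d-1}$ into Fej\'er means, the norm bound $\|V_d\|_\infty \leq 3$, the reproduction of degree-$d$ polynomials, and the resulting $(1+\|V_d\|_\infty)\varepsilon_d(f) \leq 4\varepsilon_d(f)$ estimate. The transfer between the Chebyshev and Fourier-cosine pictures via $x=\cos\theta$ is handled correctly, so nothing is missing.
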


\subsection{Quantum algorithmic toolkit}

In this subsection, we provide several quantum algorithmic tools: the quantum singular value transformation, four useful quantum algorithmic subroutines, and the quantum samplizer, which enables a quantum query-to-sample simulation. 

\subsubsection{Quantum singular value transformation}

We begin by introducing the notion of block-encoding.

\begin{definition} [Block-encoding] \label{def:block-encoding}
    A linear operator $A$ on an $\rbra{n+a}$-qubit Hilbert space is said to be an $\rbra{\alpha, a, \epsilon}$-block-encoding of an $n$-qubit linear operator $B$, if 
    \[
    \Abs{\alpha \rbra{\bra{0}^{\otimes a} \otimes I_n} A \rbra{\ket{0}^{\otimes a} \otimes I_n} - B} \leq \epsilon,
    \]
    where $I_n$ is the $n$-qubit identity operator and $\Abs{\cdot}$ is the operator norm.
\end{definition}

Then, we state the quantum singular value transformation: 
\begin{lemma} [Quantum singular value transformation, {\cite[Theorem 31]{GSLW19}}] \label{lemma:qsvt}
    Let $U$ be a unitary operator that is an $\rbra{\alpha, a, \epsilon}$-block-encoding of an Hermitian operator $A$. Let $P_\bullet \in \mathbb{R}\sbra{x}$ be a polynomial of degree $d$ satisfying $\abs{P_\bullet\rbra{x}} \leq \frac 1 2$ for $x \in \sbra{-1, 1}$.
    Then there exists a quantum circuit $\tilde U$ that is a $\rbra{1, a+2, 4d\sqrt{\epsilon/\alpha} + \delta}$-block-encoding of $P_\bullet\rbra{A/\alpha}$. This circuit $\tilde U$ can be implemented using $O\rbra{d}$ queries to $U$ and $O\rbra{\rbra{a+1}d}$ one- and two-qubit gates. 
    Moreover, the classical description of $\tilde U$ can be computed in deterministic time $\poly\rbra{d, \log\rbra{1/\delta}}$. 
\end{lemma}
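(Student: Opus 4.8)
The plan is to reconstruct \Cref{lemma:qsvt} via the two ingredients underlying the construction of \cite{GSLW19}: the single-qubit \emph{quantum signal processing} (QSP) identity, and the \emph{qubitization} of a block-encoding. I would take QSP as the base case: by multiplying out a product of $\mathrm{SU}(2)$ matrices one checks that for any sequence of phases $\rbra{\phi_0,\dots,\phi_d}$ the interleaved product $e^{i\phi_0 Z}\prod_{k=1}^{d} W(x)\,e^{i\phi_k Z}$, where $W(x)$ is the $x$-dependent single-qubit signal rotation, has top-left entry equal to a degree-$d$ polynomial $P(x)$; conversely, every real $P$ with $\deg P \le d$, definite parity, and $\abs{P(x)}\le 1$ on $\sbra{-1,1}$ is realized by some choice of phases. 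This is an elementary induction on $d$ using the Pauli algebra, and it is the only place where the degree and parity constraints on $P$ enter.

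The second ingredient lifts this single-qubit statement to the operator $A$. Writing the SVD $A/\alpha = \sum_i \sigma_i \ket{w_i}\bra{v_i}$ and letting $\Pi = \ketbra{\bar 0}{\bar 0}\otimes I_n$ be the projector onto the ancilla-zero subspace, I would invoke Jordan's lemma on the pair of projectors $\Pi$ and $U^\dagger \Pi U$ to decompose the Hilbert space into $U$-invariant subspaces of dimension at most two. On each two-dimensional block indexed by $i$, the block-encoding $U$ together with the reflection $2\Pi-I$ acts as a fixed $\mathrm{SU}(2)$ rotation whose angle is exactly $\arccos\sigma_i$, so $U$ simultaneously plays the role of $W(\sigma_i)$ in every block. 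Inserting the projector-controlled phase gates $e^{i\phi_k\rbra{2\Pi - I}}$ between alternating applications of $U$ and $U^\dagger$ therefore runs the QSP sequence in parallel across all singular values, and the $\rbra{\bar 0,\bar 0}$ block of the resulting circuit equals $P^{\SV}(A/\alpha) = \sum_i P(\sigma_i)\ket{w_i}\bra{v_i}$, which for Hermitian $A$ is the eigenvalue transform once the even and odd parts are combined. Handling a general (not definite-parity) real $P$ costs one extra ancilla to form the linear combination of parities and one more to take the real part, accounting for the ``$a+2$''; the $O\rbra{d}$ query count and $O\rbra{\rbra{a+1}d}$ gate count follow since each $e^{i\phi_k\rbra{2\Pi-I}}$ is a single multiply-controlled phase on the $a$ ancilla qubits.

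It remains to control the two error sources. The first is that $U$ only $\epsilon$-encodes $A$; here I would use the robustness of the construction, namely that if $\widetilde A$ and $A$ differ by $\epsilon$ in operator norm then their degree-$d$ transforms differ by at most $4d\sqrt{\epsilon/\alpha}$. The $\sqrt{\cdot}$ rather than linear dependence is the subtle point: although a bounded degree-$d$ polynomial can have derivative as large as $d^2$ by the Markov brothers' inequality, the worst case sits near the spectral endpoints $\pm 1$, where a perturbation of size $\epsilon$ in $A$ moves the relevant singular vectors only by $O\rbra{\sqrt{\epsilon}}$, and telescoping over the $d$ layers yields the stated bound. The second source is the classical computation of the phases $\rbra{\phi_k}$ realizing $P$; I would reduce this to completing $P$ to a unimodular Laurent polynomial and factoring it, a stable root-finding and linear-algebra task solvable in deterministic time $\poly\rbra{d,\log\rbra{1/\delta}}$ and contributing the additive $\delta$.

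I expect the main obstacle to be the robustness estimate $4d\sqrt{\epsilon/\alpha}$: the qubitization and QSP steps are clean once the two-dimensional block structure is exhibited, but quantifying how a block-encoding error propagates through a high-degree polynomial, and pinning down the correct $\sqrt{\epsilon}$ scaling near the singular-value endpoints, requires the most careful perturbation analysis and is the step most likely to fix the explicit constant.
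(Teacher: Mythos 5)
First, a framing point: the paper offers no proof of this lemma at all --- it is imported, with adapted constants, from \cite[Theorem 31]{GSLW19} --- so the only meaningful comparison is with the proof in that source. Your reconstruction follows that proof's architecture exactly: single-qubit QSP as the base case; lifting to the operator level through the two-dimensional invariant subspaces of the projector pair (qubitization), so that the alternating sequence of $U$, $U^\dagger$ and projector-controlled phases runs QSP in parallel across all singular values; a linear combination of the even and odd parts plus the real-part trick to handle an arbitrary real $P$ (which is needed because a real bounded $P$ alone need not admit a complementary polynomial), accounting for the two extra ancillas; a robustness lemma to absorb the input error $\epsilon$; and stable root-finding to compute the phases in time $\poly\rbra{d,\log\rbra{1/\delta}}$. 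In that sense your route is the same as the cited one, and the outline is correct.

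The one step that is wrong as stated is your justification of the $4d\sqrt{\epsilon/\alpha}$ robustness bound. You attribute the square root to the claim that an $\epsilon$-perturbation of $A$ ``moves the relevant singular vectors only by $O\rbra{\sqrt{\epsilon}}$''. Singular-vector perturbation obeys Davis--Kahan/Wedin-type bounds that scale with the \emph{inverse spectral gap}: when singular values are nearly degenerate, the singular vectors can rotate by $\Omega(1)$ under an arbitrarily small perturbation, so no gap-free bound of the form $4d\sqrt{\epsilon/\alpha}$ can be reached through singular-vector stability. The actual mechanism in \cite{GSLW19} is different: one compares the exact qubitization unitaries built from $A/\alpha$ and from the operator $\widetilde{A}$ that $U$ truly encodes; these unitary dilations contain blocks of the form $\sqrt{I - A^\dagger A/\alpha^2}$, and the operator square root is H\"{o}lder-$1/2$ continuous, so $\rbra{\epsilon/\alpha}$-close contractions have $O\rbra*{\sqrt{\epsilon/\alpha}}$-close dilations --- it is the square-root (equivalently, $\arccos$) function, not the singular vectors, that carries the $O\rbra{\sqrt{\epsilon}}$ modulus of continuity near the endpoints. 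Your telescoping is then the correct final step: the two alternating circuits use these dilations $d$ times with identical phase rotations, hence differ by $O\rbra*{d\sqrt{\epsilon/\alpha}}$ in operator norm, and restricting to the corner block gives the claim. With that substitution your proposal matches the cited proof; the remaining ingredients (QSP characterization, ancilla and gate counts, classical phase computation) are stated at the right level of precision.
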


\subsubsection{Quantum subroutines}

The first subroutine is the quantum amplitude estimation: 
\begin{lemma} [Quantum amplitude estimation, {\cite[Theorem 12]{BHMT02}}] \label{lemma:qaa}
    Suppose that $U$ is a unitary operator such that 
    \[
    U \ket{0} \ket{0} = \sqrt{p} \ket{0} \ket{\phi_0} + \sqrt{1-p} \ket{1} \ket{\phi_1},
    \]
    where $\ket{\phi_0}$ and $\ket{\phi_1}$ are normalized pure quantum states and $p \in \sbra{0, 1}$. 
    Then, there is a quantum query algorithm using $O\rbra{M}$ queries to $U$ that outputs $\tilde p$ such that 
    \[
    \Pr\sbra*{ \abs*{\tilde p - p} \leq \frac{2\pi\sqrt{p\rbra{1-p}}}{M} + \frac{\pi^2}{M^2} } \geq \frac{8}{\pi^2}.
    \]
    Moreover, if $U$ acts on $n$ qubits, then the quantum query algorithm can be implemented by using $O\rbra{Mn}$ one- and two-qubit gates.
\end{lemma}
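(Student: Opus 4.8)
The statement is the standard amplitude estimation result of~\cite{BHMT02}, and the plan is to reduce it to quantum phase estimation applied to a Grover-type amplitude amplification operator. First I would write $p = \sin^2\theta$ for some $\theta \in \sbra{0,\pi/2}$ and set $\ket{\psi} \coloneqq U\ket{0}\ket{0} = \sin\theta\,\ket{\psi_g} + \cos\theta\,\ket{\psi_b}$, where $\ket{\psi_g} \coloneqq \ket{0}\ket{\phi_0}$ and $\ket{\psi_b} \coloneqq \ket{1}\ket{\phi_1}$ span a two-dimensional $Q$-invariant subspace. I then define the amplitude amplification operator $Q \coloneqq -U S_0 U^{\dagger} S_\chi$, where $S_\chi$ flips the sign of the ``good'' subspace (first output qubit equal to $\ket{0}$) and $S_0 \coloneqq I - 2\,\ketbra{0}{0}\otimes\ketbra{0}{0}$ flips the sign of the all-zero state. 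A direct computation shows that on $\spanspace\cbra{\ket{\psi_g},\ket{\psi_b}}$ the operator $Q$ acts as a planar rotation by angle $2\theta$, so its eigenvalues are $e^{\pm 2i\theta}$ with eigenvectors $\ket{\psi_\pm} \coloneqq \tfrac{1}{\sqrt2}\rbra{\ket{\psi_g} \pm i\,\ket{\psi_b}}$, and $\ket{\psi}$ is an explicit equal-weight superposition of $\ket{\psi_+}$ and $\ket{\psi_-}$.

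Next I would run textbook phase estimation on $Q$ with input $\ket{\psi}$, using a control register of $m = \log M$ qubits together with the controlled powers $Q^{2^j}$; since each application of $Q$ invokes $U$ and $U^{\dagger}$ a constant number of times, the whole procedure costs $O(M)$ queries to $U$. Because $\ket{\psi}$ is supported on both eigenvectors, the measured outcome $y \in \cbra{0,\dots,M-1}$ estimates either $\theta/\pi$ or $1-\theta/\pi$; crucially, setting $\tilde p \coloneqq \sin^2(\pi y / M)$ returns the \emph{same} target value in either case, since $\sin^2(\pi-\theta) = \sin^2\theta$, which is the key trick that lets the two-eigenvalue ambiguity be harmless. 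The standard phase-estimation analysis --- lower-bounding, via the Dirichlet kernel, the Fourier-sampling mass carried by the two grid points nearest the true phase by $8/\pi^2$ --- then gives that with probability at least $8/\pi^2$ the estimate $\tilde\theta \coloneqq \pi y / M$ satisfies $\abs{\tilde\theta - \theta} \leq \pi/M$.

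Finally I would convert this angular accuracy into the claimed additive accuracy on $p$ using the identity $\sin^2 A - \sin^2 B = \sin(A+B)\sin(A-B)$. Writing $\Delta \coloneqq \tilde\theta - \theta$ with $\abs{\Delta} \leq \pi/M$, this yields $\abs{\tilde p - p} = \abs{\sin(2\theta+\Delta)}\cdot\abs{\sin\Delta}$; bounding $\abs{\sin(2\theta+\Delta)} \leq \abs{\sin(2\theta)} + \abs{\sin\Delta} \leq 2\sqrt{p(1-p)} + \abs{\Delta}$ and $\abs{\sin\Delta} \leq \abs{\Delta}$ gives $\abs{\tilde p - p} \leq \frac{2\pi\sqrt{p(1-p)}}{M} + \frac{\pi^2}{M^2}$, exactly the stated bound. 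The gate count follows by noting that the QFT on the $m = \log M$ control qubits costs $O(\log^2 M)$ gates while the $O(M)$ applications of $Q$ cost $O(n)$ gates each (as $U$ acts on $n$ qubits and both reflections cost $O(n)$), for a total of $O(Mn)$.

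I expect the main obstacle to lie in the two quantitative parts of the error analysis rather than in the algebraic setup. Establishing the $8/\pi^2$ success probability requires the sharp Dirichlet-kernel estimate of phase estimation, combined with careful bookkeeping of the two eigenvalues $e^{\pm 2i\theta}$ so that both symmetric grid-point contributions are counted and both collapse to a valid estimate of $p$. Likewise, the passage from phase error to amplitude error must be carried out with the sharp trigonometric expansion above --- rather than a crude Lipschitz bound on $\sin^2$ --- in order to recover the precise constants $2\pi$ and $\pi^2$ and the favorable $\sqrt{p(1-p)}$ factor, which is what makes the bound tight near $p \in \cbra{0,1}$.
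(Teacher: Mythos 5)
Your proposal is correct: the paper does not prove this lemma but imports it verbatim as \cite[Theorem 12]{BHMT02}, and your reconstruction is precisely the proof given there --- the Grover-type operator $Q = -US_0U^{\dagger}S_\chi$ acting as a rotation by $2\theta$ on the invariant plane, phase estimation on the equal-weight superposition of the $e^{\pm 2i\theta}$ eigenvectors, the $\sin^2(\pi-\theta)=\sin^2\theta$ symmetry resolving the eigenvalue ambiguity, the $8/\pi^2$ Dirichlet-kernel bound, and the conversion $\abs{\sin^2\tilde\theta - \sin^2\theta} = \abs{\sin(2\theta+\Delta)}\abs{\sin\Delta} \leq 2\sqrt{p(1-p)}\abs{\Delta} + \Delta^2$ with $\abs{\Delta}\leq\pi/M$. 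Both the error constants and the $O(Mn)$ gate count come out exactly as stated, so there is nothing to correct.
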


The second subroutine prepares a purified density matrix, originally stated in \cite{LC19}: 
\begin{lemma} [Block-encoding of density operators, {\cite[Lemma 25]{GSLW19}}] \label{lemma:block-encoding-of-density-operators}
    Suppose that $U$ is an $\rbra{n+a}$-qubit unitary operator that prepares a purification of an $n$-qubit mixed quantum state $\rho$. 
    Then, we can implement a unitary operator $W$ by using $1$ query to each of $U$ and $U^\dag$ such that $W$ is a $\rbra{1, n+a, 0}$-block-encoding of $\rho$. 
\end{lemma}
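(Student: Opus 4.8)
The plan is to exhibit $W$ explicitly and then verify the defining identity of \Cref{def:block-encoding} by a direct calculation, obtaining exact equality ($\epsilon=0$). First I would name the registers on which $U$ acts: write $U\ket{0}_{\sfA\sfB}=\ket{\psi}_{\sfA\sfB}$, where $\ket{0}_{\sfA\sfB}$ is the all-zero state of the $(n+a)$-qubit register $\sfA\sfB$, the $a$ qubits of $\sfA$ being the purifying register and the $n$ qubits of $\sfB$ being the system register, so that $\rho=\tr_{\sfA}\rbra{\ketbra{\psi}{\psi}}$. Introduce a fresh $n$-qubit register $\sfC$ on which $\rho$ is to act as an operator; the $n+a$ block-encoding ancillas are precisely $\sfA\sfB$, and the logical register is $\sfC$. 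With $\alpha=1$ and $\epsilon=0$, \Cref{def:block-encoding} then reduces to producing a unitary $W$ on $\sfA\sfB\sfC$ with $\rbra{\bra{0}_{\sfA\sfB}\otimes I_{\sfC}}\,W\,\rbra{\ket{0}_{\sfA\sfB}\otimes I_{\sfC}}=\rho$.

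Next I would set
\[
W \coloneqq \rbra{U^{\dagger}\otimes I_{\sfC}}\,\rbra{I_{\sfA}\otimes \mathrm{SWAP}_{\sfB,\sfC}}\,\rbra{U\otimes I_{\sfC}},
\]
where $\mathrm{SWAP}_{\sfB,\sfC}$ interchanges the two $n$-qubit registers $\sfB$ and $\sfC$, and verify the identity by evaluating an arbitrary matrix element $\bra{\eta}(\cdots)\ket{\xi}$ with $\ket{\xi},\ket{\eta}$ ranging over $\sfC$. Applying $U\otimes I_{\sfC}$ to $\ket{0}_{\sfA\sfB}\ket{\xi}_{\sfC}$ gives $\ket{\psi}_{\sfA\sfB}\ket{\xi}_{\sfC}$, the swap exchanges the contents of $\sfB$ and $\sfC$, and the left contraction against $\bra{0}_{\sfA\sfB}\rbra{U^{\dagger}\otimes I_{\sfC}}=\bra{\psi}_{\sfA\sfB}$ produces an overlap with $\bra{\psi}_{\sfA\sfB}$. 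Expanding through the Schmidt decomposition $\ket{\psi}_{\sfA\sfB}=\sum_i\sqrt{\lambda_i}\,\ket{u_i}_{\sfA}\ket{v_i}_{\sfB}$, for which $\rho=\sum_i\lambda_i\ketbra{v_i}{v_i}$, the orthonormality $\braket{u_j}{u_i}=\delta_{ij}$ collapses the resulting double sum and leaves exactly $\sum_i\lambda_i\braket{\eta}{v_i}\braket{v_i}{\xi}=\bra{\eta}\rho\ket{\xi}$. Since $\ket{\xi}$ and $\ket{\eta}$ are arbitrary, the compressed operator equals $\rho$ with no error.

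Finally I would read off the cost: $W$ invokes $U$ once and $U^{\dagger}$ once, together with the $n$ two-qubit gates realizing $\mathrm{SWAP}_{\sfB,\sfC}$, so $W$ is a $\rbra{1,n+a,0}$-block-encoding of $\rho$ using a single query to each of $U$ and $U^{\dagger}$, as claimed. I do not expect a genuine obstacle; the only step needing care is the register bookkeeping, namely confirming that conjugating the swap by $U$ converts the action on the purifying register $\sfA$ into precisely the inner product that implements the partial trace $\tr_{\sfA}$ -- i.e.\ that the Schmidt cross terms vanish and the diagonal terms reassemble the spectral decomposition of $\rho$. The rest is routine.
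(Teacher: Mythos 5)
Your proposal is correct, and it is precisely the standard construction behind the cited result \cite[Lemma 25]{GSLW19}: the paper itself gives no proof (it imports the lemma by citation), and the original proof is exactly your conjugated-SWAP operator $W = \rbra{U^{\dagger}\otimes I}\rbra{I\otimes \mathrm{SWAP}}\rbra{U\otimes I}$, with the Schmidt-decomposition computation showing that the compressed block implements the partial trace. The register bookkeeping, the exact ($\epsilon=0$) equality, and the count of one query to each of $U$ and $U^{\dagger}$ all match the source.
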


The third subroutine is linear-combination-of-unitaries (LCU), originally proposed in~\cite{BCC+15}:

\begin{definition} [State preparation pair]
    The pair of $b$-qubit unitary operators $\rbra{P_L, P_R}$ is said to be a $\rbra{\beta, b, \epsilon}$-state-preparation-pair for a vector $y \in \mathbb{C}^m$ with $\Abs{y}_1 \leq \beta$ and $m \leq 2^b$ if $P_L\ket{0} = \sum_{j=0}^{2^b-1} c_j \ket{j}$ and $P_R\ket{0} = \sum_{j=0}^{2^b-1} d_j \ket{j}$ such that $\sum_{j=0}^{m-1} \abs{\beta c_j^* d_j - y_j} \leq \epsilon$ and for all $m \leq j < 2^b$, it holds that $c_j^*d_j = 0$. 
\end{definition}

\begin{lemma} [Linear combination of block-encoded matrices, {\cite[Lemma 29]{GSLW19}}] \label{lemma:lcu}
    Suppose that for each $0 \leq j < m$, $U_j$ be a unitary operator that is a $\rbra{\alpha, a, \epsilon_2}$-block-encoding of an $s$-qubit operator $A_j$.
    Let $\rbra{P_L, P_R}$ be a $\rbra{\beta, b, \epsilon_1}$-state-preparation-pair for $y \in \mathbb{C}^m$.
    Then, we can implement an $\rbra{s+a+b}$-qubit unitary operator $W$ that is an $\rbra{\alpha\beta, a+b, \alpha\epsilon_1 + \alpha\beta\epsilon_2}$-block-encoding of $\sum_{j=0}^{m-1} y_j A_j$, by using $1$ query to each of controlled-$U_j$, $P_L^\dag$ and $P_R$. 
\end{lemma}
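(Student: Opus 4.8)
The plan is to realize $W$ through the standard prepare–select–unprepare pattern underlying linear combination of unitaries. First I would introduce the select operator $\mathrm{SEL} \coloneqq \sum_{j=0}^{2^b-1} \ketbra{j}{j} \otimes U_j$ acting on the $b$-qubit selection register together with the $(s+a)$-qubit block-encoding register, where the branches with $j \ge m$ may be set to the identity since they will be annihilated later; crucially, $\mathrm{SEL}$ is implemented using exactly one query to each controlled-$U_j$. I would then sandwich it as $W \coloneqq \rbra{P_L^\dagger \otimes I_a \otimes I_s}\, \mathrm{SEL}\, \rbra{P_R \otimes I_a \otimes I_s}$, which consumes one query to $P_R$ and one to $P_L^\dagger$, matching the claimed query count, and acts on $s + a + b$ qubits.

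The core of the argument is to identify the top-left block of $W$. Applying $P_R$ to $\ket{0}^{\otimes b}$ produces $\sum_j d_j\ket{j}$, running $\mathrm{SEL}$ branches each $\ket{j}$ into $U_j$ on the system register, and applying $P_L^\dagger$ and projecting the selection register back onto $\ket{0}^{\otimes b}$ contracts the amplitudes against $\sum_k c_k^*\bra{k}$. I would show this yields $\rbra{\bra{0}^{\otimes(a+b)}\otimes I_s} W \rbra{\ket{0}^{\otimes(a+b)}\otimes I_s} = \sum_{j} c_j^* d_j \rbra{\bra{0}^{\otimes a} U_j \ket{0}^{\otimes a}}$, where the state-preparation-pair condition $c_j^* d_j = 0$ for $j \ge m$ restricts the sum to $j < m$ and eliminates the padding branches.

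It remains to bound the block-encoding error. Writing $\tilde A_j \coloneqq \alpha\rbra{\bra{0}^{\otimes a} U_j \ket{0}^{\otimes a}}$ for the operator genuinely encoded by $U_j$, I would use the decomposition
\[
\alpha\beta \rbra{\bra{0}^{\otimes(a+b)}\otimes I} W \rbra{\ket{0}^{\otimes(a+b)}\otimes I} - \sum_j y_j A_j = \beta\sum_j c_j^* d_j \rbra{\tilde A_j - A_j} + \sum_j \rbra{\beta c_j^* d_j - y_j} A_j,
\]
and bound the two pieces separately by the triangle inequality: the first by $\beta \rbra{\sum_j \abs{c_j^* d_j}} \epsilon_2 \le \beta\epsilon_2$, using $\sum_j \abs{c_j^* d_j}\le 1$ (Cauchy–Schwarz on the normalized amplitude vectors) together with $\Abs{\tilde A_j - A_j}\le\epsilon_2$; and the second by $\rbra{\sum_{j<m}\abs{\beta c_j^* d_j - y_j}}\alpha \le \alpha\epsilon_1$, using $\Abs{A_j}\le\alpha$ and the state-preparation-pair guarantee $\sum_{j<m}\abs{\beta c_j^* d_j - y_j}\le\epsilon_1$. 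Loosening $\beta\epsilon_2$ to $\alpha\beta\epsilon_2$ for $\alpha\ge 1$ then gives the stated total error $\alpha\epsilon_1 + \alpha\beta\epsilon_2$.

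The only delicate point I anticipate is bookkeeping rather than mathematics: keeping the $a$ block-encoding ancillas and the $b$ selection ancillas distinct throughout, so that the projector $\ket{0}^{\otimes(a+b)}$ factors correctly between $\mathrm{SEL}$ and the (un)prepare unitaries, and verifying that the vanishing condition $c_j^* d_j = 0$ for $j \ge m$ is precisely what discards the padding blocks. Everything else reduces to the routine two-term triangle-inequality estimate above.
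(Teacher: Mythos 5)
The paper never proves this lemma---it is imported verbatim as \cite[Lemma~29]{GSLW19}---and your prepare--select--unprepare construction with the two-term triangle-inequality estimate is precisely the original proof given there, so you are correct and on the same route. The only caveat (which the original proof shares, so it is not a gap you introduced) is that an $\rbra{\alpha,a,\epsilon_2}$-block-encoding only guarantees $\Abs{A_j} \leq \alpha + \epsilon_2$ rather than $\Abs{A_j} \leq \alpha$, and the final loosening $\beta\epsilon_2 \leq \alpha\beta\epsilon_2$ needs $\alpha \geq 1$, so the error strictly obtained is $\rbra{\alpha+\epsilon_2}\epsilon_1 + \beta\epsilon_2$; this is immaterial for the paper, which only ever invokes the lemma with $\epsilon_1 = \epsilon_2 = 0$.
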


The fourth subroutine is a specific version of one-bit precision phase estimation~\cite{Kitaev95}, often referred to as the Hadamard test~\cite{AJL09}, as stated in~\cite{GP22}:
\begin{lemma} [Hadamard test for block-encodings, adapted from {\cite[Lemma 9]{GP22}}] \label{lemma:hadamard}
    Suppose that unitary operator $U$ is a $\rbra{1, a, 0}$-block-encoding of an $n$-qubit operator $A$. 
    Then, we can implement a quantum circuit that, given an input of an $n$-qubit mixed quantum state $\rho$, outputs $0$ with probability $\frac{1}{2}+\frac{1}{2}\Real\sbra{\tr\rbra{A\rho}}$ (resp., $\frac{1}{2}+\frac{1}{2}\Imag\sbra{\tr\rbra{A\rho}}$), using $1$ query to controlled-$U$ and $O\rbra{1}$ one- and two-qubit gates. 

    Moreover, if an $\rbra{n+a}$-qubit unitary operator $\mathcal{O}$ prepares a purification of $\rho$, then, by combining \cref{lemma:qaa}, we can estimate $\tr\rbra{A\rho}$ to within additive error $\epsilon$ by using $O\rbra{1/\epsilon}$ queries to each of $U$ and $\mathcal{O}$ and $O\rbra{\rbra{n+a}/\epsilon}$ one- and two-qubit gates. 
\end{lemma}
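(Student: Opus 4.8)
The plan is to obtain the claimed measurement statistics from the standard Hadamard test applied to the block-encoding, and then to convert the measurement into a numerical estimate by amplitude estimation.

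First I would set up the circuit on one control qubit together with the $\rbra{n+a}$ qubits on which $U$ acts. Prepare the control qubit in $\ket{+}$, initialize the $a$ block-encoding ancillas in $\ket{0}^{\otimes a}$, supply the $n$-qubit input $\rho$, apply controlled-$U$ (controlled on the control qubit), and finish with a Hadamard on the control; measuring the control in the computational basis gives the $\Real$ variant. For the $\Imag$ variant, insert the phase gate $S^{\dagger} = \diag\rbra{1,-i}$ on the control between controlled-$U$ and the final Hadamard. This uses one query to controlled-$U$ and $O\rbra{1}$ additional gates, as claimed.

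To verify the output probability I would first take a pure input $\ket{\psi}$. Since $U$ is an exact $\rbra{1,a,0}$-block-encoding, $A = \rbra{\bra{0}^{\otimes a} \otimes I_n} U \rbra{\ket{0}^{\otimes a} \otimes I_n}$, and hence $\rbra{\bra{0}^{\otimes a}\bra{\psi}} U \rbra{\ket{0}^{\otimes a}\ket{\psi}} = \bra{\psi} A \ket{\psi}$. Computing the interference of the two control-qubit branches, the probability of outcome $0$ equals $\frac{1}{2} + \frac{1}{2}\Real\sbra{\bra{\psi}A\ket{\psi}}$ (resp.\ $\frac{1}{2} + \frac{1}{2}\Imag\sbra{\bra{\psi}A\ket{\psi}}$ with $S^{\dagger}$ inserted), where the only nontrivial check is the sign produced by $S^{\dagger}$. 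By linearity of the trace, writing a general state as $\rho = \sum_k p_k \ketbra{\psi_k}{\psi_k}$ extends this to $\frac{1}{2} + \frac{1}{2}\Real\sbra{\tr\rbra{A\rho}}$ (resp.\ the imaginary part), establishing the first claim.

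For the ``Moreover'' part I would make the whole procedure coherent. Using $\mathcal{O}$ to prepare a purification of $\rho$ and feeding its system register, together with fresh ancillas $\ket{0}^{\otimes a}$, into the Hadamard-test circuit turns the procedure into a unitary $V$ (acting on the control qubit as its first register and all remaining qubits as its second) with $V\ket{0}\ket{0} = \sqrt{p}\,\ket{0}\ket{\phi_0} + \sqrt{1-p}\,\ket{1}\ket{\phi_1}$ and $p = \frac{1}{2} + \frac{1}{2}\Real\sbra{\tr\rbra{A\rho}}$. Applying \cref{lemma:qaa} to $V$ with $M = O\rbra{1/\epsilon}$ returns $\tilde p$ with $\abs{\tilde p - p} \leq \frac{2\pi\sqrt{p\rbra{1-p}}}{M} + \frac{\pi^2}{M^2} = O\rbra{\epsilon}$ with constant probability; inverting the affine map $p \mapsto 2p-1$ then gives $\Real\sbra{\tr\rbra{A\rho}}$ to additive error $\epsilon$, and the $\Imag$ variant gives $\tr\rbra{A\rho}$ to additive error $\epsilon$. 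Since each use of $V$ (and $V^{\dagger}$) makes one query to each of $\mathcal{O}$ and controlled-$U$, the totals are $O\rbra{1/\epsilon}$ queries to each of $U$ and $\mathcal{O}$ and $O\rbra{\rbra{n+a}/\epsilon}$ gates. I expect no deep obstacle here; the only point requiring care is packaging the Hadamard test together with $\mathcal{O}$ into a state-preparation unitary of exactly the form demanded by \cref{lemma:qaa}, after which the error bound is immediate from the amplitude-estimation guarantee.
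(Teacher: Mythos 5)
Your proposal is correct and follows exactly the argument that this lemma rests on: the paper itself states the result by adaptation from \cite[Lemma 9]{GP22} without reproducing a proof, and the underlying argument there is precisely your construction — the one-control-qubit Hadamard test (with $S^{\dagger}$ inserted for the imaginary part) giving acceptance probability $\frac{1}{2}+\frac{1}{2}\Real\sbra{\tr\rbra{A\rho}}$ by linearity over a spectral decomposition of $\rho$, followed by packaging the test together with the purification unitary $\mathcal{O}$ into a state-preparation unitary and invoking \cref{lemma:qaa} with $M = O\rbra{1/\epsilon}$. The only point worth noting is the one you already flagged: amplitude estimation's Grover iterate needs controlled reflections but only uncontrolled uses of $V$ and $V^{\dagger}$ (via the conjugation trick), which is consistent with the paper's query model allowing controlled-$U$, $\mathcal{O}$, and their inverses.
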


\subsubsection{Samplizer and multi-samplizer}

We now introduce the notion of samplizer in \cite{WZ24b}, which helps us establish the sample complexity upper bound from the query complexity upper bound. 

\begin{definition} [Samplizer]
    A samplizer $\mathsf{Samplize}_*\ave{*}$ is a mapping that converts quantum query algorithms (quantum circuit families with query access to quantum unitary oracles) to quantum sample algorithms (quantum channel families with sample access to quantum states) such that: 
    For any $\delta > 0$, quantum query algorithm $\mathcal{A}^{U}$, and quantum state $\rho$, there exists a unitary operator $U_\rho$ that is a $\rbra{2, a, 0}$-block-encoding of $\rho$ for some $a > 0$, satisfying
    \[
    \Abs*{\mathsf{Samplize}_\delta\ave{\mathcal{A}^U}\sbra{\rho} - \mathcal{A}^{U_{\rho}}}_\diamond \leq \delta,
    \]
    where $\Abs{\cdot}_{\diamond}$ denotes the diamond norm and $\mathcal{E}\sbra{\rho}\rbra{\cdot}$ represents a quantum channel $\mathcal{E}$ with sample access to $\rho$. 
\end{definition}

Then, we include an efficient implementation of the samplizer in \cite{WZ24b}, which is based on quantum principal component analysis \cite{LMR14,KLL+17} and generalizes \cite[Corollary 21]{GP22}. 

\begin{lemma} [Optimal samplizer, {\cite[Theorem III.1]{WZ24b}}] \label{lemma:samplizer}
    There is a samplizer $\mathsf{Samplize}_*\ave{*}$ such that for $\delta > 0$ and quantum query algorithm $\mathcal{A}^{U}$ with query complexity $Q$, the implementation of $\mathsf{Samplize}_\delta\ave{\mathcal{A}^U}\sbra{\rho}$ uses $O\rbra{\frac{Q^2}{\delta}\log^2\rbra{\frac{Q}{\delta}}}$ samples of $\rho$.
    Moreover, if $\rho$ is an $n$-qubit quantum state, then the implementation uses $O\rbra{\frac{Q^2n}{\delta}\log^2\rbra{\frac{Q}{\delta}}}$ additional one- and two-qubit gates. 
\end{lemma}

For our purpose, we need the notion of multi-samplizer, which extends the samplizer. 
The notion of multi-samplizer was implicitly used in \cite{GP22,WZ24,LWWZ24}, and was later used in \cite{WZ24c} to optimally estimate the trace distance and fidelity between pure quantum states. 

\begin{definition} [Multi-samplizer]
    A $k$-samplizer $\mathsf{Samplizer}_{*}\ave{*}$ is a mapping that converts quantum query algorithms to quantum sample algorithms such that: 
    For any $\delta > 0$, $k$ quantum states $\rho_1, \rho_2, \dots, \rho_k$, and a quantum query algorithm $\mathcal{A}^{U_1, U_2, \dots, U_k}$ that makes queries to $k$ quantum unitary oracles $U_1, U_2, \dots, U_k$, 
    there exists $k$ unitary operators $U_{\rho_1}, U_{\rho_2}, \dots, U_{\rho_k}$ that are respectively $\rbra{2, a, 0}$-block-encoding of $\rho_1, \rho_2, \dots, \rho_k$ for some $a > 0$, satisfying
    \[
    \Abs*{ \mathsf{Samplizer}_{\delta}\ave{\mathcal{A}^{U_1, U_2, \dots, U_k}}\sbra{\rho_1, \rho_2, \dots, \rho_k} - \mathcal{A}^{U_{\rho_1}, U_{\rho_2}, \dots, U_{\rho_k}} }_\diamond \leq \delta.
    \]
\end{definition}

Similar to the construction of the multi-samplizer for pure states in \cite{WZ24c}, we can obtain a multi-samplizer for the general case as follows: 

\begin{lemma} [Optimal multi-samplizer, adapted from {\cite[Theorem 2.2]{WZ24c}} and see also {\cite[Theorem 2.23]{BGW25}}] \label{lemma:multi-samplizer}
    There is a $k$-samplizer $\mathsf{Samplize}_*\ave{*}$ such that for $\delta > 0$ and quantum query algorithm $\mathcal{A}^{U_1, U_2, \dots, U_k}$ that uses $Q_j$ queries to $U_j$ for each $j$, the implementation of $\mathsf{Samplize}_{\delta}\ave{\mathcal{A}^{U_1, U_2, \dots, U_k}}\sbra{\rho_1, \rho_2, \dots, \rho_k}$ uses $O\rbra{\frac{Q Q_j}{\delta}\log^2\rbra{\frac{Q}{\delta}}}$ samples of $\rho_j$ for each $j$, where $Q = \sum_{j \in \sbra{k}} Q_j$. 
    Moreover, if $\rho_1, \rho_2, \dots, \rho_k$ are $n$-qubit quantum states, then it uses $O\rbra{\frac{Q^2n}{\delta}\log^2\rbra{\frac{Q}{\delta}}}$ additional one- and two-qubit gates. 
\end{lemma}

It is worth noting that the near-optimality of \cref{lemma:multi-samplizer} is implied by \cite[Theorem 2.3]{WZ24c} for constant $k \geq 1$. 

Recently, another type of samplizer was discovered in \cite{TWZ25} (cf.\ \cite{CWZ25}) with better sample complexity but with worse time complexity. 

\begin{lemma} [Adapted from {\cite[Theorem 1.5]{TWZ25}}] \label{lemma:twz-samplizer}
    Given a quantum algorithm $\mathcal{A}$ that uses $Q$ queries to the state-preparation circuits of $n$-qubit quantum states and $T$ one- and two-qubit gates, there is a quantum algorithm that is $0.01$-close in trace norm distance to $\mathcal{A}$, using $O\rbra{Q^2}$ samples of the quantum states and $T+\widetilde{O}\rbra{Q^8\poly\rbra{n}}$ one- and two-qubit gates. 
\end{lemma}

The number of gates in \cref{lemma:twz-samplizer} is due to the use of the current best implementation of the Schur transform given in \cite{BFG+25}. 

\cref{lemma:twz-samplizer} further leads to a tight quantum sample-to-query lifting (see also \cite{CWZ25}), improving the results in \cite{WZ23}. 

\begin{lemma}[Quantum sample-to-query lifting, cf.\ {\cite[Theorem 1.1]{CWZ25}}] \label{lemma:lifting}
    If testing a property of quantum states requires sample complexity at least $S$ to solve with success probability at least $2/3$, then any quantum algorithm that solves this problem with success probability at least $2/3$ using queries to the state-preparation circuits of the quantum states requires query complexity $\Omega\rbra{\sqrt{S}}$. 
\end{lemma}

\section{Efficient quantum algorithms for estimating quantum Schatten power distance}

In this section, we present efficient quantum algorithms for estimating the quantum Schatten $\alpha$-power distance $\Lalpha(\rho_0,\rho_1)$ for \emph{constant} $\alpha >0$. These algorithms use either queries to state-preparation circuits or samples of the states $\rho_0$ and $\rho_1$, where the quantum sample algorithms are obtained by combining the corresponding quantum query algorithms with the (multi-)samplizer techniques in~\cite{WZ24c,TWZ25}:
\begin{itemize}
    \item In \Cref{subsec:algo-Schatten-alpha>1}, we focus on the regime $\alpha>1$, where the quantum Schatten $\alpha$-power distance is a powered version of the Schatten $\alpha$-norm distance up to a constant multiplicative factor depending on $\alpha$. Remarkably, our quantum query and sample algorithms for estimating $\Lalpha(\rho_0,\rho_1)$, and thus $\Talpha(\rho_0,\rho_1)$, are \emph{rank-independent}. Consequently, we establish a \BQP{} containment of the promise problem \QSDalpha{} introduced in \Cref{sec:easy-regime-hardness}. 
    \item In \Cref{subsec:algo-Schatten-0<alpha<1}, we consider the regime $0<\alpha<1$ and provide \emph{rank-efficient} quantum query and sample algorithms for estimating the quantum Schatten $\alpha$-power distance. 
\end{itemize}

The core of our approach for the easy regime $\alpha>1$ is an \textit{efficient} uniform approximation to \textit{signed} positive constant-power functions, developed in \Cref{subsec:efficient-uniform-approx-signed-pos-powers}. This approximation provides a uniform error bound over the entire interval $[-1,1]$ and is of independent interest.

\subsection{Efficient uniform approximations of signed positive powers}
\label{subsec:efficient-uniform-approx-signed-pos-powers}

Leveraging the averaged Chebyshev truncation specified in \Cref{subsubsec:Chebyshev-truncation}, we provide an \textit{efficiently computable} uniform polynomial approximation of \textit{signed} positive constant powers: 

\begin{lemma}[Efficient uniform polynomial approximation of signed positive powers]
    \label{lemma:computable-uniformPolyApprox-signedPositivePower}
    Let $\alpha$ be a positive real constant number. For any $\epsilon \in (0,1/2)$, there is a degree-$d$ polynomial $P_d \in \bbR[x]$, where $d = \ceil*{\rbra*{\beta_{\alpha}'/\epsilon}^{1/\alpha}}$ and $\beta'_{\alpha}$ is a constant depending on $\alpha$, that can be deterministically computed in $\widetilde{O}(d)$ time. For sufficiently small $\epsilon$, it holds that:
    \[ \max_{x \in [-1,1]} \abs*{ \frac{1}{2} \sgn(x) |x|^{\alpha} - P_d(x) }\leq \epsilon \quad\text{and}\quad \max_{x \in [-1,1]} \abs*{P_d(x)} \leq 1.\]
\end{lemma}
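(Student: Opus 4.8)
\quad The plan is to realize $P_d$ as an averaged Chebyshev truncation of $g(x)\coloneqq\frac12\sgn(x)\abs{x}^\alpha$, drawing the approximation quality from the existence of a good best uniform approximation (\Cref{lemma:bestUniformApprox-signedPositivePower}) together with the de La Vall\'{e}e Poussin smoothing bound (\Cref{lemma:averaged-Chebyshev-truncation}), and then verifying separately that the truncation's coefficients are computable in $\widetilde O(d)$ time.

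First I would note that $g$ is continuous on $[-1,1]$ (in fact H\"{o}lder of exponent $\min(\alpha,1)$), so it admits a Chebyshev expansion and \Cref{lemma:averaged-Chebyshev-truncation} applies. Because $g$ is odd while $T_k$ has parity $(-1)^k$, only odd-index coefficients survive, and for odd $k$ the substitution $x=\cos\theta$ gives $\innerprodF{T_k}{g}=\frac{2}{\pi}\int_0^{\pi/2}\cos(k\theta)\cos^\alpha\theta\,\dd\theta$, which evaluates in closed form to $\frac{\Gamma(\alpha+1)}{2^\alpha\,\Gamma(1+\frac{\alpha+k}{2})\Gamma(1+\frac{\alpha-k}{2})}$. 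The key point for efficiency is that consecutive odd coefficients obey the two-term recurrence $c_{k+2}=\frac{\alpha-k}{\alpha+k+2}\,c_k$, seeded by a single Beta-function value $c_1$; the averaged-truncation coefficients $\hat c_k$ are then obtained from the $c_k$ by the rescaling in \Cref{eq:averaged-Chebyshev-truncation}. Hence all of $\hat c_0,\dots,\hat c_d$ can be generated with $O(d)$ arithmetic operations at $O(\log(d/\epsilon))$ bits of precision, giving $\widetilde O(d)$ time, exactly as in the now-standard computations of \cite{LGLW23,LW25}.

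Next I would fix the degree and check the two claimed bounds. Applying \Cref{lemma:bestUniformApprox-signedPositivePower} to $f(x)=\sgn(x)\abs{x}^\alpha$ yields a degree-$m$ best uniform error $\varepsilon_m(f)=O(1/m^\alpha)$, hence $\varepsilon_m(g)=\frac12\varepsilon_m(f)$; by \Cref{lemma:averaged-Chebyshev-truncation}, the averaged truncation $\hat P_{2m-1}$ of $g$ built from truncation parameter $m$ satisfies $\max_{x\in[-1,1]}\abs*{g(x)-\hat P_{2m-1}(x)}\le 4\varepsilon_m(g)=O(1/m^\alpha)$. Choosing $m=\Theta\rbra*{(1/\epsilon)^{1/\alpha}}$ drives this below $\epsilon$ and fixes the degree $d=2m-1=\ceil*{(\beta'_\alpha/\epsilon)^{1/\alpha}}$ for a constant $\beta'_\alpha$ absorbing the factors $2^\alpha$ and $4$. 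The uniform bound is then immediate: since $\max_x\abs{g(x)}=\frac12$, the triangle inequality gives $\max_x\abs*{\hat P_{2m-1}(x)}\le\frac12+\epsilon<1$ whenever $\epsilon\in(0,1/2)$, so setting $P_d\coloneqq\hat P_{2m-1}$ completes the construction.

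I expect the main obstacle to be the $\widetilde O(d)$-time computability, and within it the coefficient evaluation. One must verify rigorously that the integral $\int_0^{\pi/2}\cos(k\theta)\cos^\alpha\theta\,\dd\theta$ reduces to the stated Gamma-ratio and that the recurrence is exact, including the non-integer-$\alpha$ regime where one should check that no coefficient accidentally vanishes and no Gamma pole is encountered, and then carry out the routine finite-precision bookkeeping confirming that rounding each of the $O(d)$ coefficients degrades the final $\epsilon$-approximation by at most a constant factor (compensated by a constant-factor increase in $\beta'_\alpha$). By contrast, once the two cited approximation lemmas are in hand, the approximation-quality and boundedness claims follow essentially mechanically.
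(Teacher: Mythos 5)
Your proposal is correct and follows essentially the same route as the paper's proof: best uniform approximation of $\sgn(x)|x|^{\alpha}$ (\Cref{lemma:bestUniformApprox-signedPositivePower}) combined with the de La Vall\'{e}e Poussin averaged Chebyshev truncation (\Cref{lemma:averaged-Chebyshev-truncation}) for the error and boundedness claims, and oddness plus the two-term coefficient recurrence $c_{2l+1} = \frac{\alpha-2l+1}{\alpha+2l+1}\,c_{2l-1}$ for the $\widetilde{O}(d)$ computability. Your explicit Gamma-ratio closed form for all odd coefficients and the finite-precision remark are slightly more detailed than the paper's treatment (which states only $c_1$ and the recurrence), but the substance is identical.
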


\begin{proof}
    Let $f(x) \coloneqq \frac{1}{2} \sgn(x) |x|^{\alpha}$. 
    For any $\tilde{\epsilon} \in (0,1/8)$, using the uniform approximation of signed positive powers (\Cref{lemma:bestUniformApprox-signedPositivePower}), we obtain the degree-$\tilde{d}$ best uniform polynomial approximation $P_{\tilde{d}}^*(x)$, where $\tilde{d} = \ceil*{\rbra*{\beta_{\alpha}/\tilde{\epsilon}}^{1/\alpha}}$ and $\beta_{\alpha}$ is a constant depending on $\alpha$, such that
    \begin{equation}
        \label{eq:signedPositivePower-errorBound}
        \max_{x\in [-1,1]} \abs*{\frac{1}{2} \sgn(x) |x|^{\alpha} - P_{\tilde{d}}^*(x)} \leq \tilde{\epsilon} \quad\text{and}\quad \max_{x \in [-1,1]} \abs*{P^*_{\tilde{d}}(x)} \leq \frac{1}{2}+\tilde{\epsilon}. 
    \end{equation}

    Next, we consider the degree-$\tilde{d}$ averaged Chebyshev truncation (\Cref{eq:averaged-Chebyshev-truncation}) of $f(x)$. In particular, let $d \coloneqq 2\tilde{d}-1 = \ceil*{\rbra*{\beta_{\alpha}'/\epsilon}^{1/\alpha}}$, where $\beta'_{\alpha}$ is another constant depending on $\alpha$ and $\epsilon$ will be specified later. We obtain the following degree-$d$ polynomial: 
    \begin{equation}
        \label{eq:fullrange-positivePower-poly}
        P_d(x) = \frac{\hat{c}_0}{2} + \sum_{k=1}^{d} \hat{c}_k T_k(x), \text{ where } \hat{c}_k \coloneqq \begin{cases}
        c_k,& 0\leq k \leq \tilde{d}\\
        \frac{2\tilde{d}-k}{\tilde{d}} c_k,& k > \tilde{d}
        \end{cases} \quad\text{and}\quad c_k \coloneqq \innerprodF{T_k}{f}.  
    \end{equation}

    Using the asymptotically best uniform approximation by averaged Chebyshev truncation (\Cref{lemma:averaged-Chebyshev-truncation}) and \Cref{eq:signedPositivePower-errorBound}, we can derive that $P_d(x)$ satisfies the following: 
    \[ \max_{x\in [-1,1]} \abs*{\frac{1}{2} \sgn(x) |x|^{\alpha} - P_{d}(x)} \leq 4\tilde{\epsilon} \coloneqq \epsilon \quad\text{and}\quad \max_{x \in [-1,1]} \abs*{P_{d}(x)} \leq \frac{1}{2}+4\tilde{\epsilon} = \frac{1}{2}+\epsilon < 1.  \]

    It is left to prove that $P_d(x)$ can be computed in deterministic time $\widetilde{O}(d)$. As $f(x)$ is an odd function, a direct calculation implies that the Chebyshev coefficient $\{c_k\}_{0 \leq k \leq d}$ in \Cref{eq:fullrange-positivePower-poly} satisfy $c_k=0$ for all even $k$, and the following equalities hold for odd $k$: 
    \begin{align*}
        c_{2l+1} = c_{2l-1} \cdot \frac{\alpha-2l+1}{\alpha+2l+1} \quad\text{and}\quad
        c_{1} = \frac{2}{\pi} \int_{-1}^{1} \frac{\frac{1}{2} \sgn(x) |x|^{\alpha} \cdot T_{1}(x)}{\sqrt{1-x^2}} \dx =  \frac{2}{\sqrt{\pi}} \cdot \frac{\Gamma\rbra*{\frac{1}{2}(\alpha+2)}}{\Gamma\rbra*{\frac{1}{2}(\alpha+3)}}.
    \end{align*}
    Here, the Gamma function $\Gamma(x) \coloneqq \int_{0}^{\infty} t^{x-1} e^{-t} \dd t$ for any $x > 0$. 

    Therefore, the averaged Chebyshev coefficient $\{\hat{c}_k\}_{0 \leq k \leq d}$ can be recursively computed in deterministic time $\widetilde{O}(d)$. 
    We finish the proof by observing that the Chebyshev polynomials $\cbra*{T_k(x)}_{0\leq k \leq d}$ can also be recursively computed in deterministic time $\widetilde{O}(d)$.
\end{proof}

\subsection{Quantum Schatten \texorpdfstring{$\alpha$}{}-power distance estimation for \texorpdfstring{$\alpha>1$}{}}
\label{subsec:algo-Schatten-alpha>1}

\subsubsection{Rank-independent quantum query algorithm}

We now provide efficient quantum query algorithms for estimating $\Lalpha(\rho_0,\rho_1)$ and $\Talpha(\rho_0,\rho_1)$: 

\begin{lemma}[Quantum Schatten $\alpha$-power distance estimation via queries]
    \label{lemma:powered-Lalpha-query-algo}
    Suppose that $Q_0$ and $Q_1$ are unitary operators that prepare purifications of mixed quantum states $\rho_0$ and $\rho_1$, respectively. 
    For every fixed real constant $\alpha>1$, there is a quantum query algorithm that estimates $\Lalpha(\rho_0,\rho_1)$ to within additive error $\epsilon$ using $O\rbra{1/{\epsilon^{1+\frac{1}{\alpha-1}}}}$ queries to $Q_0$ and $Q_1$.
\end{lemma}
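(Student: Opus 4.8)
The plan is to realize, as stated in the overview, the identity
\[
2\Lalpha(\rho_0,\rho_1) = \tr\rbra*{\nu_- \cdot \sgn(\nu_-) \cdot \abs{\nu_-}^{\alpha-1}} = \tr\rbra*{\rho_0 \cdot \sgn(\nu_-)\abs{\nu_-}^{\alpha-1}} - \tr\rbra*{\rho_1 \cdot \sgn(\nu_-)\abs{\nu_-}^{\alpha-1}}, \quad \nu_- \coloneqq \rho_0 - \rho_1,
\]
and to estimate each of the two traces via a Hadamard test after implementing a block-encoding of a scaled copy of $\sgn(\nu_-)\abs{\nu_-}^{\alpha-1}$ by \Cref{lemma:qsvt}. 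The crucial point — and the source of the rank-independence — is that the transforming function $\sgn(x)\abs{x}^{q}$ with $q = \alpha - 1 \geq \Omega(1)$ is \emph{continuous} on $[-1,1]$ (since $q>0$), so \Cref{lemma:computable-uniformPolyApprox-signedPositivePower} supplies a good \emph{uniform} polynomial approximation, and this uniform control is what lets the approximation error propagate through the trace with a bound governed only by $\Abs{\nu_-}_1 = 2\td(\rho_0,\rho_1) \leq 2$, a quantity carrying no rank parameter.

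Concretely, I would proceed as follows. First, invoke \Cref{lemma:computable-uniformPolyApprox-signedPositivePower} with order $q = \alpha - 1$ to obtain an efficiently computable \emph{odd} polynomial $P_d$ of degree $d = O\rbra{(1/\tilde\epsilon)^{1/(\alpha-1)}}$ with $\max_{x\in[-1,1]}\abs*{P_d(x) - \tfrac12\sgn(x)\abs{x}^{\alpha-1}} \leq \tilde\epsilon$ and $\abs{P_d(x)} \leq \tfrac12 + \tilde\epsilon$; a harmless rescaling by $1/(1+2\tilde\epsilon)$ then enforces the normalization $\abs{P_d} \leq \tfrac12$ required by \Cref{lemma:qsvt}, to be compensated by a matching constant at the end. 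Second, use \Cref{lemma:block-encoding-of-density-operators} to get exact $(1,\cdot,0)$-block-encodings of $\rho_0$ and $\rho_1$ from $Q_0,Q_1$, and combine them via \Cref{lemma:lcu} with coefficients $(\tfrac12,-\tfrac12)$ into an exact $(1,\cdot,0)$-block-encoding of $\nu_-/2$, whose eigenvalues lie in $[-\tfrac12,\tfrac12]\subseteq[-1,1]$. Third, apply \Cref{lemma:qsvt} with the odd polynomial $P_d$: because $P_d$ has definite (odd) parity, QSVT transforms the \emph{eigenvalues} of $\nu_-/2$ respecting their sign, yielding a $(1,\cdot,\delta)$-block-encoding of $P_d(\nu_-/2)$; since the input block-encoding is exact, the only error is the tunable $\delta$, driven negligibly small in $\poly(d,\log(1/\delta))$ time. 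Fourth, run the Hadamard test of \Cref{lemma:hadamard} combined with amplitude estimation (\Cref{lemma:qaa}) on this block-encoding against the inputs $\rho_0$ and $\rho_1$ to estimate $t_b \coloneqq \tr(\rho_b\, P_d(\nu_-/2))$ to additive error $\epsilon''$ for each $b\in\cbra{0,1}$, and output $2^{\alpha-1}(t_0 - t_1)$ (up to the rescaling constant).

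For correctness, note $P_d(\nu_-/2) \approx \tfrac12\sgn(\nu_-)\abs{\nu_-/2}^{\alpha-1} = 2^{-\alpha}\sgn(\nu_-)\abs{\nu_-}^{\alpha-1}$, so $t_0 - t_1 = \tr(\nu_- P_d(\nu_-/2)) \approx 2^{-\alpha}\tr(\abs{\nu_-}^{\alpha}) = 2^{-(\alpha-1)}\Lalpha$, whence $2^{\alpha-1}(t_0-t_1) \approx \Lalpha$. The polynomial-approximation error is at most $2^{\alpha-1}\abs*{\tr\rbra{\nu_-(P_d - \tfrac12\sgn\abs{\cdot}^{\alpha-1})(\nu_-/2)}} \leq 2^{\alpha-1}\Abs{\nu_-}_1\cdot\tilde\epsilon \leq 2^{\alpha}\tilde\epsilon$, which is $O(\tilde\epsilon)$ for $\alpha=O(1)$, while the statistical error contributes $2^{\alpha}\epsilon''$; taking $\tilde\epsilon,\epsilon''=\Theta(\epsilon)$ keeps the total additive error at most $\epsilon$. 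The query count follows by composition: amplitude estimation makes $O(1/\epsilon'')$ calls to the QSVT circuit, each making $O(d)$ calls to the block-encoding of $\nu_-/2$, each using $O(1)$ queries to $Q_0,Q_1$, for a total of $O(d/\epsilon'') = O\rbra{(1/\epsilon)^{1/(\alpha-1)}\cdot 1/\epsilon} = O\rbra{1/\epsilon^{1+\frac{1}{\alpha-1}}}$, as claimed.

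The hard part will be the error analysis, specifically certifying that the bound is genuinely rank-independent. The key move is to route the approximation error through the trace by Hölder's inequality $\abs{\tr(AB)} \leq \Abs{A}_1\Abs{B}$, so that it is controlled by $\Abs{\nu_-}_1 \leq 2$ rather than by any spectral count, and to confirm that it is precisely the \emph{uniform} (rather than merely pointwise or $L^2$) approximation from \Cref{lemma:computable-uniformPolyApprox-signedPositivePower} that validates the operator-norm bound $\Abs*{(P_d - \tfrac12\sgn\abs{\cdot}^{\alpha-1})(\nu_-/2)} \leq \tilde\epsilon$ across the entire spectrum — including eigenvalues of $\nu_-$ near $0$, where the transforming function, though continuous, is not smooth; this is exactly where the requirement $\alpha>1$ enters.
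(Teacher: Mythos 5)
Your proposal is correct and follows essentially the same route as the paper's proof: the same identity $2\Lalpha = \tr\rbra{\rho_0\,\sgn(\nu_-)\abs{\nu_-}^{\alpha-1}} - \tr\rbra{\rho_1\,\sgn(\nu_-)\abs{\nu_-}^{\alpha-1}}$, the same pipeline of \cref{lemma:block-encoding-of-density-operators}, \cref{lemma:lcu}, \cref{lemma:qsvt} with the polynomial from \cref{lemma:computable-uniformPolyApprox-signedPositivePower}, and \cref{lemma:hadamard}, with the same rank-independent error analysis (your H\"older-based bound via $\Abs{\nu_-}_1 \leq 2$ is exactly the paper's spectral-decomposition estimate in disguise) and the same query count $O\rbra{d/\epsilon_H} = O\rbra{1/\epsilon^{1+\frac{1}{\alpha-1}}}$. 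The only cosmetic differences are that the paper enforces the QSVT normalization by taking $P \coloneqq \frac{1}{2}P$ rather than your $1/(1+2\tilde\epsilon)$ rescaling, and your appeal to the polynomial's odd parity is unnecessary for \cref{lemma:qsvt} (Hermitian eigenvalue transformation), though harmless since the polynomial is indeed odd.
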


By combining \Cref{prop:Talpha-vs-powered-Talpha} with \Cref{lemma:powered-Lalpha-query-algo} for additive error $\epsilon^{\alpha}$, we obtain a quantum query algorithm for estimating $\Talpha(\rho_0,\rho_1)$ when $\alpha>1$ is a fixed real constant: 

\begin{theorem}[Quantum Schatten $\alpha$-norm distance estimation via queries]
    \label{thm:Lalpha-query-algo}
    Suppose that $Q_0$ and $Q_1$ are unitary operators that prepare purifications of mixed quantum states $\rho_0$ and $\rho_1$, respectively. 
    For every fixed real constant $\alpha>1$, there is a quantum query algorithm that estimates $\Talpha(\rho_0,\rho_1)$ to within additive error $\epsilon$ by using $O\rbra{1/{\epsilon^{\alpha+1+\frac{1}{\alpha-1}}}}$ queries to $Q_0$ and $Q_1$.
\end{theorem}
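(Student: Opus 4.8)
The plan is to reduce the estimation of $\Talpha(\rho_0,\rho_1)$ to the estimation of its powered version $\Lalpha(\rho_0,\rho_1)$, for which \Cref{lemma:powered-Lalpha-query-algo} already supplies a query-efficient algorithm, and then to transfer the accuracy guarantee through the algebraic relation established in \Cref{prop:Talpha-vs-powered-Talpha}. The only genuine choice is how to set the target error for the powered estimate so that the resulting error on $\Talpha$ is at most $\epsilon$ while keeping the query cost under control.

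First I would invoke \Cref{lemma:powered-Lalpha-query-algo} with target additive error $\epsilon' \coloneqq \epsilon^{\alpha}$ to obtain an estimate $x$ of $\Lalpha(\rho_0,\rho_1)$ satisfying $\abs{\Lalpha(\rho_0,\rho_1) - x} \leq \epsilon^{\alpha}$, using $O\rbra{1/(\epsilon')^{1+\frac{1}{\alpha-1}}}$ queries to $Q_0$ and $Q_1$. Substituting $\epsilon' = \epsilon^{\alpha}$ and simplifying the exponent gives $\alpha\rbra{1 + \frac{1}{\alpha-1}} = \alpha + \frac{\alpha}{\alpha-1} = \alpha + 1 + \frac{1}{\alpha-1}$, which is exactly the claimed query complexity $O\rbra{1/\epsilon^{\alpha+1+\frac{1}{\alpha-1}}}$.

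Next I would apply the post-processing from \Cref{prop:Talpha-vs-powered-Talpha}: output $2^{\frac{1}{\alpha}-1} x^{1/\alpha}$ as the estimate of $\Talpha(\rho_0,\rho_1)$. By that proposition, this quantity approximates $\Talpha(\rho_0,\rho_1)$ to within additive error $2^{\frac{1}{\alpha}-1}(\epsilon')^{1/\alpha} = 2^{\frac{1}{\alpha}-1}\epsilon \leq \epsilon$, where the final inequality uses $2^{\frac{1}{\alpha}-1} \leq 1$ for all $\alpha \geq 1$. This classical post-processing consumes no additional queries, so the overall query complexity is unchanged and the theorem follows.

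I do not expect a real obstacle in this step: the substance of the result lives entirely in \Cref{lemma:powered-Lalpha-query-algo}, whose proof combines the efficient uniform polynomial approximation of signed positive powers (\Cref{lemma:computable-uniformPolyApprox-signedPositivePower}) with QSVT, LCU, and the Hadamard test. The only points to verify are that the exponent arithmetic is exact and that, since $\alpha \geq 1+\Omega(1)$ is constantly large, the prefactor $2^{\frac{1}{\alpha}-1}$ and the error amplification from $\epsilon^{\alpha}$ back to $\epsilon$ introduce no hidden blow-up in $\alpha$; both are benign, so the estimate combines cleanly from \Cref{lemma:powered-Lalpha-query-algo,prop:Talpha-vs-powered-Talpha}.
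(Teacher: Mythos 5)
Your proposal is correct and takes exactly the same route as the paper, which derives \Cref{thm:Lalpha-query-algo} in one line by invoking \Cref{lemma:powered-Lalpha-query-algo} with additive error $\epsilon^{\alpha}$ and post-processing via \Cref{prop:Talpha-vs-powered-Talpha}. Your exponent arithmetic $\alpha\rbra*{1+\frac{1}{\alpha-1}} = \alpha+1+\frac{1}{\alpha-1}$ and the observation that the prefactor $2^{\frac{1}{\alpha}-1} \leq 1$ keeps the final error at most $\epsilon$ are both exactly what the paper's reduction relies on.
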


\begin{algorithm}[!ht]
    \caption{Quantum algorithm for estimating $\Lalpha(\rho_0,\rho_1)$ for constant $\alpha > 1$ (query access).}
    \label{algo:q-Tsallis-estimation-query}
    \begin{algorithmic}[1]
        \Require $\rbra{n+a}$-qubit quantum unitary oracles $Q_0$ and $Q_1$ that prepare purifications of $n$-qubit mixed quantum states $\rho_0$ and $\rho_1$, respectively; $\alpha>1$; $\epsilon_p, \delta_p, \epsilon_H \in \rbra{0, 1}$.
        \Ensure 
        An estimate of $\Lalpha(\rho_0,\rho_1)$ with high probability. 

        \State Implement a unitary operator $U_\nu$ that is a $\rbra{2, n+a+1, 0}$-block-encoding of $\nu \coloneqq \rho_0 - \rho_1$, using $O\rbra{1}$ queries to $Q_0$ and $Q_1$ (by \cref{lemma:block-encoding-of-density-operators,lemma:lcu}). 

        \State Let $P\rbra{x}$ be a polynomial of degree $d = O\rbra{{1}/{\epsilon_p^{\frac{1}{\alpha-1}}}}$ such that $\max_{x \in \sbra{0, 1}} \abs{P\rbra{x} - \frac{1}{2} \sgn\rbra{x} \abs{x}^{\alpha-1}} \leq \epsilon_p$ and $\max_{x \in \sbra{-1, 1}} \abs{P\rbra{x}} \leq 1$ (by \cref{lemma:computable-uniformPolyApprox-signedPositivePower}).

        \State Implement a unitary operator $U_{P\rbra{\nu}}$ that is a $\rbra{1, n+a+3, \delta_p}$-block-encoding of $\frac{1}{2}P\rbra{\nu/2}$, using $O\rbra{d}$ queries to $U_{\nu}$ (by \cref{lemma:qsvt}).

        \State Let $\tilde x_0$ be an $\epsilon_H$-estimate of $\tr\rbra{ \bra{\bar 0} U_{P\rbra{\nu}} \ket{\bar 0} \rho_0}$, where $\ket{\bar 0} = \ket{0}^{\otimes\rbra{n+a+3}}$, using $O\rbra{1/\epsilon_H}$ queries to $U_{P\rbra{\nu}}$ and $Q_0$ (by \cref{lemma:hadamard}).

        \State Let $\tilde x_1$ be an $\epsilon_H$-estimate of $\tr\rbra{ \bra{\bar 0} U_{P\rbra{\nu}} \ket{\bar 0} \rho_1}$, where $\ket{\bar 0} = \ket{0}^{\otimes\rbra{n+a+3}}$, using $O\rbra{1/\epsilon_H}$ queries to $U_{P\rbra{\nu}}$ and $Q_1$ (by \cref{lemma:hadamard}).

        \State \Return $2^{\alpha}\rbra{\tilde x_0 - \tilde x_1}$.
    \end{algorithmic}
\end{algorithm}

\begin{proof}[Proof of \Cref{lemma:powered-Lalpha-query-algo}]
    Our approach extends the equality for $\alpha=1$~\cite{WZ24} to broader settings, focusing on a fixed real constant $\alpha>1$.
    Specifically, we consider $ 2 \Lalpha(\rho_0,\rho_1) 
    = \Abs{\rho_0 - \rho_1}_{\alpha}^{\alpha} $: 
    \begin{equation}
        \label{eq:poweredLalpha-decomposition}
        \Abs{\rho_0 - \rho_1}_{\alpha}^{\alpha} 
        = \tr\rbra*{\rho_0 \cdot \sgn\rbra{\rho_0 - \rho_1} \cdot \abs*{\rho_0 - \rho_1}^{\alpha-1}} - \tr\rbra*{\rho_1 \cdot \sgn\rbra{\rho_0 - \rho_1} \cdot \abs*{\rho_0 - \rho_1}^{\alpha-1}}.
    \end{equation}
    Consequently, the task reduces to separately estimating the terms $\tr\rbra{\rho_0 \cdot \sgn\rbra{\rho_0 - \rho_1} \cdot \abs{\rho_0 - \rho_1}^{\alpha-1}}$ and $\tr\rbra{\rho_1 \cdot \sgn\rbra{\rho_0 - \rho_1} \cdot \abs{\rho_0 - \rho_1}^{\alpha-1}}$. 
    
    Suppose that $Q_0$ and $Q_1$ are $\rbra{n+a}$-qubit unitary operators that prepare purifications of $\rho_0$ and $\rho_1$, respectively. 
    We present our quantum algorithm as follows with its formal description given in \cref{algo:q-Tsallis-estimation-query}.

    \paragraph{Step 1: Construct a block-encoding of $\nu \coloneqq \rho_0 - \rho_1$.} This is a standard step achieved by block-encoding density operators and LCU. 
    By \cref{lemma:block-encoding-of-density-operators}, for $b \in \cbra{0, 1}$, we can implement a unitary operator $U_{\rho_b}$ that is $\rbra{1, n+a, 0}$-block-encoding of $\rho_b$, respectively, by using $O\rbra{1}$ queries to $Q_b$. 
    Note that $\rbra{HX, H}$ is a $\rbra{2, 1, 0}$-state-preparation-pair for $y = \rbra{1, -1}$, where $H$ is the Hadamard gate and $X$ is the Pauli-X gate. 
    By \cref{lemma:lcu}, we can implement a unitary operator $U_{\nu}$ that is a $\rbra{2, n+a+1, 0}$-block-encoding of $\nu \coloneqq \rho_0 - \rho_1$. 

    \paragraph{Step 2: Construct a block-encoding of $\sgn\rbra{\nu} \cdot \abs{\nu}^{\alpha-1}$.}
    Let $\epsilon_p, \delta_p \in \rbra{0, 1/2}$ be parameters to be determined. 
    By \cref{lemma:computable-uniformPolyApprox-signedPositivePower}, there is a polynomial $P \in \mathbb{R}\sbra{x}$ of degree $d = O\rbra{1/\epsilon_p^{\frac{1}{\alpha-1}}}$ such that $\max_{x \in \sbra{-1, 1}} \abs{P\rbra{x} - \frac{1}{2} \sgn\rbra{x} \abs{x}^{\alpha-1}} \leq \epsilon_p$ and $\max_{x \in \sbra{-1, 1}} \abs{P\rbra{x}} \leq 1$.
    By \cref{lemma:qsvt} with $P_\bullet \coloneqq \frac{1}{2}P$, $\alpha \coloneqq 1$, $a \coloneqq n+a+1$, $\epsilon \coloneqq 0$ and $d \coloneqq O\rbra{{1}/{\epsilon_p^{\frac{1}{\alpha-1}}}}$, we can implement a quantum circuit $U_{P\rbra{\nu}}$ that is a $\rbra{1, n+a+3, \delta_p}$-block-encoding of $\frac{1}{2}P\rbra{\nu/2}$, by using $O\rbra{{1}/{\epsilon_p^{\frac{1}{\alpha-1}}}}$ queries to $U_{\nu}$. 
    Moreover, the classical description of $U_{P\rbra{\nu}}$ can be computed in deterministic time $\poly\rbra{1/\epsilon_p, \log\rbra{1/\delta_p}}$.

    \paragraph{Step 3: Estimate $\tr\rbra{P\rbra{\nu} \rho_0}$ and $\tr\rbra{P\rbra{\nu} \rho_1}$.}
    Suppose that $U_{P\rbra{\nu}}$ is a $\rbra{1, n+a+3, 0}$-block-encoding of $A$ with $\Abs{A - \frac{1}{2}P\rbra{\nu/2}} \leq \delta_p$. 
    By \cref{lemma:hadamard}, we can obtain an estimate $\tilde x_0$ of $\tr\rbra{A\rho_0}$ to within additive error $\epsilon_H$ by using $O\rbra{1/\epsilon_H}$ queries to $U_{P\rbra{\nu}}$ and $Q_0$ such that
    \begin{equation}\label{eq:lp-had0}
    \Pr\sbra[\big]{ \abs*{\tilde x_0 - \tr\rbra*{A \rho_0}} \leq \epsilon_H } \geq 0.9.
    \end{equation}
    Similarly, we can obtain an estimate $\tilde x_1$ of $\tr\rbra{A\rho_1}$ to within additive error $\epsilon_H$ by using $O\rbra{1/\epsilon_H}$ queries to $U_{P\rbra{\nu}}$ and $Q_1$ such that
    \begin{equation}\label{eq:lp-had1}
    \Pr\sbra[\big]{ \abs*{\tilde x_1 - \tr\rbra*{A \rho_1}} \leq \epsilon_H } \geq 0.9.
    \end{equation}

    \paragraph{Step 4: Return $2^{\alpha}\rbra{\tilde x_0 - \tilde x_1}$ as the estimate.}

    \paragraph{Error analysis and parameter selection.}
    The overall process from Step 1 to Step 3 uses 
    \[
    O\rbra*{\frac{1}{\epsilon_p^{\frac{1}{\alpha-1}}}} \cdot O\rbra*{\frac{1}{\epsilon_H}} = O\rbra*{\frac{1}{\epsilon_p^{\frac{1}{\alpha-1}}\epsilon_H}}
    \]
    queries to $Q_0$ and $Q_1$, and 
    \[
    O\rbra*{\frac{n+a}{\epsilon_p^{\frac{1}{\alpha-1}}\epsilon_H}}
    \]
    one- and two-qubit gates. 
    Moreover, the classical description of this quantum circuit can be computed in deterministic time $\poly\rbra{1/\epsilon_p, 1/\epsilon_H, \log\rbra{1/\delta_p}}$.
    To determine the parameters, we note that 
    \begin{equation} \label{eq:succ-prob-query}
        \Pr\sbra*{ \abs*{ 2^{\alpha}\rbra*{\tilde x_0 - \tilde x_1} - \Lalpha(\rho_0,\rho_1) } \leq 2^{\alpha}\rbra*{2\epsilon_H + 2\delta_p + \epsilon_p} } \geq 0.8.
    \end{equation}
    Then, by setting $\epsilon_H = \delta_p = \epsilon_p = 2^{-\alpha-4}\epsilon$, we can estimate $\Lalpha(\rho_0,\rho_1)$ to within additive error $\epsilon$ with success probability at least $0.8$, using 
    \[
    O\rbra*{\frac{1}{\epsilon^{1+\frac{1}{\alpha-1}}}}
    \]
    queries to $Q_0$ and $Q_1$ and 
    \[
    O\rbra*{\frac{n+a}{\epsilon^{1+\frac{1}{\alpha-1}}}}
    \]
    one- and two-qubit gates. 
    Moreover, the classical description of this quantum circuit can be computed in deterministic time $\poly\rbra{1/\epsilon}$.

    To complete the proof, it remains to show \cref{eq:succ-prob-query}. 
    Using the matrix H\"{o}lder inequality, e.g.,~\cite[Theorem 2]{Bau11}, it follows that:
    \begin{equation}\label{eq:lp-delta}
    \abs*{\tr\rbra*{\frac{1}{2}P\rbra*{\frac{\nu}{2}}\rho_0} - \tr\rbra{A\rho_0}} \leq \Abs*{\frac{1}{2}P\rbra*{\frac{\nu}{2}} - A} \leq \delta_p.
    \end{equation}
    Also, the following inequality holds: 
    \begin{equation}\label{eq:lp-eps}
    \abs*{ \tr\rbra*{\frac{1}{2}P\rbra*{\frac{\nu}{2}} \nu} - \tr\rbra*{\frac{1}{4} \sgn\rbra*{\frac{\nu}{2}} \abs*{\frac{\nu}{2}}^{\alpha-1} \nu} } \leq \epsilon_p.
    \end{equation}
    To see \cref{eq:lp-eps}, suppose that $\nu = \rho_0 - \rho_1 = \sum_{j} \lambda_j \ketbra{\psi_j}{\psi_j}$ is the spectral decomposition with $\sum_j \abs{\lambda_j} \leq 2$ and $\abs{\lambda_j} \leq 1$ for all $j$. 
    Then, 
    \begin{align*}
        \abs*{ \tr\rbra*{\frac{1}{2}P\rbra*{\frac{\nu}{2}} \nu} - \tr\rbra*{\frac{1}{4} \sgn\rbra*{\frac{\nu}{2}} \abs*{\frac{\nu}{2}}^{\alpha-1} \nu} }
        & \leq \sum_j \abs*{ \frac{1}{2}P\rbra*{\frac{\lambda_j}{2}} \lambda_j - \frac{1}{4} \sgn\rbra*{\frac{\lambda_j}{2}} \abs*{\frac{\lambda_j}{2}}^{\alpha-1} \lambda_j } \\
        & = \sum_j \frac{1}{2} \abs*{\lambda_j} \abs*{P\rbra*{\frac{\lambda_j}{2}} - \frac{1}{2} \sgn\rbra*{\frac{\lambda_j}{2}} \abs*{\frac{\lambda_j}{2}}^{\alpha-1}} \\
        & \leq \sum_j \frac{1}{2} \abs*{\lambda_j} \epsilon_p \\
        & \leq \epsilon_p.
    \end{align*}
    Then, \cref{eq:succ-prob-query} can be obtained by combining \cref{eq:lp-had0,eq:lp-had1,eq:lp-delta,eq:lp-eps}.
\end{proof}

\subsubsection{Rank-independent quantum sample algorithm}

We proceed by describing efficient quantum sample algorithms for $\Lalpha(\rho_0,\rho_1)$ and $\Talpha(\rho_0,\rho_1)$:

\begin{algorithm}[!ht]
    \caption{Quantum algorithm for estimating $\Lalpha(\rho_0,\rho_1)$ for constant $\alpha > 1$ (sample access).}
    \label{algo:q-Tsallis-estimation-sample}
    \begin{algorithmic}[1]
        \Require Independent and identical samples of $n$-qubit mixed quantum states $\rho_0$ and $\rho_1$; $\alpha>1$; $\delta, \epsilon_p, \delta_p, \epsilon_H \in \rbra{0, 1}$.
        \Ensure 
        An estimate of $\Lalpha(\rho_0,\rho_1)$ with high probability. 

        \begin{tcolorbox}[colback=gray!12]
        \Function{ApproxDiffPower}{$\alpha, \epsilon_p, \delta_p$}${}^{U_1, U_2}$
        \renewcommand{\algorithmicrequire}{\qquad \textbf{Input:}}
        \renewcommand{\algorithmicensure}{\qquad \textbf{Output:}} 
        \Require Unitary $\rbra{1, a, 0}$-block-encodings $U_1$ and $U_2$ of $A$ and $B$, respectively, and parameters $\alpha >1, \epsilon_p, \delta_p \in \rbra{0, 1/2}$. 
        \Ensure A unitary operator.

        \State Let $U_3$ be a $\rbra{2, a+1, 0}$-block-encoding of $A-B$ by using $O\rbra{1}$ queries to $U_1$ and $U_2$ (by \cref{lemma:lcu}). 
        \State Let $P\rbra{x}$ be a polynomial of degree $d = O\rbra{{1}/{\epsilon_p^{\frac{1}{\alpha-1}}}}$ such that $\max_{x \in \sbra{0, 1}} \abs{P\rbra{x} - \frac{1}{2} \sgn\rbra{x} \abs{x}^{\alpha-1}} \leq \epsilon_p$ and $\max_{x \in \sbra{-1, 1}} \abs{P\rbra{x}} \leq 1$ (by \cref{lemma:computable-uniformPolyApprox-signedPositivePower}).
        \State Construct a unitary $\rbra{1, a+3, \delta_p}$-block-encoding $U_4$ of $\frac{1}{2}P\rbra{\frac{A-B}{2}}$ by using $O\rbra{d}$ queries to $U_3$ (by \cref{lemma:qsvt}). 
        \State \Return $U_4$.
        \EndFunction
        \end{tcolorbox}

        \For {$i \in \binset$}
        \State $k \gets \Theta\rbra{1/\epsilon_H^2}$.
        \For {$j \in \cbra{1, 2, \dots, k}$}
            \State Let $b_{i,j}' \in \binset$ be the outcome of the Hadamard test (by \cref{lemma:hadamard}) performed on $\rho_i$ and $\mathsf{Samplize}_{\delta}\ave{\begin{tcolorbox}[nobeforeafter,
  after={\xspace},
  hbox,
  tcbox raise base,
  fontupper=\ttfamily,
  colback=gray!12,
  size=fbox] $\texttt{ApproxDiffPower}\rbra{\alpha, \epsilon_p, \delta_p}^{U_1, U_2}$\end{tcolorbox}}\sbra{\rho_0, \rho_1}$ (as if it were unitary).
        \EndFor
        \State $p_i \gets \frac{1}{k} \sum_{j=1}^k b_{i,j}'$.
        \EndFor

        \State \Return $2^{2\alpha-1} \rbra{p_0 - p_1}$.
    \end{algorithmic}
\end{algorithm}

\begin{lemma}[Quantum Schatten $\alpha$-power distance estimation via samples]
    \label{lemma:powered-Lalpha-sample-algo}
    For every fixed real constant $\alpha > 1$, $\Lalpha(\rho_0,\rho_1)$ can be estimated to within additive error $\epsilon$ on a quantum computer by using 
    \[
    O\rbra*{\frac{1}{\epsilon^{3+\frac{2}{\alpha-1}}}\log^2\rbra*{\frac{1}{\epsilon}}}
    \] 
    samples of $n$-qubit quantum states $\rho_0$ and $\rho_1$, and
    \[
    O\rbra*{\frac{n}{\epsilon^{3+\frac{2}{\alpha-1}}}\log^2\rbra*{\frac{1}{\epsilon}}}
    \] 
    one- and two-qubit gates.
\end{lemma}

By combining \Cref{prop:Talpha-vs-powered-Talpha} with \Cref{lemma:powered-Lalpha-sample-algo} for additive error $\epsilon^{\alpha}$, we obtain a quantum sample algorithm for estimating $\Talpha(\rho_0,\rho_1)$ when $\alpha>1$ is a fixed real constant:

\begin{theorem}[Quantum Schatten $\alpha$-norm distance estimation via samples]
    \label{thm:Lalpha-sample-algo}
    For every fixed real constant $\alpha>1$, there is a quantum sample algorithm that estimates the quantum Schatten $\alpha$-norm distance $\Talpha(\rho_0,\rho_1)$ to within additive error $\epsilon$ by using 
    \[
    O\rbra*{\frac{1}{\epsilon^{3\alpha + 2 +\frac{2}{\alpha-1}}}\log^2\rbra*{\frac{1}{\epsilon}}}
    \]
    samples of $n$-qubit quantum states $\rho_0$ and $\rho_1$, and
    \[
    O\rbra*{\frac{n}{\epsilon^{3\alpha + 2 +\frac{2}{\alpha-1}}}\log^2\rbra*{\frac{1}{\epsilon}}}
    \]
    one- and two-qubit gates.
\end{theorem}

\begin{proof}[Proof of \Cref{lemma:powered-Lalpha-sample-algo}]
    To estimate $\Abs{\rho_0-\rho_1}_{\alpha}^{\alpha} = 2\Lalpha(\rho_0,\rho_1)$, our approach estimates the terms $\tr\rbra{\rho_0 \cdot \sgn\rbra{\rho_0 - \rho_1} \cdot \abs{\rho_0 - \rho_1}^{\alpha-1}}$ and $\tr\rbra{\rho_1 \cdot \sgn\rbra{\rho_0 - \rho_1} \cdot \abs{\rho_0 - \rho_1}^{\alpha-1}}$ in \Cref{eq:poweredLalpha-decomposition} using samples of $\rho_0$ and $\rho_1$. Specifically, our quantum sample algorithm extends the quantum query algorithm in \Cref{lemma:powered-Lalpha-query-algo} via the (multi-)samplizer~\cite{WZ24b,WZ24c}.
    We present our quantum algorithm as follows with its formal description given in \cref{algo:q-Tsallis-estimation-sample}.
    
    \paragraph{Step 1: Construct a block-encoding of $A - B$ given block-encodings of $A$ and $B$.}
    Let $U_1$ be a $\rbra{1, a, 0}$-block-encoding of $A$ and $U_2$ be a $\rbra{1, a, 0}$-block-encoding of $B$ for some integer $a > 0$. 
    This is a standard step achieved by LCU. 
    Note that $\rbra{HX, H}$ is a $\rbra{2, 1, 0}$-state-preparation-pair for $y = \rbra{1, -1}$, where $H$ is the Hadamard gate and $X$ is the Pauli-X gate. 
    By the LCU lemma (\cref{lemma:lcu}), we can implement a unitary operator $U_{3}$ that is a $\rbra{2, a+1, 0}$-block-encoding of $A - B$. 

    \paragraph{Step 2: Construct a block-encoding of $\sgn\rbra{A-B} \cdot \abs{A - B}^{\alpha-1}$.}
    Let $\epsilon_p, \delta_p \in \rbra{0, 1/2}$ be parameters to be determined. 
    By \cref{lemma:computable-uniformPolyApprox-signedPositivePower}, there is a polynomial $P \in \mathbb{R}\sbra{x}$ of degree $d = O\rbra{1/\epsilon_p^{\frac{1}{\alpha-1}}}$ such that $\max_{x \in \sbra{-1, 1}} \abs{P\rbra{x} - \frac{1}{2} \sgn\rbra{x} \abs{x}^{\alpha-1}} \leq \epsilon_p$ and $\max_{x \in \sbra{-1, 1}} \abs{P\rbra{x}} \leq 1$.
    By \cref{lemma:qsvt} with $P_\bullet \coloneqq \frac{1}{2}P$, $\alpha \coloneqq 1$, $a \coloneqq a+1$, $\epsilon \coloneqq 0$ and $d \coloneqq O\rbra{{1}/{\epsilon_p^{\frac{1}{\alpha-1}}}}$, we can implement a quantum circuit $U_{4}$ that is a $\rbra{1, a+3, \delta_p}$-block-encoding of $\frac{1}{2}P\rbra{\frac{A-B}{2}}$, by using $O\rbra{{1}/{\epsilon_p^{\frac{1}{\alpha-1}}}}$ queries to $U_{3}$. 
    Moreover, the classical description of $U_{4}$ can be computed in deterministic time $\poly\rbra{1/\epsilon_p, \log\rbra{1/\delta_p}}$.

    Combining Steps 1 and 2, let $\texttt{ApproxDiffPower}\rbra{\alpha, \epsilon_p, \delta_p}^{U_1,U_2}$ be the implementation of $U_4$, which uses $O\rbra{1/\epsilon_p^{\frac{1}{\alpha-1}}}$ queries to $U_1$ and $U_2$. 

    \paragraph{Step 3: Estimate $\tr\rbra{\rho_i P\rbra{\rho_0-\rho_1}}$.}
    For our purpose, we first consider the case that $A = \rho_0/2$ and $B = \rho_1/2$. 
    In this case, $U_4$ is a $\rbra{1, a+3, \delta_p}$-block-encoding of $\frac{1}{2}P\rbra{\frac{\rho_0-\rho_1}{4}}$.
    For convenience, suppose that $U_4$ is a $\rbra{1, a+3, 0}$-block-encoding of some operator $D$, where $D$ satisfies 
    \begin{equation} \label{eq:diff-op}
    \Abs*{D - \frac{1}{2}P\rbra*{\frac{\rho_0-\rho_1}{4}}} \leq \delta_p. 
    \end{equation}
    For $i \in \binset$, if we perform the Hadamard test (\cref{lemma:hadamard}) on $\rho_i$ and $U_4$, then an outcome $b_i \in \cbra{0, 1}$ will be obtained with
    \[
    \Pr\sbra{b_i = 0} = \frac{1}{2} + \frac{1}{2} \Real\sbra{\tr\rbra{D \rho_i}}. 
    \]
    Let $\delta > 0$ be a parameter to be determined.
    Then using the multi-samplizer (\cref{lemma:multi-samplizer}), we can approximately implement $U_4$ by $\mathsf{Samplize}_{\delta}\ave{\texttt{ApproxDiffPower}\rbra{\alpha, \epsilon_p, \delta_p}^{U_1, U_2}}\sbra{\rho_0, \rho_1}$.
    Let $b_i' \in \cbra{0, 1}$ be the outcome of \cref{lemma:hadamard} on $\rho_i$ and $\mathsf{Samplize}_{\delta}\ave{\texttt{ApproxDiffPower}\rbra{\alpha, \epsilon_p, \delta_p}^{U_1, U_2}}\sbra{\rho_0, \rho_1}$ (as if it were unitary), then 
    \[
    \abs{\Pr\sbra{b_i = 0} - \Pr\sbra{b'_i = 0}} \leq \delta. 
    \]
    By Hoeffding's inequality, we can obtain an estimate $p_i$ of $\Real\sbra{\tr\rbra{D \rho_i}}$ to additive error $\epsilon_H$ with success probability $\geq 0.9$ by $k = O\rbra{1/\epsilon_H^2}$ repetitions of the Hadamard test, i.e., 
    \begin{equation} \label{eq:sampl-error}
    \Pr\sbra[\big]{\abs*{p_i - \Real\sbra{\tr\rbra{D \rho_i}}} \leq \delta+\epsilon_H} \geq 0.9.
    \end{equation}

    \paragraph{Step 4: Return $2^{2\alpha-1}\rbra{p_0 - p_1}$ as the estimate.}

    \paragraph{Error analysis and parameter selection.}

    By \cref{lemma:multi-samplizer}, the overall process from Step 1 to Step 3 uses 
    \[
    O\rbra*{\rbra*{\frac{1}{\epsilon_p^{\frac{1}{\alpha-1}}}}^2\frac{1}{\delta}\log^2\rbra*{\frac{1}{\epsilon_p^{\frac{1}{\alpha-1}}\delta}}} \cdot O\rbra*{\frac{1}{\epsilon_H^2}} = O\rbra*{\frac{1}{\epsilon_p^{\frac{2}{\alpha-1}}\epsilon_H^2\delta} \log^2\rbra*{\frac{1}{\epsilon_p\delta}}}
    \]
    samples of $\rho_0$ and $\rho_1$, and 
    \[
    O\rbra*{\frac{n}{\epsilon_p^{\frac{2}{\alpha-1}}\epsilon_H^2\delta} \log^2\rbra*{\frac{1}{\epsilon_p\delta}}}
    \]
    one- and two-qubit gates. 
    Moreover, the classical description of this quantum circuit can be computed in deterministic time $\poly\rbra{1/\epsilon_p, 1/\epsilon_H, \log\rbra{1/\delta_p}}$.
    To determine the parameters, we note that 
    \begin{equation} \label{eq:final-error}
    \Pr\sbra*{ \abs*{ 4^{\alpha} \rbra{p_0 - p_1} - \Abs{\rho_0 - \rho_1}_{\alpha}^{\alpha} } \leq 4^{\alpha}\rbra*{\epsilon_p + 2\delta_p + 2\delta + 2\epsilon_H} } \geq 0.8.
    \end{equation}
    Then, by setting $\epsilon_p = \delta_p = \delta = \epsilon_H = 4^{-\alpha-3} \epsilon$, we can estimate $\Lalpha(\rho_0,\rho_1)$ to within additive error $\epsilon$ with success probability at least $0.8$, using 
    \[
    O\rbra*{\frac{1}{\epsilon^{3+\frac{2}{\alpha-1}}}\log^2\rbra*{\frac{1}{\epsilon}}}
    \]
    samples of $\rho_0$ and $\rho_1$ and 
    \[
    O\rbra*{\frac{n}{\epsilon^{3+\frac{2}{\alpha-1}}}\log^2\rbra*{\frac{1}{\epsilon}}}
    \] 
    one- and two-qubit gates. 
    Moreover, the classical description of this quantum circuit can be computed in deterministic time $\poly\rbra{1/\epsilon}$.
    
    To complete the proof, it remains to show \cref{eq:final-error}.
    To see this, by \cref{eq:diff-op}, we have
    \begin{align*}
        \abs*{ \Real\sbra{\tr\rbra{D \rho_i}} - \tr\rbra*{ \frac{1}{2} P\rbra*{\frac{\rho_0-\rho_1}{4}} \rho_i } }
        & \leq \abs*{ \tr\rbra{D \rho_i} - \tr\rbra*{ \frac{1}{2} P\rbra*{\frac{\rho_0-\rho_1}{4}} \rho_i } } \\
        & \leq \Abs*{ D - \frac{1}{2} P\rbra*{\frac{\rho_0-\rho_1}{4}} } \\
        & \leq \delta_p.
    \end{align*}
    By the property of polynomial $P$, we have 
    \begin{align*}
        & \abs*{ \tr\rbra*{ \frac{1}{2} P\rbra*{\frac{\rho_0-\rho_1}{4}} \rho_i } - \tr\rbra*{ \frac{1}{4} \rho_i \cdot \sgn\rbra{\rho_0 - \rho_1} \cdot \abs*{\frac{\rho_0 - \rho_1}{4}}^{\alpha-1} } } \\
        & \leq \Abs*{ \frac{1}{2} P\rbra*{\frac{\rho_0-\rho_1}{4}} - \frac{1}{4} \sgn\rbra{\rho_0 - \rho_1} \cdot \abs*{\frac{\rho_0 - \rho_1}{4}}^{\alpha-1} } \\
        & \leq \frac{\epsilon_p}{2}. 
    \end{align*}
    By combining the above two inequalities together with \cref{eq:sampl-error}, \cref{eq:final-error} holds. 
\end{proof}

\begin{remark}[A sample-time trade-off for estimating $\Talpha(\rho_0,\rho_1)$] \label{remark:sample-time-tradeoff}
    By directly combining \cref{lemma:twz-samplizer} and \cref{thm:Lalpha-query-algo}, an improved quantum sample complexity upper bound of $O\rbra{1/{\epsilon^{2\alpha + 2 + \frac{2}{\alpha-1}}}}$ for estimating $\Talpha(\rho_0,\rho_1)$ to within additive error $\epsilon$ was proved in~\cite[Theorem 3.6]{CWZ25}. However, their construction turns out to use $\widetilde O\rbra{\poly\rbra{n}/{\epsilon^{8\alpha + 8 +\frac{8}{\alpha-1}}}}$ one- and two-qubit gates, which is not stated explicitly in their work and leads to a time complexity worse than \Cref{thm:Lalpha-sample-algo}. 
\end{remark}

\subsection{Quantum Schatten \texorpdfstring{$\alpha$}{}-power distance estimation for \texorpdfstring{$0<\alpha<1$}{}}
\label{subsec:algo-Schatten-0<alpha<1}

\subsubsection{Rank-efficient quantum query algorithm}

We now present a rank-efficient quantum query algorithm for estimating $\Lalpha(\rho_0,\rho_1)$:

\begin{theorem}[Quantum Schatten-$\alpha$ quasi-norm distance estimation via queries] \label{thm:alpha-quasi-query}
    Suppose that $Q_0$ and $Q_1$ are unitary operators that prepare purifications of mixed quantum states $\rho_0$ and $\rho_1$, respectively, of rank at most $r$. 
    For every real $\alpha\in(0,1)$, there is a quantum query algorithm that estimates $\Lalpha(\rho_0,\rho_1)$ to within additive error $\epsilon$ by using 
    \[
    O\rbra*{\frac{r^{\frac{2}{\alpha}-1}}{\epsilon^{\frac{2}{\alpha}}} \log\rbra*{\frac{r}{\epsilon}}}
    \]
    queries to $Q_0$ and $Q_1$.
\end{theorem}

\begin{algorithm}[!ht]
    \caption{Quantum algorithm for estimating $\Lalpha(\rho_0,\rho_1)$ for $0 < \alpha < 1$ (query access).}
    \label{algo:q-Tsallis-estimation-query-less-1}
    \begin{algorithmic}[1]
        \Require $\rbra{n+a}$-qubit quantum unitary oracles $Q_0$ and $Q_1$ that prepare purifications of $n$-qubit mixed quantum states $\rho_0$ and $\rho_1$, respectively; $0 < \alpha < 1$; $\epsilon_p, \delta_p, \delta_q, \epsilon_H \in \rbra{0, 1}$.
        \Ensure 
        An estimate of $\Lalpha(\rho_0,\rho_1)$ with high probability. 

        \State Implement a unitary operator $U_\nu$ that is a $\rbra{2, n+a+1, 0}$-block-encoding of $\nu \coloneqq \rho_0 - \rho_1$, using $O\rbra{1}$ queries to $Q_0$ and $Q_1$ (by \cref{lemma:block-encoding-of-density-operators,lemma:lcu}). 

        \State Let $P\rbra{x}$ be a polynomial of degree $d = O\rbra{\frac{1}{\delta_p}\log\rbra{\frac{1}{\epsilon_p}}}$ such that $\max_{x \in \sbra{-1, 1} \setminus \rbra{-\delta_p, \delta_p}} \abs{P\rbra{x} - \frac{1}{2} \delta_p^{1-\alpha} \sgn\rbra{x} \abs{x}^{\alpha-1}} \leq \epsilon_p$ and $\max_{x \in \sbra{-1, 1}} \abs{P\rbra{x}} \leq 1$ (by \cref{lemma:uniformPolyApprox-signedNegativePower}).

        \State Implement a unitary operator $U_{P\rbra{\nu}}$ that is a $\rbra{1, n+a+3, \delta_q}$-block-encoding of $\frac{1}{2}P\rbra{\nu/2}$, using $O\rbra{d}$ queries to $U_{\nu}$ (by \cref{lemma:qsvt}).

        \State Let $\tilde x_0$ be an $\epsilon_H$-estimate of $\tr\rbra{ \bra{\bar 0} U_{P\rbra{\nu}} \ket{\bar 0} \rho_0}$, where $\ket{\bar 0} = \ket{0}^{\otimes\rbra{n+a+3}}$, using $O\rbra{1/\epsilon_H}$ queries to $U_{P\rbra{\nu}}$ and $Q_0$ (by \cref{lemma:hadamard}).

        \State Let $\tilde x_1$ be an $\epsilon_H$-estimate of $\tr\rbra{ \bra{\bar 0} U_{P\rbra{\nu}} \ket{\bar 0} \rho_1}$, where $\ket{\bar 0} = \ket{0}^{\otimes\rbra{n+a+3}}$, using $O\rbra{1/\epsilon_H}$ queries to $U_{P\rbra{\nu}}$ and $Q_1$ (by \cref{lemma:hadamard}).

        \State \Return $2^{\alpha} \delta_p^{\alpha-1} \rbra*{\tilde x_0 - \tilde x_1}$.
    \end{algorithmic}
\end{algorithm}

\begin{proof}
    The proof is similar to that of \Cref{lemma:powered-Lalpha-query-algo}. 
    For $0 < \alpha < 1$, we can also expand $\tr\rbra{\abs{\rho_0-\rho_1}^\alpha}$ as: 
    \begin{equation} \label{eq:tr-power-less-1}
        \tr\rbra{\abs{\rho_0-\rho_1}^\alpha} 
        = \tr\rbra*{\rho_0 \cdot \sgn\rbra{\rho_0 - \rho_1} \cdot \abs*{\rho_0 - \rho_1}^{\alpha-1}} - \tr\rbra*{\rho_1 \cdot \sgn\rbra{\rho_0 - \rho_1} \cdot \abs*{\rho_0 - \rho_1}^{\alpha-1}}.
    \end{equation}

    Suppose that $Q_0$ and $Q_1$ are $\rbra{n+a}$-qubit unitary operators that prepare purifications of $\rho_0$ and $\rho_1$, respectively. 
    We present our quantum algorithm as follows with its formal description given in \cref{algo:q-Tsallis-estimation-query-less-1}. 

    \paragraph{Step 1: Construct a block-encoding of $\nu \coloneqq \rho_0 - \rho_1$.} 
    As shown in the proof of \Cref{lemma:powered-Lalpha-query-algo}, we can implement a unitary operator $U_{\nu}$ that is a $\rbra{2, n+a+1, 0}$-block-encoding of $\nu \coloneqq \rho_0 - \rho_1$, using $O\rbra{1}$ queries to $Q_0$ and $Q_1$. 

    \paragraph{Step 2: Construct a block-encoding of $\sgn\rbra{\nu} \cdot \abs{\nu}^{\alpha-1}$.}
    Let $\epsilon_p, \delta_p, \delta_q \in \rbra{0, 1/2}$ be parameters to be determined. 
    By \cref{lemma:uniformPolyApprox-signedNegativePower}, there is a polynomial $P \in \mathbb{R}\sbra{x}$ of degree $d = O\rbra{\frac{1}{\delta_p}\log\rbra{\frac{1}{\epsilon_p}}}$ such that $\max_{x \in \sbra{-1, 1} \setminus \rbra{-\delta_p, \delta_p}} \abs{P\rbra{x} - \frac{1}{2} \delta_p^{1-\alpha} \sgn\rbra{x} \abs{x}^{\alpha-1}} \leq \epsilon_p$ and $\max_{x \in \sbra{-1, 1}} \abs{P\rbra{x}} \leq 1$.
    By \cref{lemma:qsvt} with $P_\bullet \coloneqq \frac{1}{2}P$, $\alpha \coloneqq 1$, $a \coloneqq n+a+1$, $\epsilon \coloneqq 0$ and $d \coloneqq O\rbra{\frac{1}{\delta_p}\log\rbra{\frac{1}{\epsilon_p}}}$, we can implement a quantum circuit $U_{P\rbra{\nu}}$ that is a $\rbra{1, n+a+3, \delta_q}$-block-encoding of $\frac{1}{2}P\rbra{\nu/2}$, by using $O\rbra{\frac{1}{\delta_p}\log\rbra{\frac{1}{\epsilon_p}}}$ queries to $U_{\nu}$. 
    Moreover, the classical description of $U_{P\rbra{\nu}}$ can be computed in deterministic time $\poly\rbra{1/\delta_p, \log\rbra{1/\epsilon_p}, \log\rbra{1/\delta_q}}$.

    \paragraph{Step 3: Estimate $\tr\rbra{P\rbra{\nu} \rho_0}$.}
    Suppose that $U_{P\rbra{\nu}}$ is a $\rbra{1, n+a+3, 0}$-block-encoding of $A$ with $\Abs{A - \frac{1}{2}P\rbra{\nu/2}} \leq \delta_q$. 
    By \cref{lemma:hadamard}, we can obtain an estimate $\tilde x_0$ of $\tr\rbra{A\rho_0}$ to within additive error $\epsilon_H$ by using $O\rbra{1/\epsilon_H}$ queries to $U_{P\rbra{\nu}}$ and $Q_0$ such that
    \begin{equation}\label{eq:lp-had0-alpha-leq1}
    \Pr\sbra[\big]{ \abs*{\tilde x_0 - \tr\rbra*{A \rho_0}} \leq \epsilon_H } \geq 0.9.
    \end{equation}
    Similarly, we can obtain an estimate $\tilde x_1$ of $\tr\rbra{A\rho_1}$ to within additive error $\epsilon_H$ by using $O\rbra{1/\epsilon_H}$ queries to $U_{P\rbra{\nu}}$ and $Q_1$ such that
    \begin{equation}\label{eq:lp-had1-alpha-leq1}
    \Pr\sbra[\big]{ \abs*{\tilde x_1 - \tr\rbra*{A \rho_1}} \leq \epsilon_H } \geq 0.9.
    \end{equation}

    \paragraph{Step 4: Return $2^{\alpha} \delta_p^{\alpha-1} \rbra*{\tilde x_0 - \tilde x_1}$ as the estimate.}

    \paragraph{Error analysis and parameter selection.}
    
    The overall process from Step 1 to Step 3 uses 
    \[
    O\rbra*{\frac{1}{\delta_p}\log\rbra*{\frac{1}{\epsilon_p}}} \cdot O\rbra*{\frac{1}{\epsilon_H}} = O\rbra*{\frac{1}{\delta_p\epsilon_H}\log\rbra*{\frac{1}{\epsilon_p}}}
    \]
    queries to $Q_0$ and $Q_1$, and 
    \[
    O\rbra*{\frac{n+a}{\delta_p\epsilon_H}\log\rbra*{\frac{1}{\epsilon_p}}}
    \]
    one- and two-qubit gates. 
    Moreover, the classical description of this quantum circuit can be computed in deterministic time $\poly\rbra{1/\delta_p, 1/\epsilon_H, \log\rbra{1/\epsilon_p}, \log\rbra{1/\delta_q}}$.
    To determine the parameters, we note that 
    \begin{equation} \label{eq:error-query-less-1}
        \Pr\sbra*{ \abs*{2^{\alpha} \delta_p^{\alpha-1} \rbra*{\tilde x_0 - \tilde x_1} - \Lalpha(\rho_0,\rho_1) } \leq 2^{\alpha} \delta_p^{\alpha - 1} \rbra*{2\epsilon_H + 2\delta_q + 3 r \delta_p + \epsilon_p} } \geq 0.8.
    \end{equation}
    Then, by setting $\delta_p = \rbra{2^{-\alpha-4}\epsilon/r}^{1/\alpha}$ and $\epsilon_H = \delta_q = \epsilon_p = 2^{-\alpha-4} \delta_p^{1-\alpha} \epsilon$, we can estimate $\Lalpha(\rho_0,\rho_1)$ to within additive error $\epsilon$ with success probability at least $0.8$, using 
    \[
    O\rbra*{\frac{r^{\frac{2}{\alpha}-1}}{\epsilon^{\frac{2}{\alpha}}}\log\rbra*{\frac{r}{\epsilon}}}
    \]
    queries to $Q_0$ and $Q_1$ and
    \[
    O\rbra*{\frac{\rbra{n+a}r^{\frac{2}{\alpha}-1}}{\epsilon^{\frac{2}{\alpha}}}\log\rbra*{\frac{r}{\epsilon}}}
    \]
    one- and two-qubit gates. 
    Moreover, the classical description of this quantum circuit can be computed in deterministic time $\poly\rbra{r, 1/\epsilon}$.

    To complete the proof, it remains to show \cref{eq:error-query-less-1}. 
    Using the matrix H\"{o}lder inequality, e.g.,~\cite[Theorem 2]{Bau11}, it follows that:
    \begin{equation*}
    \abs*{\tr\rbra*{\frac{1}{2}P\rbra*{\frac{\nu}{2}}\rho_b} - \tr\rbra{A\rho_b}} \leq \Abs*{\frac{1}{2}P\rbra*{\frac{\nu}{2}} - A} \leq \delta_q
    \end{equation*}
    for $b \in \cbra{0, 1}$.
    Also, the following inequality holds: 
    \begin{equation} \label{eq:intrinsic-error}
    \abs*{ \tr\rbra*{\frac{1}{2}P\rbra*{\frac{\nu}{2}} \nu} - \tr\rbra*{\frac{1}{4} \delta_p^{1-\alpha} \sgn\rbra*{\frac{\nu}{2}} \abs*{\frac{\nu}{2}}^{\alpha-1} \nu} } \leq 3 r \delta_p + \epsilon_p.
    \end{equation}
    To see this, suppose that $\nu = \rho_0 - \rho_1 = \sum_j \lambda_j \ketbra{\psi_j}{\psi_j}$ is the spectral decomposition with $\sum_j\abs{\lambda_j} \leq 2$ and $\abs{\lambda_j} \leq 1$ for all $j$. 
    Then, 
    \begin{align*}
        \abs*{ \tr\rbra*{\frac{1}{2}P\rbra*{\frac{\nu}{2}} \nu} - \tr\rbra*{\frac{1}{4} \delta_p^{1-\alpha} \sgn\rbra*{\frac{\nu}{2}} \abs*{\frac{\nu}{2}}^{\alpha-1} \nu} }
        & \leq \sum_{j} \abs*{g\rbra{\lambda_j}},
    \end{align*}
    where
    \[
    g\rbra{x} \coloneqq \frac{1}{2}P\rbra*{\frac{x}{2}} x - \frac{1}{4} \delta_p^{1-\alpha} \sgn\rbra*{\frac{x}{2}} \abs*{\frac{x}{2}}^{\alpha-1} x.
    \]
    Here, we split the summation into three parts:
    \[
    \sum_{j} = \sum_{j \colon \abs{\lambda_j} > 2\delta_p} + \sum_{j \colon 0 < \abs{\lambda_j} < 2\delta_p} + \sum_{j \colon \lambda_j = 0}. 
    \]
    For the first part, 
    \begin{align*}
        \sum_{j \colon \abs{\lambda_j} > 2\delta_p} \abs*{g\rbra{\lambda_j}}
        & = \sum_{j \colon \abs{\lambda_j} > 2\delta_p} \frac{1}{2} \abs*{\lambda_j} \abs*{P\rbra*{\frac{\lambda_j}{2}} - \frac{1}{2} \delta_p^{1-\alpha} \sgn\rbra*{\frac{\lambda_j}{2}} \abs*{\frac{\lambda_j}{2}}^{\alpha-1}} \\
        & \leq \sum_{j \colon \abs{\lambda_j} > 2\delta_p} \frac{1}{2} \abs*{\lambda_j} \epsilon_p \\
        & \leq \epsilon_p.
    \end{align*}
    For the second part, 
    \begin{align*}
        \sum_{j \colon 0 < \abs{\lambda_j} < 2\delta_p} \abs*{g\rbra{\lambda_j}}
        & \leq \sum_{j \colon 0 < \abs{\lambda_j} < 2\delta_p} \rbra*{ \frac{1}{2} \abs*{\lambda_j} \abs*{P\rbra*{\frac{\lambda_j}{2}}} + \frac{1}{4} \delta_p^{1-\alpha} \abs*{\frac{\lambda_j}{2}}^{\alpha-1} \abs{\lambda_j} } \\
        & \leq \sum_{j \colon 0 < \abs{\lambda_j} < 2\delta_p} \rbra*{ \delta_p + \frac{1}{2} \delta_p } \\
        & \leq 3 r \delta_p.
    \end{align*}
    For the third part, 
    \[
    \sum_{j \colon \lambda_j = 0} \abs*{g\rbra{\lambda_j}} = 0.
    \]
    Summing up these three cases yields \cref{eq:intrinsic-error}.

    By \cref{eq:lp-had0-alpha-leq1,eq:lp-had1-alpha-leq1}, we have that with probability at least $0.9^2 > 0.8$, it holds that $\abs{\tilde x_0 - \tr\rbra{A\rho_0}} \leq \epsilon_H$ and $\abs{\tilde x_1 - \tr\rbra{A\rho_1}} \leq \epsilon_H$, in which case, 
    \begin{align*}
        \abs*{\rbra*{\tilde x_0 - \tilde x_1} - \frac{\delta_p^{1-\alpha}}{2^{\alpha+1}} \tr\rbra*{\abs{\nu}^\alpha}}
        & \leq \abs{\tilde x_0 - \tr\rbra{A\rho_0}} + \abs{\tilde x_1 - \tr\rbra{A\rho_1}} \\
        & \qquad + \abs*{\tr\rbra{A\rho_0} - \tr\rbra*{\frac{1}{2}P\rbra*{\frac{\nu}{2}}\rho_0}} +  \abs*{\tr\rbra{A\rho_1} - \tr\rbra*{\frac{1}{2}P\rbra*{\frac{\nu}{2}}\rho_1}} \\
        & \qquad + \abs*{\rbra*{\tr\rbra*{\frac{1}{2}P\rbra*{\frac{\nu}{2}}\rho_0} - \tr\rbra*{\frac{1}{2}P\rbra*{\frac{\nu}{2}}\rho_1}} - \frac{\delta_p^{1-\alpha}}{2^{\alpha+1}} \tr\rbra*{\abs{\nu}^\alpha}} \\
        & \leq 2\epsilon_H + 2\delta_q + 3 r \delta_p + \epsilon_p.
    \end{align*}
    These yield \cref{eq:error-query-less-1}.  
\end{proof}

\subsubsection{Rank-efficient quantum sample algorithm}

We then present rank-efficient quantum sample algorithms for estimating $\Lalpha(\rho_0,\rho_1)$: 

\begin{theorem}[Quantum Schatten $\alpha$-power distance estimation via samples]
    \label{thm:alpha-quasi-sample}
    For every constant $0 < \alpha < 1$, there is an explicit quantum algorithm that estimates $\Lalpha(\rho_0,\rho_1)$ to within additive error $\epsilon$ using 
    \[
    O\rbra*{\frac{r^{\frac{5}{\alpha}-3}}{\epsilon^{\frac{5}{\alpha}}}\log^4\rbra*{\frac{r}{\epsilon}}}
    \]
    samples of $n$-qubit quantum states $\rho_0$ and $\rho_1$, and 
    \[
    O\rbra*{\frac{nr^{\frac{5}{\alpha}-3}}{\epsilon^{\frac{5}{\alpha}}}\log^4\rbra*{\frac{r}{\epsilon}}}
    \]
    one- and two-qubit gates, where $r$ is a known upper bound on the ranks of $\rho_0$ and $\rho_1$. 
\end{theorem}

\begin{algorithm}[!ht]
    \caption{Quantum algorithm for estimating $\Lalpha(\rho_0,\rho_1)$ for $0 < \alpha < 1$ (sample access).}
    \label{algo:alpha-distance-less-1}
    \begin{algorithmic}[1]
        \Require Independent and identical samples of $n$-qubit mixed quantum states $\rho_0$ and $\rho_1$, and parameters $0 < \alpha < 1$ and $\delta, \epsilon_p, \delta_p, \delta_q \in \rbra{0, 1}$.
        \Ensure 
        An estimate of $\Lalpha(\rho_0,\rho_1)$ with high probability. 

        \begin{tcolorbox}[colback=gray!12]
        \Function{ApproxDiffPower}{$\alpha, \epsilon_p, \delta_p, \delta_q$}${}^{U_1, U_2}$
        \renewcommand{\algorithmicrequire}{\qquad \textbf{Input:}}
        \renewcommand{\algorithmicensure}{\qquad \textbf{Output:}} 
        \Require Unitary $\rbra{1, a, 0}$-block-encodings $U_1$ and $U_2$ of $A$ and $B$, respectively, and parameters $\alpha\in(0,1)$, $\epsilon_p, \delta_p, \delta_q \in \rbra{0, 1/2}$. 
        \Ensure A unitary operator.

        \State Let $U_3$ be a $\rbra{2, a+1, 0}$-block-encoding of $A-B$ by using $O\rbra{1}$ queries to $U_1$ and $U_2$ (by \cref{lemma:lcu}). 
        \State Let $P\rbra{x}$ be a polynomial of degree $d = O\rbra{\log\rbra{1/\epsilon_p}/{\delta_p}}$ such that $\max_{x \in \sbra{-1, 1} \setminus \rbra{-\delta_p, \delta_p}} \abs{P\rbra{x} - \frac{1}{2} \delta_p^{1-\alpha} \sgn\rbra{x} \abs{x}^{\alpha-1}} \leq \epsilon_p$ and $\max_{x \in \sbra{-1, 1}} \abs{P\rbra{x}} \leq 1$ (by \cref{lemma:uniformPolyApprox-signedNegativePower}).
        \State Construct a unitary $\rbra{1, a+3, \delta_q}$-block-encoding $U_4$ of $\frac{1}{2}P\rbra{\frac{A-B}{2}}$ by using $O\rbra{d}$ queries to $U_3$ (by \cref{lemma:qsvt}). 
        \State \Return $U_4$.
        \EndFunction
        \end{tcolorbox}

        \For {$i \in \binset$}
        \State $k \gets \Theta\rbra{1/\epsilon_H^2}$.
        \For {$j \in \cbra{1, 2, \dots, k}$}
            \State Let $b_{i,j}' \in \binset$ be the outcome of the Hadamard test (by \cref{lemma:hadamard}) performed on $\rho_i$ and $\mathsf{Samplize}_{\delta}\ave{\begin{tcolorbox}[nobeforeafter,
  after={\xspace},
  hbox,
  tcbox raise base,
  fontupper=\ttfamily,
  colback=gray!12,
  size=fbox] $\texttt{ApproxDiffPower}\rbra{\alpha, \epsilon_p, \delta_p, \delta_q}^{U_1, U_2}$\end{tcolorbox}}\sbra{\rho_0, \rho_1}$ (as if it were unitary).
        \EndFor
        \State $p_i \gets \frac{1}{k} \sum_{j=1}^k b_{i,j}'$.
        \EndFor

        \State \Return $2^{2\alpha-1} \delta_p^{\alpha-1} \rbra{p_0-p_1}$.
    \end{algorithmic}
\end{algorithm}

\begin{proof}
    To estimate $\tr\rbra{\abs{\rho_0-\rho_1}^{\alpha}}$, our approach estimates the terms $\tr\rbra{\rho_0 \cdot \sgn\rbra{\rho_0 - \rho_1} \cdot \abs{\rho_0 - \rho_1}^{\alpha-1}}$ and $\tr\rbra{\rho_1 \cdot \sgn\rbra{\rho_0 - \rho_1} \cdot \abs{\rho_0 - \rho_1}^{\alpha-1}}$ in \Cref{eq:tr-power-less-1} using samples of $\rho_0$ and $\rho_1$. Specifically, our quantum sample algorithm extends the quantum query algorithm in \Cref{thm:alpha-quasi-query} via the (multi-)samplizer~\cite{WZ24b,WZ24c}.
    We present our quantum algorithm as follows with its formal description given in \cref{algo:alpha-distance-less-1}.
    
    \paragraph{Step 1: Construct a block-encoding of $A - B$ given block-encodings of $A$ and $B$.}
    This step is the same as Step 1 in the proof of \cref{thm:Lalpha-sample-algo}.
    Given two unitary operators $U_1$, a $\rbra{1, a, 0}$-block-encoding of $A$, and $U_2$, a $\rbra{1, a, 0}$-block-encoding of $B$, we can implement a unitary operator $U_{3}$ that is a $\rbra{2, a+1, 0}$-block-encoding of $A - B$, using $1$ query to $U_1$ and $U_2$. 

    \paragraph{Step 2: Construct a block-encoding of $\sgn\rbra{A-B} \cdot \abs{A - B}^{\alpha-1}$.}
    Let $\epsilon_p, \delta_p, \delta_q \in \rbra{0, 1/2}$ be parameters to be determined. 
    By \cref{lemma:uniformPolyApprox-signedNegativePower}, there is a polynomial $P \in \mathbb{R}\sbra{x}$ of degree $d = O\rbra{\log\rbra{1/\epsilon_p}/{\delta_p}}$ such that $\max_{x \in \sbra{-1, 1} \setminus \rbra{-\delta_p, \delta_p}} \abs{P\rbra{x} - \frac{1}{2} \delta_p^{1-\alpha} \sgn\rbra{x} \abs{x}^{\alpha-1}} \leq \epsilon_p$ and $\max_{x \in \sbra{-1, 1}} \abs{P\rbra{x}} \leq 1$.
    By \cref{lemma:qsvt} with $P_\bullet \coloneqq \frac{1}{2}P$, $\alpha \coloneqq 1$, $a \coloneqq a+1$, $\epsilon \coloneqq 0$ and $d \coloneqq O\rbra{\log\rbra{1/\epsilon_p}/{\delta_p}}$, we can implement a quantum circuit $U_{4}$ that is a $\rbra{1, a+3, \delta_q}$-block-encoding of $\frac{1}{2}P\rbra{\frac{A-B}{2}}$, by using $O\rbra{\log\rbra{1/\epsilon_p}/{\delta_p}}$ queries to $U_{3}$. 
    Moreover, the classical description of $U_{4}$ can be computed in deterministic time $\poly\rbra{1/\delta_p, \log\rbra{1/\epsilon_p}, \log\rbra{1/\delta_q}}$.

    Combining Steps 1 and 2, let $\texttt{ApproxDiffPower}\rbra{\alpha, \epsilon_p, \delta_p, \delta_q}^{U_1,U_2}$ be the implementation of $U_4$ by using $O\rbra{\log\rbra{1/\epsilon_p}/{\delta_p}}$ queries to $U_1$ and $U_2$. 

    \paragraph{Step 3: Estimate $\tr\rbra{\rho_i P\rbra{\rho_0-\rho_1}}$.}
    For our purpose, we first consider the case that $A = \rho_0/2$ and $B = \rho_1/2$. 
    In this case, $U_4$ is a $\rbra{1, a+3, \delta_q}$-block-encoding of $\frac{1}{2}P\rbra{\frac{\rho_0-\rho_1}{4}}$.
    For convenience, suppose that $U_4$ is a $\rbra{1, a+3, 0}$-block-encoding of some operator $D$, where $D$ satisfies 
    \begin{equation} \label{eq:diff-op-quasi-less-1}
    \Abs*{D - \frac{1}{2}P\rbra*{\frac{\rho_0-\rho_1}{4}}} \leq \delta_q. 
    \end{equation}
    For $i \in \binset$, if we perform the Hadamard test (\cref{lemma:hadamard}) on $\rho_i$ and $U_4$, then an outcome $b_i \in \cbra{0, 1}$ will be obtained with
    \[
    \Pr\sbra{b_i = 0} = \frac{1}{2} + \frac{1}{2} \Real\sbra{\tr\rbra{D \rho_i}}. 
    \]
    Let $\delta > 0$ be a parameter to be determined.
    Then using the multi-samplizer (\cref{lemma:multi-samplizer}), we can approximately implement $U_4$ by $\mathsf{Samplize}_{\delta}\ave{\texttt{ApproxDiffPower}\rbra{\alpha, \epsilon_p, \delta_p, \delta_q}^{U_1, U_2}}\sbra{\rho_0, \rho_1}$, using
    \[
    O\rbra*{\frac{1}{\delta}\rbra*{\frac{1}{\delta_p}\log\rbra*{\frac{1}{\epsilon_p}}}^2 \log^2\rbra*{\frac{1}{\delta}\frac{1}{\delta_p}\log\rbra*{\frac{1}{\epsilon_p}}}} = O\rbra*{\frac{1}{\delta \delta_p^2} \log^2\rbra*{\frac{1}{\epsilon_p}} \log^2\rbra*{\frac{1}{\delta\delta_p}\log\rbra*{\frac{1}{\epsilon_p}}}}
    \]
    samples of $\rho_0$ and $\rho_1$. 
    Let $b_i' \in \cbra{0, 1}$ be the outcome of the Hadamard test (by \cref{lemma:hadamard}) on $\rho_i$ and $\mathsf{Samplize}_{\delta}\ave{\texttt{ApproxDiffPower}\rbra{\alpha, \epsilon_p, \delta_p, \delta_q}^{U_1, U_2}}\sbra{\rho_0, \rho_1}$ (as if it were unitary), then 
    \[
    \abs{\Pr\sbra{b_i = 0} - \Pr\sbra{b'_i = 0}} \leq \delta. 
    \]
    By Hoeffding's inequality, we can obtain an estimate $p_i$ of $\Real\sbra{\tr\rbra{D \rho_i}}$ to additive error $\epsilon_H$ with success probability $\geq 0.9$ by $O\rbra{1/\epsilon_H^2}$ repetitions of the Hadamard test, i.e., 
    \begin{equation} \label{eq:sampl-error-less-1}
    \Pr\sbra[\big]{\abs*{p_i - \Real\sbra{\tr\rbra{D \rho_i}}} \leq \delta+\epsilon_H} \geq 0.9.
    \end{equation}

    \paragraph{Step 4: Return $2^{2\alpha-1} \delta_p^{\alpha-1} \rbra{p_0-p_1}$ as the estimate.}

    \paragraph{Error analysis and parameter selection.}
    
    The overall process from Step 1 to Step 3 uses 
    \[
    O\rbra*{\frac{1}{\epsilon_H^2}} \cdot O\rbra*{\frac{1}{\delta \delta_p^2} \log^2\rbra*{\frac{1}{\epsilon_p}} \log^2\rbra*{\frac{1}{\delta\delta_p}\log\rbra*{\frac{1}{\epsilon_p}}}} = O\rbra*{\frac{1}{\delta \delta_p^2 \epsilon_H^2} \log^2\rbra*{\frac{1}{\epsilon_p}} \log^2\rbra*{\frac{1}{\delta\delta_p}\log\rbra*{\frac{1}{\epsilon_p}}}}
    \]
    samples of $\rho_0$ and $\rho_1$, and
    \[
    O\rbra*{\frac{n}{\delta \delta_p^2 \epsilon_H^2} \log^2\rbra*{\frac{1}{\epsilon_p}} \log^2\rbra*{\frac{1}{\delta\delta_p}\log\rbra*{\frac{1}{\epsilon_p}}}}
    \]
    one- and two-qubit gates. 
    Moreover, the classical description of this quantum circuit can be computed in deterministic time $\poly\rbra{1/\delta, 1/\delta_p, 1/\epsilon_H, \log\rbra{1/\epsilon_p}, \log\rbra{1/\delta_q}}$.
    To determine the parameters, we note that 
    \begin{equation} \label{eq:error-sample-less-1}
        \Pr\sbra*{ \abs*{ 2^{2\alpha-1} \delta_p^{\alpha-1} \rbra{p_0-p_1} - \Lalpha(\rho_0,\rho_1)} \leq 2^{2\alpha-1} \delta_p^{\alpha - 1} \rbra*{2\delta + 2\epsilon_H + 2\delta_q + 6 r \delta_p + \epsilon_p} } \geq 0.8.
    \end{equation}
    Then, by setting $\delta_p = \rbra{4^{-\alpha-4}\epsilon/r}^{1/\alpha}$ and $\delta = \epsilon_H = \delta_q = \epsilon_p = 4^{-\alpha-2} \delta_p^{1-\alpha} \epsilon$, we can estimate $\Lalpha(\rho_0,\rho_1)$ to within additive error $\epsilon$ with success probability at least $0.8$, using 
    \[
    O\rbra*{\frac{r^{\frac{5}{\alpha}-3}}{\epsilon^{\frac{5}{\alpha}}}\log^4\rbra*{\frac{r}{\epsilon}}}
    \]
    samples of $\rho_0$ and $\rho_1$, and 
    \[
    O\rbra*{\frac{nr^{\frac{5}{\alpha}-3}}{\epsilon^{\frac{5}{\alpha}}}\log^4\rbra*{\frac{r}{\epsilon}}}
    \]
    one- and two-qubit gates. 
    Moreover, the classical description of this quantum circuit can be computed in deterministic time $\poly\rbra{r, 1/\epsilon}$.

    To complete the proof, it remains to show \cref{eq:error-sample-less-1}.
    By \cref{eq:diff-op-quasi-less-1}, we have
    \begin{align*}
        \abs*{ \Real\sbra{\tr\rbra{D \rho_i}} - \tr\rbra*{ \frac{1}{2} P\rbra*{\frac{\rho_0-\rho_1}{4}} \rho_i } }
        & \leq \abs*{ \tr\rbra{D \rho_i} - \tr\rbra*{ \frac{1}{2} P\rbra*{\frac{\rho_0-\rho_1}{4}} \rho_i } } \\
        & \leq \Abs*{ D - \frac{1}{2} P\rbra*{\frac{\rho_0-\rho_1}{4}} } \\
        & \leq \delta_q.
    \end{align*}
    On the other hand, 
    \begin{equation} \label{eq:abs-diff-sample-less-1}
        \abs*{ \frac{1}{2}\tr\rbra*{P\rbra*{\frac{\nu}{4}} \nu} -  \frac{1}{2}\tr\rbra*{\frac{1}{2} \delta_p^{1-\alpha} \sgn\rbra*{\frac{\nu}{4}} \abs*{\frac{\nu}{4}}^{\alpha-1} \nu} } \leq \epsilon_p + 6 r \delta_p.
    \end{equation}
    To see this, suppose that $\nu = \rho_0 - \rho_1 = \sum_j \lambda_j \ketbra{\psi_j}{\psi_j}$ is the spectral decomposition with $\sum_j\abs{\lambda_j} \leq 2$ and $\abs{\lambda_j} \leq 1$ for all $j$. 
    Then, 
    \[
    \abs*{ \frac{1}{2}\tr\rbra*{P\rbra*{\frac{\nu}{4}} \nu} -  \frac{1}{2}\tr\rbra*{\frac{1}{2} \delta_p^{1-\alpha} \sgn\rbra*{\frac{\nu}{4}} \abs*{\frac{\nu}{4}}^{\alpha-1} \nu} } \leq \sum_j \abs*{g\rbra{\lambda_j}},
    \]
    where
    \[
    g\rbra{x} \coloneqq \frac{1}{2}P\rbra*{\frac{x}{4}} x - \frac{1}{4} \delta_p^{1-\alpha} \sgn\rbra*{\frac{x}{4}} \abs*{\frac{x}{4}}^{\alpha-1} x.
    \]
    Here, we split the summation into three parts:
    \[
    \sum_{j} = \sum_{j \colon \abs{\lambda_j} > 4\delta_p} + \sum_{j \colon 0 < \abs{\lambda_j} < 4\delta_p} + \sum_{j \colon \lambda_j = 0}. 
    \]
    For the first part, 
    \begin{align*}
        \sum_{j \colon \abs{\lambda_j} > 4\delta_p} \abs*{g\rbra{\lambda_j}}
        & = \sum_{j \colon \abs{\lambda_j} > 4\delta_p} \frac{1}{2} \abs*{\lambda_j} \abs*{P\rbra*{\frac{\lambda_j}{4}} - \frac{1}{2} \delta_p^{1-\alpha} \sgn\rbra*{\frac{\lambda_j}{4}} \abs*{\frac{\lambda_j}{4}}^{\alpha-1}} \\
        & \leq \sum_{j \colon \abs{\lambda_j} > 4\delta_p} \frac{1}{2} \abs*{\lambda_j} \epsilon_p \\
        & \leq \epsilon_p.
    \end{align*}
    For the second part, 
    \begin{align*}
        \sum_{j \colon 0 < \abs{\lambda_j} < 4\delta_p} \abs*{g\rbra{\lambda_j}}
        & \leq \sum_{j \colon 0 < \abs{\lambda_j} < 4\delta_p} \rbra*{ \frac{1}{2} \abs*{\lambda_j} \abs*{P\rbra*{\frac{\lambda_j}{4}}} + \frac{1}{4} \delta_p^{1-\alpha} \abs*{\frac{\lambda_j}{4}}^{\alpha-1} \abs{\lambda_j} } \\
        & \leq \sum_{j \colon 0 < \abs{\lambda_j} < 4\delta_p} \rbra*{ 2 \delta_p + \delta_p } \\
        & \leq 6 r \delta_p.
    \end{align*}
    For the third part, 
    \[
    \sum_{j \colon \lambda_j = 0} \abs*{g\rbra{\lambda_j}} = 0.
    \]
    Summing up these three cases yields \cref{eq:abs-diff-sample-less-1}.

    By \cref{eq:sampl-error-less-1}, we have that with probability at least $0.9^2 > 0.8$, it holds that $\abs{p_0 - \Real\sbra{\tr\rbra{D\rho_0}}} \leq \delta + \epsilon_H$ and $\abs{p_1 - \Real\sbra{\tr\rbra{D\rho_1}}} \leq \delta + \epsilon_H$, in which case, 
    \begin{align*}
        \abs*{\rbra*{p_0 - p_1} - \frac{\delta_p^{1-\alpha}}{4^{\alpha}} \tr\rbra*{\abs{\nu}^\alpha}}
        & \leq \abs{p_0 - \Real\sbra{\tr\rbra{D\rho_0}}} + \abs{p_1 - \Real\sbra{\tr\rbra{D\rho_1}}} \\
        & \qquad + \abs*{\Real\sbra{\tr\rbra{D\rho_0}} - \tr\rbra*{\frac{1}{2}P\rbra*{\frac{\nu}{4}}\rho_0}} +  \abs*{\Real\sbra{\tr\rbra{D\rho_1}} - \tr\rbra*{\frac{1}{2}P\rbra*{\frac{\nu}{4}}\rho_1}} \\
        & \qquad + \abs*{\rbra*{\tr\rbra*{\frac{1}{2}P\rbra*{\frac{\nu}{4}}\rho_0} - \tr\rbra*{\frac{1}{2}P\rbra*{\frac{\nu}{4}}\rho_1}} - \frac{\delta_p^{1-\alpha}}{4^{\alpha}} \tr\rbra*{\abs{\nu}^\alpha}} \\
        & \leq 2\delta + 2\epsilon_H + 2\delta_q + 6 r \delta_p + \epsilon_p.
    \end{align*}
    These yield \cref{eq:error-sample-less-1}. 
\end{proof}

Another quantum sample algorithm with worse time complexity can be obtained by directly combining \cref{lemma:twz-samplizer} and \cref{thm:alpha-quasi-query}, as mentioned in \cref{remark:sample-time-tradeoff}. 

\begin{theorem} \label{thm:alpha-le-1-better-samplizer}
    For every constant $0 < \alpha < 1$, there is a quantum algorithm that estimates $\Lalpha(\rho_0,\rho_1)$ to within additive error $\epsilon$, using 
    \[
    O\rbra*{\frac{r^{\frac{4}{\alpha}-2}}{\epsilon^{\frac{4}{\alpha}}} \log^2\rbra*{\frac{r}{\epsilon}}}
    \]
    samples of $n$-qubit quantum states $\rho_0$ and $\rho_1$ and 
    \[
    \widetilde O\rbra*{\frac{r^{\frac{16}{\alpha}-8}}{\epsilon^{\frac{16}{\alpha}}} \log^8\rbra*{\frac{r}{\epsilon}} \poly\rbra{n}}
    \] 
    one- and two-qubit gates.
\end{theorem}

\section{New inequalities relating trace distance to Schatten norm and power distances}
\label{sec:new-inequalities}

In this section, we provide new inequalities that connect the trace distance to two Schatten-type distances: the quantum Schatten $\alpha$-norm distance for $\alpha \geq 1$ in \Cref{subsec:inequalities-T-vs-Talpha}, and the quantum Schatten $\alpha$-power distance for $0<\alpha<1$ in \Cref{subsec:inequalities-T-vs-Lalpha}. Notably, both distances are metrics in their respective regimes.

\subsection{Rank-dependent inequalities between trace distance and \texorpdfstring{$\Talpha$}{} for \texorpdfstring{$\alpha\geq 1$}{}}
\label{subsec:inequalities-T-vs-Talpha}

We generalize the rank-dependent inequalities between the (squared) Hilbert-Schmidt distance and the trace distance, as demonstrated in~\cite[Appendix G]{Coles12} and~\cite[Theorem 1]{CCC19} for the case of $\alpha=2$, to all $1 \leq \alpha \leq \infty$:

\begin{theorem}[$\Talpha$ vs.~$\td$ for $\alpha \geq 1$]
    \label{thm:Talpha-vs-T}
    Let $\rho_0$ and $\rho_1$ be quantum states. The following holds:
    \begin{enumerate}[label={\upshape(\arabic*)},topsep=0.33em, itemsep=0.33em, parsep=0.33em]
        \item For any $\alpha$ in the range $1\leq\alpha < \infty$, 
        \label{thmitem:Talpha-inequalities-alpha}
        \[2^{1-\frac{1}{\alpha}} \cdot \Talpha(\rho_0,\rho_1) \leq \td(\rho_0,\rho_1) \leq 2 \rbra*{ \rank(\rho_0)^{1-\alpha} + \rank(\rho_1)^{1-\alpha} }^{-\frac{1}{\alpha}} \cdot \Talpha(\rho_0,\rho_1).\]
        \item For $\alpha = \infty$, 
        $2 \cdot \td_{\infty}(\rho_0,\rho_1) \leq \td(\rho_0,\rho_1) \leq 2 \min\cbra*{\rank(\rho_0),\rank(\rho_1)} \cdot \td_{\infty}(\rho_0,\rho_1).$
        \label{thmitem:Talpha-inequalities-infty}
    \end{enumerate}
\end{theorem}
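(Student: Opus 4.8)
The plan is to reduce both $\td$ and $\Talpha$ to the spectrum of $\nu \coloneqq \rho_0 - \rho_1$ through its Jordan decomposition, which realizes the orthogonal positive parts mentioned in the introduction. Write $\nu = \varsigma_0 - \varsigma_1$ with $\varsigma_0,\varsigma_1 \geq 0$ supported on orthogonal subspaces (the positive and negative eigenspaces of $\nu$). Since $\tr(\nu) = \tr(\rho_0)-\tr(\rho_1) = 0$, the two parts share a common trace $t \coloneqq \tr(\varsigma_0) = \tr(\varsigma_1)$. Orthogonality gives $\abs{\nu} = \varsigma_0 + \varsigma_1$ and $\abs{\nu}^\alpha = \varsigma_0^\alpha + \varsigma_1^\alpha$, so that $2\td(\rho_0,\rho_1) = \tr\abs{\nu} = 2t$ (hence $\td = t$) and $\rbra{2\Talpha(\rho_0,\rho_1)}^\alpha = \Abs{\nu}_\alpha^\alpha = \tr(\varsigma_0^\alpha) + \tr(\varsigma_1^\alpha)$. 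Writing $m_i \coloneqq \rank(\varsigma_i)$ and letting $(p_j)$, $(q_k)$ denote the positive eigenvalues of $\varsigma_0,\varsigma_1$, the whole statement collapses to two elementary power-mean inequalities applied to the nonnegative vectors $(p_j)$ and $(q_k)$, each summing to $t$.

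For the left inequality I would invoke the upper bound $\tr(\varsigma_i^\alpha) = \sum_j p_j^\alpha \leq \rbra{\sum_j p_j}^\alpha = t^\alpha$, valid for every $\alpha \geq 1$. Summing over $i \in \cbra{0,1}$ yields $\rbra{2\Talpha}^\alpha \leq 2 t^\alpha$, i.e.\ $2\Talpha \leq 2^{1/\alpha} t = 2^{1/\alpha}\,\td$, which is precisely $2^{1-\frac{1}{\alpha}}\Talpha \leq \td$. The right inequality is where the ranks enter, and I expect its crux to be the main obstacle.

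That obstacle is the rank comparison $\rank(\varsigma_0) \leq \rank(\rho_0)$ (and symmetrically $\rank(\varsigma_1) \leq \rank(\rho_1)$), which I would establish by a compression argument. Let $\Pi$ project onto $\operatorname{range}(\varsigma_0)$, the positive eigenspace of $\nu$. On this subspace $\Pi\nu\Pi = \Pi\rho_0\Pi - \Pi\rho_1\Pi$ is positive definite, and since $\Pi\rho_1\Pi \geq 0$ we get $\Pi\rho_0\Pi \geq \Pi\nu\Pi > 0$ on $\operatorname{range}(\Pi)$; hence $\rank(\Pi\rho_0\Pi) = \dim\operatorname{range}(\Pi) = m_0$. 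On the other hand, factoring $\rho_0 = \sqrt{\rho_0}\,\sqrt{\rho_0}$ gives $\Pi\rho_0\Pi = (\Pi\sqrt{\rho_0})(\Pi\sqrt{\rho_0})^\dagger$, so $\rank(\Pi\rho_0\Pi) \leq \rank(\sqrt{\rho_0}) = \rank(\rho_0)$, and therefore $m_0 \leq \rank(\rho_0)$. Granting this, the lower power-mean (Hölder) bound $t = \sum_j p_j \leq m_0^{1-\frac{1}{\alpha}}\rbra{\sum_j p_j^\alpha}^{1/\alpha}$ rearranges to $\tr(\varsigma_0^\alpha) \geq t^\alpha m_0^{1-\alpha}$, and likewise for $\varsigma_1$. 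Summing, $\rbra{2\Talpha}^\alpha \geq t^\alpha\rbra{m_0^{1-\alpha} + m_1^{1-\alpha}}$; since $1-\alpha \leq 0$ and $m_i \leq \rank(\rho_i)$ force $m_i^{1-\alpha} \geq \rank(\rho_i)^{1-\alpha}$, this gives $\rbra{2\Talpha}^\alpha \geq t^\alpha\rbra{\rank(\rho_0)^{1-\alpha} + \rank(\rho_1)^{1-\alpha}}$, which rearranges to exactly the claimed upper bound on $\td = t$. (When $\nu = 0$ everything is trivially $0$; otherwise trace-zero forces $m_0,m_1 \geq 1$ so no division by zero occurs, and $\alpha = 1$ collapses both inequalities to the identity $\td = \td$.)

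Finally, for part~\ref{thmitem:Talpha-inequalities-infty} I would simply pass to the limit $\alpha \to \infty$ in part~\ref{thmitem:Talpha-inequalities-alpha}, using $\Talpha \to \td_\infty$. The left constant satisfies $2^{1-\frac{1}{\alpha}} \to 2$, recovering $2\,\td_\infty \leq \td$. On the right, factoring out the dominant term shows $\rbra{\rank(\rho_0)^{1-\alpha}+\rank(\rho_1)^{1-\alpha}}^{-1/\alpha} \to \min\cbra{\rank(\rho_0),\rank(\rho_1)}$, yielding $\td \leq 2\min\cbra{\rank(\rho_0),\rank(\rho_1)}\cdot\td_\infty$.
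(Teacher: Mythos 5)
Your proposal is correct, and its skeleton coincides with the paper's: the same Jordan decomposition $\rho_0-\rho_1=\varsigma_0-\varsigma_1$ into orthogonal positive parts with common trace $\td(\rho_0,\rho_1)$, followed by power-mean estimates on $\tr\rbra{\varsigma_b^\alpha}$. Three of your steps, however, differ in a way worth recording. First, you prove the left inequality from scratch via superadditivity, $\sum_j p_j^\alpha \leq \rbra[\big]{\sum_j p_j}^\alpha$ for $\alpha\geq 1$, whereas the paper imports it as a citation (\cite[Lemma 4.7]{LW25}); your version keeps the theorem self-contained. Second, and most substantively, the crux $\rank(\varsigma_b)\leq\rank(\rho_b)$ is only asserted in the paper with an informal remark (``the support of $\varsigma_b$ consists of those supports of $\rho_b$ that is `larger than' $\rho_{1-b}$''), while you supply a genuine proof: compressing by the projector $\Pi$ onto $\operatorname{range}(\varsigma_0)$ gives $\Pi\rho_0\Pi \geq \Pi(\rho_0-\rho_1)\Pi = \varsigma_0 > 0$ on $\operatorname{range}(\Pi)$, so $\rank(\Pi\rho_0\Pi)=\rank(\varsigma_0)$, and the factorization $\Pi\rho_0\Pi=(\Pi\sqrt{\rho_0})(\Pi\sqrt{\rho_0})^{\dagger}$ caps this by $\rank(\rho_0)$. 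This closes a gap the paper leaves at the level of intuition. Third, you obtain the $\alpha=\infty$ case by letting $\alpha\to\infty$ in part (1), using $\Talpha\to\td_\infty$ and $\rbra[\big]{\rank(\rho_0)^{1-\alpha}+\rank(\rho_1)^{1-\alpha}}^{-1/\alpha}\to\min\cbra{\rank(\rho_0),\rank(\rho_1)}$; the paper instead re-runs a direct spectral argument. The limit route is shorter and automatically consistent with part (1), at the mild cost of invoking convergence of Schatten norms to the operator norm; the paper's direct proof avoids any limiting argument but duplicates work.
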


It is worth noting that \Cref{thmitem:Talpha-inequalities-alpha,thmitem:Talpha-inequalities-infty} in \Cref{thm:Talpha-vs-T} are consistent, specifically 
\[\lim_{\alpha \to \infty} \rbra*{ \rank(\rho_0)^{1-\alpha} + \rank(\rho_1)^{1-\alpha} }^{-\frac{1}{\alpha}} = \min\cbra*{\rank(\rho_0),\rank(\rho_1)}.\]  
Additionally, the inequalities in \Cref{thm:Talpha-vs-T} sharpen the inequalities between the trace norm and the Schatten norm (see, e.g.,~\cite[Equation (1.31)]{AS17}): 
\begin{equation}
    \label{eq:Schatten-norms-bound}
    \forall 1 \leq p \leq \infty, \quad \Abs*{A}_p \leq \Abs*{A}_1 \leq r_A^{1-1/p} \cdot \Abs*{A}_p. 
\end{equation}

By considering the maximum rank of $\rho_0$ and $\rho_1$, we can derive a simplified form of \Cref{thm:Talpha-vs-T} for convenience: 

\begin{corollary}[$\Talpha$ vs.~$\td$, simplified]
    \label{corr:Talpha-vs-T-simplified}
    For any quantum states $\rho_0$ and $\rho_1$, the following holds: 
    \[ \forall 1 \leq \alpha < \infty, \quad 2^{1-\frac{1}{\alpha}} \cdot \Talpha(\rho_0,\rho_1) \leq \td(\rho_0,\rho_1) \leq \rbra*{2 \max\cbra*{\rank(\rho_0),\rank(\rho_1)}}^{1-\frac{1}{\alpha}} \cdot \Talpha(\rho_0,\rho_1). \]
\end{corollary}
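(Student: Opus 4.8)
The plan is to obtain \Cref{corr:Talpha-vs-T-simplified} directly from \Cref{thm:Talpha-vs-T}\ref{thmitem:Talpha-inequalities-alpha} by replacing the mixed rank-dependent constant with a coarser one that depends only on the larger of the two ranks. The lower bound $2^{1-\frac{1}{\alpha}}\cdot\Talpha(\rho_0,\rho_1) \leq \td(\rho_0,\rho_1)$ is literally the lower bound already proved in \Cref{thm:Talpha-vs-T}\ref{thmitem:Talpha-inequalities-alpha}, so I would simply quote it and do no further work there. Hence everything reduces to the upper bound.

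For the upper bound, since \Cref{thm:Talpha-vs-T}\ref{thmitem:Talpha-inequalities-alpha} already supplies $\td(\rho_0,\rho_1) \leq 2\rbra*{\rank(\rho_0)^{1-\alpha}+\rank(\rho_1)^{1-\alpha}}^{-1/\alpha}\cdot\Talpha(\rho_0,\rho_1)$, by transitivity it suffices to establish the purely numerical inequality
\[2\rbra*{\rank(\rho_0)^{1-\alpha}+\rank(\rho_1)^{1-\alpha}}^{-\frac{1}{\alpha}} \leq \rbra*{2\max\cbra*{\rank(\rho_0),\rank(\rho_1)}}^{1-\frac{1}{\alpha}}.\]
Writing $r_0 = \rank(\rho_0)$, $r_1 = \rank(\rho_1)$, and $r = \max\cbra*{r_0,r_1}$, the crucial observation is that for $\alpha \geq 1$ the exponent $1-\alpha$ is nonpositive, so $t \mapsto t^{1-\alpha}$ is nonincreasing on $[1,\infty)$. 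First I would use this to get $r_0^{1-\alpha} \geq r^{1-\alpha}$ and $r_1^{1-\alpha} \geq r^{1-\alpha}$, hence $r_0^{1-\alpha}+r_1^{1-\alpha} \geq 2r^{1-\alpha}$.

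Then I would apply the decreasing map $t \mapsto t^{-1/\alpha}$ to both sides (reversing the inequality) and multiply by $2$, which gives
\[2\rbra*{r_0^{1-\alpha}+r_1^{1-\alpha}}^{-\frac{1}{\alpha}} \leq 2\rbra*{2r^{1-\alpha}}^{-\frac{1}{\alpha}} = 2^{1-\frac{1}{\alpha}}\cdot r^{\frac{\alpha-1}{\alpha}} = \rbra*{2r}^{1-\frac{1}{\alpha}},\]
which is exactly the target. I do not expect a genuine obstacle: the argument is a one-line monotonicity estimate, and the only point requiring care is the bookkeeping of the two nonpositive exponents $1-\alpha$ and $-1/\alpha$, since each application of the corresponding power reverses the direction of the inequality. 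As a consistency check I would note that at $\alpha = 1$ all exponents $1-\frac{1}{\alpha}$ and $1-\alpha$ vanish and the corollary collapses to the identity $\Talpha(\rho_0,\rho_1) = \td(\rho_0,\rho_1)$, matching the definition of the quantum $\ell_\alpha$ distance.
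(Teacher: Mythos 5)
Your proposal is correct and matches the paper's (implicit) derivation: the paper states \cref{corr:Talpha-vs-T-simplified} as following from \cref{thm:Talpha-vs-T}\ref{thmitem:Talpha-inequalities-alpha} precisely "by considering the maximum rank," which is exactly your monotonicity bound $\rank(\rho_0)^{1-\alpha}+\rank(\rho_1)^{1-\alpha}\geq 2\max\cbra{\rank(\rho_0),\rank(\rho_1)}^{1-\alpha}$ followed by the sign-reversing power $t\mapsto t^{-1/\alpha}$. The algebra checks out, including the endpoint check at $\alpha=1$.
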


Moreover, for pure quantum states, \Cref{thm:Talpha-vs-T} yields the following identity: 
\begin{corollary}[$\Talpha=\td$ for pure states]
    \label{corr:Talpha-vs-T-pure}
    For any pure states $\ketbra{\psi_0}{\psi_0}$ and $\ketbra{\psi_1}{\psi_1}$, we have: 
    \[ \forall 1\leq \alpha \leq \infty, \quad 2^{1-\frac{1}{\alpha}} \cdot \Talpha\rbra*{\ketbra{\psi_0}{\psi_0},\ketbra{\psi_1}{\psi_1}} = \td\rbra*{\ketbra{\psi_0}{\psi_0},\ketbra{\psi_1}{\psi_1}}. \]
\end{corollary}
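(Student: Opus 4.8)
The plan is to derive the equality directly from \Cref{thm:Talpha-vs-T} by exploiting the fact that a pure state has rank exactly one. Setting $\rank(\ketbra{\psi_0}{\psi_0}) = \rank(\ketbra{\psi_1}{\psi_1}) = 1$ in the rank-dependent upper bound of \Cref{thm:Talpha-vs-T}\ref{thmitem:Talpha-inequalities-alpha}, the prefactor becomes $2\rbra*{1^{1-\alpha} + 1^{1-\alpha}}^{-1/\alpha} = 2 \cdot 2^{-1/\alpha} = 2^{1-1/\alpha}$, which coincides exactly with the constant $2^{1-1/\alpha}$ appearing in the lower bound. Thus for pure states the two inequalities collapse to $2^{1-1/\alpha}\Talpha \leq \td \leq 2^{1-1/\alpha}\Talpha$, forcing equality for every finite $\alpha \in [1,\infty)$. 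For $\alpha = \infty$ I would instead apply \Cref{thm:Talpha-vs-T}\ref{thmitem:Talpha-inequalities-infty} with $\min\cbra*{\rank(\ketbra{\psi_0}{\psi_0}),\rank(\ketbra{\psi_1}{\psi_1})} = 1$, which gives $2\td_\infty \leq \td \leq 2\td_\infty$ and hence $\td = 2\td_\infty$, matching the $\alpha=\infty$ form of the claimed equality (where the prefactor $2^{1-1/\alpha}$ equals $2$).

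As an independent sanity check -- and as the route I would take if \Cref{thm:Talpha-vs-T} were not already available -- I would argue spectrally. The operator $\rho_0 - \rho_1 = \ketbra{\psi_0}{\psi_0} - \ketbra{\psi_1}{\psi_1}$ is Hermitian, traceless, and supported on the at-most-two-dimensional span of $\ket{\psi_0}$ and $\ket{\psi_1}$; hence its nonzero eigenvalues are $\pm\lambda$ for a single $\lambda \geq 0$. A direct computation then gives $\td = \tfrac{1}{2}(\lambda + \lambda) = \lambda$, while $\Abs*{\rho_0-\rho_1}_\alpha = \rbra*{2\lambda^\alpha}^{1/\alpha} = 2^{1/\alpha}\lambda$, so that $\Talpha = 2^{1/\alpha - 1}\lambda$ and therefore $2^{1-1/\alpha}\Talpha = \lambda = \td$, independent of $\alpha$. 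The $\alpha = \infty$ case is recovered by taking $\Abs*{\rho_0-\rho_1}_\infty = \lambda$, so $2\td_\infty = \lambda = \td$.

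I do not anticipate a genuine obstacle here, since the result is forced the moment one observes that the two rank factors in \Cref{thm:Talpha-vs-T} collapse to the same constant when both ranks equal one. The only points requiring a word of care are the degenerate case $\ket{\psi_0} = \ket{\psi_1}$ (where $\lambda = 0$ and both sides vanish trivially), and the treatment of $\alpha = \infty$, which I would handle either through the stated limit $\lim_{\alpha\to\infty}\rbra*{\rank(\rho_0)^{1-\alpha}+\rank(\rho_1)^{1-\alpha}}^{-1/\alpha} = \min\cbra*{\rank(\rho_0),\rank(\rho_1)}$ or by invoking \Cref{thm:Talpha-vs-T}\ref{thmitem:Talpha-inequalities-infty} directly.
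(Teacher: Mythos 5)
Your primary argument is exactly the paper's proof: the corollary is stated as an immediate consequence of \Cref{thm:Talpha-vs-T}, obtained by setting both ranks to $1$ so that the upper-bound prefactor $2\rbra*{1+1}^{-1/\alpha} = 2^{1-1/\alpha}$ collapses onto the lower-bound constant (and likewise $2\min\cbra{1,1}=2$ for $\alpha=\infty$). Your supplementary spectral computation via the $\pm\lambda$ eigenvalues of $\ketbra{\psi_0}{\psi_0}-\ketbra{\psi_1}{\psi_1}$ is also correct and gives a self-contained verification, but it is not needed beyond the rank-one substitution.
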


We now proceed with the proof of \Cref{thm:Talpha-vs-T}: 

\begin{proof}[Proof of \Cref{thm:Talpha-vs-T}]
    We begin by defining the following positive semi-definite matrices:
    \begin{equation}
        \label{eq:pos-and-neg-in-stateDiff}
        \varsigma_0 \coloneqq \frac{1}{2}\rbra*{ \rho_0-\rho_1 + \abs*{\rho_0-\rho_1} } \quad\text{and}\quad \varsigma_1 \coloneqq \frac{1}{2}\rbra*{ \rho_1-\rho_0 + \abs*{\rho_0-\rho_1} }.
    \end{equation}

    It is easy to verify that $\rho_0-\rho_1 = \varsigma_0-\varsigma_1$, the supports of $\varsigma_0$ and $\varsigma_1$ are orthogonal, and
    \begin{equation}
        \label{eq:T-positive-supports}
        \forall b \in \binset, \quad \tr(\varsigma_b) = \frac{1}{2} \tr\abs*{\rho_0-\rho_1} = \td(\rho_0,\rho_1). 
    \end{equation}
    If $\td(\rho_0,\rho_1)=0$, then $\rho_0=\rho_1$ and the statement is trivial. Assume from now on that $\td(\rho_0,\rho_1)>0$, and consequently, $\tr(\varsigma_b)>0$ for $b\in\binset$.

    \paragraph{The case $1 \leq \alpha < \infty$.}
    We now establish \Cref{thmitem:Talpha-inequalities-alpha}. The lower bound follows directly from the inequality $\frac{1}{2} \tr\rbra*{\abs{\rho_0-\rho_1}^{\alpha}} \leq \td(\rho_0,\rho_1)^{\alpha}$, which is proven in~\cite[Lemma 4.7]{LW25}. To prove the upper bound, we proceed by noticing the following equalities:
    \begin{equation}
        \label{eq:Talpha-vs-T-expression}
        \tr\rbra*{ \abs*{\rho_0-\rho_1}^{\alpha} } = \tr\rbra*{ \abs*{\varsigma_0 - \varsigma_1}^{\alpha} } = \tr\rbra*{\varsigma_0^{\alpha}} + \tr\rbra*{\varsigma_1^{\alpha}}.
    \end{equation}

    Hence, it suffices to bound $\tr\rbra*{\varsigma_b}^{\alpha}$ for $b\in\binset$. To achieve this, we introduce the normalized states $\widetilde{\varsigma}_b \coloneqq \varsigma_b / \tr(\varsigma_b)$, which enables us to derive the following inequalities: 
    \begin{equation}
        \label{eq:Talpha-vs-T-bounds}
        \forall b \in \binset, \quad \tr\rbra*{ \widetilde{\varsigma}_b^{\alpha} } 
        \geq \sum_{i=1}^{\rank\rbra*{\widetilde{\varsigma}_b}} \rank\rbra*{\widetilde{\varsigma}_b}^{-\alpha} = \rank\rbra*{\widetilde{\varsigma}_b}^{1-\alpha} \geq \rank\rbra*{\rho_b}^{1-\alpha}. 
    \end{equation}
    
    Here, the first inequality follows from the convexity of $x^q$, while the last inequality relies on the facts that
    \begin{equation}
        \label{eq:stateDiff-rank-bound}
        \forall b\in\binset, \quad \rank\rbra*{\rho_b} \geq \rank\rbra*{\widetilde{\varsigma}_b},\footnote{We only consider the case $b=0$, as the case $b=1$ is identical. Let $\Pi_0$ be the projector onto $\supp(\varsigma_0)$. Then $\Pi_0\rho_0 \Pi_0 - \Pi_0\rho_1 \Pi_0 = \Pi_0(\rho_0-\rho_1)\Pi_0 = \varsigma_0$, and hence $\Pi_0\rho_0 \Pi_0 = \varsigma_0 + \Pi_0\rho_1 \Pi_0 \succeq \varsigma_0$. Since $\varsigma_0$ is positive definite on $\supp(\varsigma_0)$, the operator $\Pi_0\rho_0 \Pi_0$ is also positive definite on $\supp(\varsigma_0)$, which implies that $\dim\supp(\varsigma_0) \leq \rank(\rho_0)$.}
    \end{equation}
    and that the function $x^{1-\alpha}$ is monotonically non-increasing for $\alpha\geq 1$. 
    Plugging \Cref{eq:Talpha-vs-T-bounds} into \Cref{eq:Talpha-vs-T-expression}, the following holds: 
    \begin{align*}
        \tr\rbra*{ \abs*{\rho_0-\rho_1}^{\alpha} } = \tr\rbra*{\varsigma_0^{\alpha}} + \tr\rbra*{\varsigma_1^{\alpha}}
        &\geq \tr\rbra*{\varsigma_0}^{\alpha} \rank\rbra*{\rho_0}^{1-\alpha} + \tr\rbra*{\varsigma_1}^{\alpha} \rank\rbra*{\rho_1}^{1-\alpha}\\
        &= \td\rbra*{\rho_0,\rho_1}^{\alpha} \rbra*{ \rank\rbra*{\rho_0}^{1-\alpha} + \rank\rbra*{\rho_1}^{1-\alpha} }.
    \end{align*}
    
    We thus prove \Cref{thmitem:Talpha-inequalities-alpha} by noting that the last line follows from \Cref{eq:T-positive-supports}.

    \paragraph{The case $\alpha = \infty$.} Next, we demonstrate \Cref{thmitem:Talpha-inequalities-infty}. We start by establishing the lower bound. Let $\lambda(A)$ denote the largest eigenvalue of a matrix $A$. Then, the following holds
    \begin{align*}
        \td(\rho_0,\rho_1) = \frac{1}{2} \rbra*{\tr(\varsigma_0)+\tr(\varsigma_1)}
        &\geq \max\cbra*{\lambda_{\max}(\varsigma_0), \lambda_{\max}(\varsigma_1)}\\
        &= 2 \max\cbra*{ \lambda_{\max}\rbra[\Big]{\frac{\rho_0-\rho_1}{2}}, \lambda_{\max}\rbra[\Big]{\frac{\rho_1-\rho_0}{2}} } 
        = 2 \td_{\infty}(\rho_0,\rho_1).
    \end{align*}
    Here, the first line follows from \Cref{eq:T-positive-supports}, and the second line is a consequence of the definitions of $\varsigma_0$ and $\varsigma_1$. 

    To prove the upper bound, by symmetry, assume without loss of generality that $\rank(\rho_0)\leq \rank(\rho_1)$. Consider the spectral decomposition of $\varsigma_0$ and $\varsigma_1$, where $\cbra*{\lambda_{b,i}}_{1\leq i\leq r_b}$ denotes the set of eigenvalues of $\varsigma_b$ for $b\in\binset$ and $\cbra*{\ket{v_{b,i}}}_{1\leq i \leq r_b}$ denotes an orthonormal eigenbasis for the support of $\varsigma_b$:
    \begin{equation}
        \label{eq:T-spectral-decomp}
        \varsigma_0 = \sum_{i=1}^{r_0} \lambda_{0,i} \ketbra{v_{0,i}}{v_{0,i}} \quad \text{and} \quad \varsigma_1 = \sum_{i=1}^{r_1} \lambda_{1,i} \ketbra{v_{1,i}}{v_{1,i}}. 
    \end{equation}
    Here, $r_b \coloneqq \rank(\varsigma_b)$ satisfies $r_b \leq \rank(\rho_b)$ for $b\in\binset$ by \Cref{eq:stateDiff-rank-bound}. The inequality arises because, intuitively, the support of $\varsigma_b$ consists of those supports of $\rho_b$ that is ``larger than'' $\rho_{1-b}$.
    Substituting \Cref{eq:T-spectral-decomp} into \Cref{eq:T-positive-supports}, it follows that: 
    \begin{subequations}
    \label{eq:T-vs-Tinf-upperBound}
    \begin{align}
        \td(\rho_0,\rho_1) = \tr(\varsigma_0) 
        = \sum_{i=1}^{r_0} \lambda_{0,i}
        &\leq \rank(\rho_0) \max\cbra*{\lambda_{\max}(\varsigma_0),\lambda_{\max}(\varsigma_1)}\\
        &= 2\rank(\rho_0) \lambda_{\max}\rbra*{\frac{\abs{\rho_0-\rho_1}}{2}} = 2\rank(\rho_0) \td_{\infty}(\rho_0,\rho_1). 
    \end{align}
    \end{subequations}

    Finally, by combining \Cref{eq:T-vs-Tinf-upperBound} with symmetry, we obtain the desired bound: 
    \[ \td(\rho_0,\rho_1) \leq 2\min\cbra*{\rank(\rho_0),\rank(\rho_1)} \cdot \td_{\infty}(\rho_0,\rho_1). \qedhere\]
\end{proof}

\subsection{Rank-dependent inequalities between trace distance and \texorpdfstring{$\Lambda_\alpha$}{} for \texorpdfstring{$0<\alpha<1$}{}}
\label{subsec:inequalities-T-vs-Lalpha}

We establish a rank-dependent inequality between the quantum Schatten $\alpha$-power distance and the trace distance for $\alpha \in (0,1)$, analogous to \Cref{thm:Talpha-vs-T}: 

\begin{theorem}[$\Lalpha$ vs.~$\td$ for $0<\alpha<1$]
    \label{thm:Lalpha-vs-T-less-1}
    Let $\rho_0$ and $\rho_1$ be quantum states. Then, for any $\alpha$ in the range $0<\alpha<1$, the following holds: 
    \[
        \td(\rho_0,\rho_1)^{\alpha}
        \leq \Lalpha(\rho_0,\rho_1)
        \leq \frac{\rank(\rho_0)^{1-\alpha}+\rank(\rho_1)^{1-\alpha}}{2}\,\td(\rho_0,\rho_1)^{\alpha}.
    \]
\end{theorem}

By taking the maximum rank of $\rho_0$ and $\rho_1$, we obtain a simplified form: 
\begin{corollary}[$\Lalpha$ vs.~$\td$, simplified]
    \label{corr:Lalpha-vs-T-simplified-less-1}
    For any quantum states $\rho_0$ and $\rho_1$,
    \[
        \forall \alpha\in(0,1), \quad
        \td(\rho_0,\rho_1)^{\alpha}
        \leq \Lalpha(\rho_0,\rho_1)
        \leq \max\cbra*{\rank(\rho_0),\rank(\rho_1)}^{1-\alpha}\,\td(\rho_0,\rho_1)^{\alpha}.
    \]
\end{corollary}

Furthermore, for pure quantum states, \Cref{thm:Lalpha-vs-T-less-1} implies the following identity: 
\begin{corollary}[$\td^\alpha=\Lalpha$ for pure states]
    \label{corr:Lalpha-vs-T-pure}
    For any pure states $\ketbra{\psi_0}{\psi_0}$ and $\ketbra{\psi_1}{\psi_1}$, we have
    \[ \forall \alpha\in(0,1), \quad \td(\ketbra{\psi_0}{\psi_0},\ketbra{\psi_1}{\psi_1})^\alpha = \Lalpha(\ketbra{\psi_0}{\psi_0},\ketbra{\psi_1}{\psi_1}). \]
\end{corollary}
We now continue with the proof of \Cref{thm:Lalpha-vs-T-less-1}:

\begin{proof}[Proof of \Cref{thm:Lalpha-vs-T-less-1}]
    Our construction is similar to the proof of \Cref{thm:Talpha-vs-T}. We start by defining the positive semi-definite matrices $\varsigma_0$ and $\varsigma_1$ as in \Cref{eq:pos-and-neg-in-stateDiff}, with orthogonal supports and satisfying $\rho_0-\rho_1=\varsigma_0-\varsigma_1$ and $\tr(\varsigma_b)=\td(\rho_0,\rho_1)$ for $b\in\binset$, as in \Cref{eq:T-positive-supports}. Since the statement trivially holds when $\td(\rho_0,\rho_1)=0$, namely when $\rho_0=\rho_1$, we can without loss of generality assume that $\td(\rho_0,\rho_1)>0$. Then, by defining normalized states $\widetilde{\varsigma}_b \coloneqq \varsigma_b / \tr(\varsigma_b)$ and applying \Cref{eq:Talpha-vs-T-expression}, we obtain
    \begin{equation}
        \label{eq:Lalpha-expression-less-1}
        2\Lalpha(\rho_0,\rho_1)
        = \tr\rbra*{\abs*{\rho_0-\rho_1}^{\alpha}}
        = \tr\rbra*{\varsigma_0^{\alpha}} + \tr\rbra*{\varsigma_1^{\alpha}}
        = \td(\rho_0,\rho_1)^{\alpha} \sum_{b\in\binset} \tr\rbra*{\widetilde{\varsigma}_b^{\alpha}}.
    \end{equation}

    It therefore suffices to bound $\tr(\widetilde{\varsigma}_b^{\alpha})$ for $b\in\binset$. We consider only the case $b=0$ with $\widetilde{\varsigma}_0$, as the case $b=1$ follows similarly.  Let $\lambda_1,\dots,\lambda_r$ be the nonzero eigenvalues of the quantum state $\widetilde{\varsigma}_0$ of rank $r\coloneqq\rank(\widetilde{\varsigma}_0)$. Then, the following bound holds: 
    \begin{equation}
        \label{eq:Lalpha-vs-T-bounds}
        \forall b\in\binset, \quad 1 \leq \tr\rbra*{\widetilde{\varsigma}_b^\alpha} \leq \rank(\widetilde{\varsigma}_b)^{1-\alpha} \leq \rank(\rho_b)^{1-\alpha}. 
    \end{equation}
    
    Noting that $\tr\rbra*{\widetilde{\varsigma}_0^\alpha} = \sum_{i\in[r]}\lambda_i^\alpha$, to show the case $b=0$ in \Cref{eq:Lalpha-vs-T-bounds}, the first inequality follows by combining the facts that 
    \[ 1 = 1^\alpha = \rbra[\bigg]{\sum_{i\in[r]} \lambda_i}^\alpha \quad\text{and}\quad \rbra[\bigg]{\sum_{i\in[r]} \lambda_i}^\alpha \leq \sum_{i\in[r]}\lambda_i^\alpha \text{ for } \alpha\in(0,1);\]
    the second inequality follows from Jensen's inequality for the concave function $x\mapsto x^{\alpha}$, namely 
    \[ \frac{1}{r}\sum_{i\in[r]} \lambda_i^{\alpha} \leq \rbra[\bigg]{\frac{1}{r}\sum_{i\in[r]} \lambda_i}^{\alpha} = \rbra*{\frac{1}{r}}^{\alpha};\]
    and the third inequality follows from \Cref{eq:stateDiff-rank-bound} and the fact that the function $x^{1-\alpha}$ is monotonically non-decreasing for $\alpha \in (0,1)$.

    Finally, we complete the proof by substituting \Cref{eq:Lalpha-vs-T-bounds} into \Cref{eq:Lalpha-expression-less-1}, which yields the desired inequality. 
\end{proof}

\section{Computational hardness and lower bounds for fixed \texorpdfstring{$\alpha$}{}}
\label{sec:easy-regime-hardness}

We begin by introducing generalizations of \QSD{} from~\cite{Wat02}, in which the trace distance is replaced by quantum Schatten norm and power distances as the closeness measures: 

\begin{definition}[Quantum State Distinguishability Problem with Schatten $\alpha$-norm, \QSDalpha{}]
	\label{def:QSDalpha}
    Let $Q_0$ and $Q_1$ be quantum circuits acting on $m$ qubits \emph{(}``input length''\emph{)} and having $n$ specified output qubits \emph{(}``output length''\emph{)}, where $m(n)$ is a polynomial function of $n$. Let $\rho_i$ denote the quantum state obtained by running $Q_i$ on state $\ket{0}^{\otimes m}$ and tracing out the non-output qubits. For efficiently computable functions $a(n)$ and $b(n)$ satisfying $0\leq b(n) < a(n) \leq 1$, the promise problem \QSDalpha$[a,b]$ asks one to decide whether: 
	\begin{itemize}[topsep=0.33em, itemsep=0.33em, parsep=0.33em]
		\item \emph{Yes:} A pair of quantum circuits $(Q_0,Q_1)$ such that $\Talpha(\rho_0,\rho_1) \geq a(n)$; 
		\item \emph{No:} A pair of quantum circuits $(Q_0,Q_1)$ such that $\Talpha(\rho_0,\rho_1) \leq b(n)$.
	\end{itemize}

    \noindent Moreover, we denote the restricted version, where $\rho_0$ and $\rho_1$ are pure states, as \PureQSDalpha{}. 
\end{definition}

\begin{definition}[Quantum State Distinguishability Problem with Schatten $\alpha$-power, \PoweredQSDalpha]
    \label{def:PoweredQSDalpha-less-1}
    Let $Q_0$ and $Q_1$ be quantum circuits acting on $m$ qubits (``input length'') and having $n$ specified output qubits (``output length''), where $m(n)$ is polynomial in $n$. Let $\rho_0$ and $\rho_1$ denote the output states obtained by running $Q_0$ and $Q_1$ on $\ket{0}^{\otimes m}$ and tracing out the non-output qubits. For efficiently computable functions $a(n)$ and $b(n)$ satisfying $0\leq b(n) < a(n) \leq 1$, the promise problem \PoweredQSDalpha$[a,b]$ asks one to decide whether:
    \begin{itemize}[topsep=0.33em, itemsep=0.33em, parsep=0.33em]
        \item \emph{Yes:} The pair of quantum circuits $(Q_0,Q_1)$ such that $\Lalpha(\rho_0,\rho_1) \geq a(n)$;
        \item \emph{No:} The pair of quantum circuits $(Q_0,Q_1)$ such that $\Lalpha(\rho_0,\rho_1) \leq b(n)$.
    \end{itemize}
\end{definition}

In the remainder of this section, we first present computational hardness results for \PureQSDalpha{} with $1\leq\alpha\leq\infty$ and \PoweredQSDalpha{} with $0<\alpha<1$ in \Cref{subsec:comp-hardness-fixed}. We then establish quantum query and sample lower bounds for the corresponding estimation tasks in \Cref{subsec:lower-bounds-fixed}.

\subsection{Computational hardness for fixed \texorpdfstring{$\alpha$}{}}
\label{subsec:comp-hardness-fixed}

We first establish the hardness result of \PureQSDalpha{} with $1 \leq \alpha \leq \infty$:

\begin{theorem}[\PureQSDalpha{} is \BQP{}-hard]
    \label{thm:PureQSDalpha-is-BQPhard}
    For any $1 \leq \alpha \leq \infty$ and $n \geq 2$, it holds that:
    \[ \PureQSDalpha\sbra*{2^{\frac{1}{\alpha}-1} \cdot \rbra*{1-2^{-n}},~ 2^{\frac{1}{\alpha}-1-n}} \text{ is } \BQP\text{-hard}. \]
\end{theorem}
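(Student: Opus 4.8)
The plan is to obtain \BQP{}-hardness by a parameter-preserving \emph{identity} reduction from \PureQSD{}, whose \BQP{}-hardness at the relevant thresholds is already available from \Cref{lemma:PureQSD-is-BQPhard}. Concretely, that lemma gives that $\PureQSD\sbra*{1-2^{-n}, 2^{-n}}$ is \BQP{}-hard for every integer $n \geq 2$. The key observation is that for pure states the quantum $\ell_\alpha$ distance and the trace distance are not merely comparable but \emph{exactly proportional}: by \Cref{corr:Talpha-vs-T-pure}, for all $1 \leq \alpha \leq \infty$ one has $\Talpha(\ketbra{\psi_0}{\psi_0},\ketbra{\psi_1}{\psi_1}) = 2^{\frac{1}{\alpha}-1}\,\td(\ketbra{\psi_0}{\psi_0},\ketbra{\psi_1}{\psi_1})$. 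Thus any instance of \PureQSD{} can be reinterpreted verbatim as an instance of \PureQSDalpha{}, with the promise thresholds simply rescaled by the fixed factor $2^{\frac{1}{\alpha}-1}$.

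First I would take an arbitrary pair $(Q_0,Q_1)$ preparing pure states $\ketbra{\psi_0}{\psi_0}$ and $\ketbra{\psi_1}{\psi_1}$ that forms a $\PureQSD\sbra*{1-2^{-n},2^{-n}}$ instance and pass it unchanged to the \PureQSDalpha{} solver. Applying the equality of \Cref{corr:Talpha-vs-T-pure}, in the \emph{yes} case $\td(\cdot,\cdot) \geq 1-2^{-n}$ forces $\Talpha(\cdot,\cdot) = 2^{\frac{1}{\alpha}-1}\td(\cdot,\cdot) \geq 2^{\frac{1}{\alpha}-1}(1-2^{-n})$, while in the \emph{no} case $\td(\cdot,\cdot) \leq 2^{-n}$ forces $\Talpha(\cdot,\cdot) = 2^{\frac{1}{\alpha}-1}\td(\cdot,\cdot) \leq 2^{\frac{1}{\alpha}-1-n}$. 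These are exactly the thresholds $a = 2^{\frac{1}{\alpha}-1}(1-2^{-n})$ and $b = 2^{\frac{1}{\alpha}-1-n}$ appearing in the statement, so yes-instances map to yes-instances and no-instances to no-instances. Since the reduction is the identity (hence trivially polynomial-time) and the rescaled thresholds remain efficiently computable, \BQP{}-hardness of \PureQSDalpha{} follows.

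There is essentially no further technical obstacle once \Cref{corr:Talpha-vs-T-pure} is in hand; the entire content sits in that pure-state equality (and ultimately in \Cref{thm:Talpha-vs-T}). The only points to verify are routine: the threshold arithmetic above, the fact that the rescaling factor $2^{\frac{1}{\alpha}-1}$ lies in $[1/2,1]$ for every $\alpha \in [1,\infty]$ so that the promise gap is preserved up to a constant and stays $1/\poly(n)$, and that the endpoint $\alpha = \infty$ (where $2^{\frac{1}{\alpha}-1} = 1/2$) is covered because \Cref{corr:Talpha-vs-T-pure} holds in the limiting form there as well.
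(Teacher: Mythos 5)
Your proposal is correct and follows essentially the same route as the paper's proof: an identity reduction from the \BQP{}-hard instance $\PureQSD\sbra*{1-2^{-n},2^{-n}}$ of \Cref{lemma:PureQSD-is-BQPhard}, using the pure-state equality $\Talpha = 2^{\frac{1}{\alpha}-1}\cdot\td$ from \Cref{corr:Talpha-vs-T-pure} to rescale both thresholds, and then observing that the factor $2^{\frac{1}{\alpha}-1}\in[1/2,1]$ keeps the promise gap bounded below by a constant. The paper makes this last step explicit by computing the gap as $2^{\frac{1}{\alpha}-1}\rbra*{1-2^{-n+1}} \geq \frac{1}{2}\cdot\frac{1}{2}$, but this is the same argument you sketched.
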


\begin{proof}
    Using \Cref{lemma:PureQSD-is-BQPhard}, it follows that $\PureQSD\big[ 1-2^{-n}, 2^{-n} \big]$ is \BQP{}-hard for $n\geq 2$. Let $Q_0$ and $Q_1$ be the corresponding \BQP{}-hard instance, where these quantum circuits are of polynomial size and prepare the pure states $\ketbra{\psi_0}{\psi_0}$ and $\ketbra{\psi_1}{\psi_1}$, respectively. 
    Using the same hard instance $(Q_0,Q_1)$, we can derive the following from the equality in \Cref{corr:Talpha-vs-T-pure}: 
    \begin{itemize}[topsep=0.33em, itemsep=0.33em, parsep=0.33em]
        \item For \textit{yes} instances, $\td\rbra*{\ketbra{\psi_0}{\psi_0}, \ketbra{\psi_1}{\psi_1}} \geq 1-2^{-n}$ implies that
        \begin{equation}
            \label{eq:PureQSDalpha-BQPhard-yes}
            \Talpha\rbra*{\ketbra{\psi_0}{\psi_0}, \ketbra{\psi_1}{\psi_1}} 
            = 2^{\frac{1}{\alpha}-1} \cdot \td\rbra*{\ketbra{\psi_0}{\psi_0}, \ketbra{\psi_1}{\psi_1}} 
            \geq 2^{\frac{1}{\alpha}-1} \cdot \rbra*{1-2^{-n}}.
        \end{equation}
        \item For \textit{no} instances, $\td\rbra*{\ketbra{\psi_0}{\psi_0}, \ketbra{\psi_1}{\psi_1}} \leq 2^{-n}$ yields that 
        \begin{equation}
            \label{eq:PureQSDalpha-BQPhard-no}
            \Talpha\rbra*{\ketbra{\psi_0}{\psi_0}, \ketbra{\psi_1}{\psi_1}} 
            = 2^{\frac{1}{\alpha}-1} \cdot \td\rbra*{\ketbra{\psi_0}{\psi_0}, \ketbra{\psi_1}{\psi_1}} 
            \leq 2^{\frac{1}{\alpha}-1-n}.
        \end{equation}
    \end{itemize}

    Combining \Cref{eq:PureQSDalpha-BQPhard-yes,eq:PureQSDalpha-BQPhard-no}, it follows that the promise gap
    \[ 2^{\frac{1}{\alpha}-1} \cdot \rbra*{1-2^{-n}} - 2^{\frac{1}{\alpha}-1-n}
    = 2^{\frac{1}{\alpha}-1} \cdot \rbra*{1-2^{-n} - 2^{-n}}
    \coloneqq 2^{\frac{1}{\alpha}-1} \cdot f(n)
    \geq 2^{\frac{1}{\alpha}-1} \cdot f(2).\]
    Here, the last inequality holds because $f(n)$ is a monotonically increasing function. We complete the proof by observing that $f(2) = \frac{1}{2}$ and $2^{\frac{1}{\alpha}-1} \geq \frac{1}{2}$ for all $1 \leq \alpha \leq \infty$. 
\end{proof}

\begin{remark}[\PurePoweredQSDalpha{} is \BQP{}-hard for $0<\alpha<1$]
    Similar to the proof of \Cref{thm:PureQSDalpha-is-BQPhard}, combining \Cref{lemma:PureQSD-is-BQPhard} with \Cref{corr:Lalpha-vs-T-pure} implies that $\PurePoweredQSDalpha\sbra*{1-2^{-n}, 2^{-\alpha n}}$ is \BQP{}-hard for every $0<\alpha<1$ and $n\geq 2$.\footnote{Indeed, for \emph{yes} instances, it holds that $\Lalpha\rbra*{\ketbra{\psi_0}{\psi_0}, \ketbra{\psi_1}{\psi_1}} = \td\rbra*{\ketbra{\psi_0}{\psi_0}, \ketbra{\psi_1}{\psi_1}}^\alpha \geq \rbra*{1-2^{-n}}^\alpha \geq 1-2^{-n}$, since $(1-x)^\alpha \geq 1-x$ for all $x\in[0,1]$ and $0<\alpha <1$. For \emph{no} instances, it follows that $\Lalpha\rbra*{\ketbra{\psi_0}{\psi_0}, \ketbra{\psi_1}{\psi_1}} = \td\rbra*{\ketbra{\psi_0}{\psi_0}, \ketbra{\psi_1}{\psi_1}}^\alpha \leq 2^{-\alpha n}$.} Together with any rank-efficient quantum query algorithm for \PoweredQSDalpha{}, such as~\cite[Theorem IV.1]{WGL+22} or \Cref{thm:alpha-quasi-query}, the above hardness result implies that $\textsc{LowRankPoweredQSD}_{\alpha}$ is \BQP{}-complete for $0<\alpha<1$. 
\end{remark}

Next, we show the hardness result of \PoweredQSDalpha{} with $0 < \alpha < 1$:

\begin{theorem}[\PoweredQSDalpha{} is \QSZK-hard for $0<\alpha<1$]
    \label{thm:PoweredQSDalpha-QSZK-hard-less-1}
    For any constant $\alpha\in(0,1)$, the following holds for sufficiently large $n$:     
    \[ \PoweredQSDalpha\sbra*{1/3,2^{-n}} \text{ is } \QSZK\text{-hard}. \]
\end{theorem}

\begin{proof}
    We begin by defining $\delta_n \coloneqq 2^{-n^{1/4}}$. By \Cref{lemma:QSD-is-QSZKhard}, it follows that $\QSD\sbra*{1-\delta_n,\delta_n}$ is \QSZK{}-hard. Let $Q_0$ and $Q_1$ be a corresponding \QSZK{}-hard instance, where these quantum circuits are of polynomial size and prepare the quantum states $\rho_0$ and $\rho_1$, respectively.

    We apply the polarization lemma for $\td$ (\Cref{lemma:polarization-trace-distance}) to $(Q_0,Q_1,a,b,k)$ with the parameters $a=1-\delta_n$, $b=\delta_n$, and $k=n^2$. The required condition $a^2>b$ holds for sufficiently large $n$. Let $\sigma_0$ and $\sigma_1$ be the resulting states, acting on $N_n$ qubits, whose state-preparation circuits are $\widetilde{Q}_0$ and $\widetilde{Q}_1$, respectively. These states satisfy
    \begin{align}
        \td(\rho_0,\rho_1) \geq 1-\delta_n &\quad\Longrightarrow\quad \td(\sigma_0,\sigma_1) \geq 1-2^{-n^2},
        \label{eq:trace-pol-yes-less-1}\\
        \td(\rho_0,\rho_1) \leq \delta_n &\quad\Longrightarrow\quad \td(\sigma_0,\sigma_1) \leq 2^{-n^2}.
        \label{eq:trace-pol-no-less-1}
    \end{align}

    We first bound $N_n$. As stated in \Cref{lemma:polarization-trace-distance}, $N_n = \widetilde O\rbra*{n(n^2)^{\eta_n}}$, where 
    \[ \eta_n = O\rbra*{\frac{\delta_n\ln\rbra*{2/(1-\delta_n)^2}}{(1-\delta_n)^2-\delta_n}} \leq  O\rbra*{2^{-n^{1/4}}}. \]
    Here, the last inequality follows from $\delta_n=2^{-n^{1/4}}$. 
    
    Consequently, for sufficiently large $n$, we obtain
    \begin{equation}
        \label{eq:Nn-bound-less-1}
        N_n = \widetilde{O}\rbra*{n \cdot (n^2)^{\eta_n}} = \widetilde{O}\rbra*{n \cdot \exp\rbra*{2\eta_n\ln n}} \leq  \widetilde{O}\rbra*{n \cdot 2} = \widetilde{O}(n).
    \end{equation}

    Next, for the resulting hard instance $\rbra[\big]{\widetilde{Q}_0,\widetilde{Q}_1}$, we obtain the following bounds: 
    \begin{itemize}
        \item For \emph{yes} instances, combining \Cref{eq:trace-pol-yes-less-1} with the lower bound in \Cref{thm:Lalpha-vs-T-less-1} yields
        \[ \Lalpha(\sigma_0,\sigma_1)
            \geq \td(\sigma_0,\sigma_1)^{\alpha}
            \geq \rbra*{1-2^{-n^2}}^{\alpha}
            \geq 1-2^{-n^2}. \]
        In particular, $\Lalpha(\sigma_0,\sigma_1)\geq 1/3$ for sufficiently large $n$.
        \item For \emph{no} instances, substituting \Cref{eq:trace-pol-no-less-1} into the upper bound in \Cref{corr:Lalpha-vs-T-simplified-less-1} gives 
        \begin{align*}
            \Lalpha(\sigma_0,\sigma_1)
            \leq 2^{N_n(1-\alpha)}\td(\sigma_0,\sigma_1)^{\alpha}
            &\leq 2^{N_n(1-\alpha)}2^{-\alpha n^2}\\
            &\leq 2^{-\alpha n^2/2}\\
            &\leq 2^{-N_n}. 
        \end{align*}
        Here, the second line follows because the exponent $N_n(1-\alpha)-\alpha n^2$ is at most $-\alpha n^2/2$ for sufficiently large $n$, as guaranteed by \Cref{eq:Nn-bound-less-1}; and the last line follows from $N_n=\widetilde O(n)=o(n^2)$ as implied by  \Cref{eq:Nn-bound-less-1}. 
    \end{itemize}

    Therefore, the mapping $(Q_0,Q_1)\mapsto (\widetilde Q_0,\widetilde Q_1)$ is a Karp reduction from $\QSD\sbra*{1-\delta_n,\delta_n}$, which is \QSZK{}-hard, to  an instance of $\PoweredQSDalpha\sbra*{1/3,2^{-N_n}}$. Renaming the output length $N_n$ as $n$ gives the claimed hardness of $\PoweredQSDalpha\sbra*{1/3,2^{-n}}$. 
\end{proof}

\subsection{Lower bounds for fixed \texorpdfstring{$\alpha$}{}}
\label{subsec:lower-bounds-fixed}

We first prove quantitative lower bounds for \PureQSDalpha{} with $1 \leq \alpha \leq \infty$:

\begin{theorem}[Quantitative lower bounds for \PureQSDalpha{} when $\alpha \geq 1$]
    \label{thm:lower-bounds-PureQSDalpha}
    For any $1 \leq \alpha \leq \infty$ and $0 < \epsilon < 1/4$, there exist $n$-qubit pure states $\ket{\psi_0}$ and $\ket{\psi_1}$ such that deciding whether $\Talpha(\ketbra{\psi_0}{\psi_0},\ketbra{\psi_1}{\psi_1})$ is at least $\epsilon$ or exactly $0$ requires:
    \begin{enumerate}[label={\upshape(\arabic*)},topsep=0.33em, itemsep=0.33em, parsep=0.33em]
        \item \label{thmitem:easy-regime-queries}\textbf{\emph{Queries}}: In the purified quantum access model, the quantum query complexity is $\Omega(1/\epsilon)$. 
        \item \label{thmitem:easy-regime-samples}\textbf{\emph{Samples}}: The quantum sample complexity is $\Omega(1/\epsilon^2)$. 
    \end{enumerate}
\end{theorem}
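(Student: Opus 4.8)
The plan is to reduce directly to the trace-distance lower bounds in \Cref{lemma:lower-bounds-PureQSD}, exploiting the exact equality between $\Talpha$ and $\td$ on pure states supplied by \Cref{corr:Talpha-vs-T-pure}. Recall that for pure states $\ketbra{\psi_0}{\psi_0}$ and $\ketbra{\psi_1}{\psi_1}$ we have $2^{1-\frac{1}{\alpha}} \cdot \Talpha = \td$, i.e.\ $\Talpha = 2^{\frac{1}{\alpha}-1}\,\td$. This equality is the whole engine: it converts the $\Talpha$-promise into a $\td$-promise with a rescaled threshold, and it does so for every $1 \leq \alpha \leq \infty$ at once, so no case analysis on $\alpha$ is needed.

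First I would set $\epsilon' \coloneqq 2^{1-\frac{1}{\alpha}}\,\epsilon$. Because $2^{\frac{1}{\alpha}-1}$ is a fixed factor, the equality gives the equivalences $\Talpha\rbra*{\ketbra{\psi_0}{\psi_0},\ketbra{\psi_1}{\psi_1}} \geq \epsilon \iff \td\rbra*{\ketbra{\psi_0}{\psi_0},\ketbra{\psi_1}{\psi_1}} \geq \epsilon'$ and $\Talpha = 0 \iff \td = 0$. Thus deciding whether $\Talpha$ is at least $\epsilon$ or exactly $0$ is \emph{the same decision problem} as deciding whether $\td$ is at least $\epsilon'$ or exactly $0$, on the same pair of states. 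Next I would check that $\epsilon'$ falls in the admissible range of \Cref{lemma:lower-bounds-PureQSD}: the hypothesis $0 < \epsilon < 2^{\frac{1}{\alpha}-2}$ yields $0 < \epsilon' = 2^{1-\frac{1}{\alpha}}\epsilon < 2^{1-\frac{1}{\alpha}}\cdot 2^{\frac{1}{\alpha}-2} = \tfrac{1}{2}$, so $\epsilon' \in (0,1/2)$ as required.

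I would then take the $n$-qubit hard instances $\ket{\psi_0},\ket{\psi_1}$ guaranteed by \Cref{lemma:lower-bounds-PureQSD} for the parameter $\epsilon'$, and use the very same pair as the hard instances for the $\Talpha$-problem. Part~\ref{thmitem:query-lower-bound-PureQSD} of that lemma gives a query lower bound $\Omega(1/\epsilon')$ and part~\ref{thmitem:sample-lower-bound-PureQSD} gives a sample lower bound $\Omega(1/\epsilon'^2)$ for the $\td$-decision, which by the equivalence above are lower bounds for the $\Talpha$-decision on these instances. Finally, since $2^{1-\frac{1}{\alpha}} \in [1,2]$ is a constant independent of $\epsilon$, we have $1/\epsilon' = \Theta(1/\epsilon)$ and $1/\epsilon'^2 = \Theta(1/\epsilon^2)$, so these bounds are exactly $\Omega(1/\epsilon)$ and $\Omega(1/\epsilon^2)$, establishing parts~\ref{thmitem:easy-regime-queries} and~\ref{thmitem:easy-regime-samples} respectively.

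There is essentially no hard obstacle here; the argument is a clean reduction rather than a new lower-bound construction. The only points demanding a little care are bookkeeping: confirming the threshold rescaling keeps $\epsilon'$ inside $(0,1/2)$ (which is precisely why the hypothesis reads $\epsilon < 2^{\frac{1}{\alpha}-2}$ rather than a simpler bound), and verifying that the ``$\Talpha = 0$'' branch matches ``$\td = 0$'' exactly — immediate since $\Talpha$ vanishes on pure states if and only if the states coincide, the same condition under which $\td$ vanishes. If anything, the work already done in \Cref{thm:Talpha-vs-T} and \Cref{corr:Talpha-vs-T-pure} is carrying the entire conceptual load, and the present statement is the correspondingly short corollary on the lower-bound side.
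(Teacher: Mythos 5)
Your proposal is correct and takes essentially the same route as the paper's own proof: both reduce to the trace-distance lower bounds of \Cref{lemma:lower-bounds-PureQSD} by invoking the pure-state equality $\Talpha = 2^{\frac{1}{\alpha}-1}\cdot\td$ from \Cref{corr:Talpha-vs-T-pure}, and both use the hypothesis $\epsilon < 2^{\frac{1}{\alpha}-2}$ only to keep the rescaled trace-distance threshold inside $(0,1/2)$. The sole difference is bookkeeping: you rescale the threshold exactly (taking $\epsilon' = 2^{1-\frac{1}{\alpha}}\epsilon$ so the two decision problems coincide verbatim), whereas the paper applies the lemma at threshold $\epsilon$ and absorbs the constant $2^{\frac{1}{\alpha}-1}\geq \tfrac{1}{2}$ into the $\Omega(\cdot)$.
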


\begin{proof}
    To establish the desired quantum query (or sample) lower bound, 
    it suffices to reduce the problem to distinguishing between the cases $\ket{\psi_0} = \ket{\psi_1}$ and $\td(\ketbra{\psi_0}{\psi_0},\ketbra{\psi_1}{\psi_1})\geq 2\epsilon$, as stated in \Cref{lemma:lower-bounds-PureQSD}\ref{thmitem:query-lower-bound-PureQSD} (or \Cref{lemma:lower-bounds-PureQSD}\ref{thmitem:sample-lower-bound-PureQSD}).
    Since the quantum Schatten $\alpha$-norm distance $\Talpha(\cdot,\cdot)$ is a metric, it follows that $\Talpha(\ketbra{\psi_0}{\psi_0},\ketbra{\psi_1}{\psi_1})=0$ when $\ket{\psi_0} = \ket{\psi_1}$. When the pure states $\ket{\psi_0}$ and $\ket{\psi_1}$ are far apart, we use the equality in \Cref{corr:Talpha-vs-T-pure} to complete the proof: 
    \[ \Talpha(\ketbra{\psi_0}{\psi_0},\ketbra{\psi_1}{\psi_1}) = 2^{\frac{1}{\alpha}-1} \cdot \td(\ketbra{\psi_0}{\psi_0},\ketbra{\psi_1}{\psi_1}) \geq 2^{\frac{1}{\alpha}-1} \cdot 2\epsilon \geq \epsilon. \qedhere\]
\end{proof}

Next, we establish the quantitative lower bounds for \PoweredQSDalpha{} with $0<\alpha<1$:

\begin{theorem}[Quantitative lower bounds for \PoweredQSDalpha{} when $0<\alpha<1$]
    \label{thm:lower-bounds-PoweredQSDalpha}
    For any $0 < \alpha < 1$ and $0 < \epsilon \leq 1/4$, there exist $n$-qubit rank-$r$ quantum states $\rho$ and $\rho_\ttU$, where $r\geq 4$ and $\rho_\ttU$ denotes the state that is uniform on the support of $\rho$, such that estimating $\Lalpha\rbra{\rho, \rho_\ttU}$ to within additive error $\epsilon$ requires: 
    \begin{enumerate}[label={\upshape(\arabic*)},topsep=0.33em, itemsep=0.33em, parsep=0.33em]
        \item \label{thmitem:queries-0leAle1}\textbf{\emph{Queries}}: In the purified quantum access model, the quantum query complexity is $\Omega\rbra[\big]{r^{\frac{1}{\alpha}-1}/\epsilon^{\frac{1}{\alpha}}}$. 
        \item \label{thmitem:samples-0leAle1}\textbf{\emph{Samples}}: The quantum sample complexity is $\Omega\rbra[\big]{r^{\frac{2}{\alpha}-1}/\epsilon^{\frac{2}{\alpha}}}$.
    \end{enumerate}
\end{theorem}

\begin{proof}
     Since applying quantum sample-to-query lifting (\Cref{lemma:lifting}) to \Cref{thmitem:samples-0leAle1} directly implies \Cref{thmitem:queries-0leAle1}, it suffices to prove \Cref{thmitem:samples-0leAle1}. To this end, we consider the parameter $\delta$ defined by
    \[
    \delta = \frac{r}{2} \rbra*{\frac{4\epsilon}{r}}^{1/\alpha} \leq \frac{1}{2}.
    \]

    By \Cref{lemma:lower-bounds-QSD}\ref{thmitem:sample-lower-bound-QSD}, for any projector $\Pi_r$ onto a subspace of dimension $r$, distinguishing between the following two cases for an unknown state $\rho$, promised to satisfy $\supp\rbra{\rho} \subseteq \supp\rbra{\Pi_r}$, requires $\Omega\rbra{r/\delta^2}$ samples of $\rho$: either $\rho = \rho_{\ttU}$, or $\rho$ has eigenvalues $(1\pm 2\delta)/r$, each with multiplicity $r/2$. Here, $\rho_{\ttU} = \Pi_r / r$.
    We now evaluate $\Lalpha\rbra{\rho,\rho_{\ttU}}$ in these two cases:
    \begin{enumerate}[label={\upshape(\alph*)}]
        \item If $\rho = \rho_{\ttU}$, then $\Lalpha\rbra{\rho, \rho_{\ttU}} = 0$. 
        \item If $\rho$ has eigenvalues $(1\pm 2\delta)/r$ each with multiplicity $r/2$, then 
        \[
        2\Lalpha(\rho,\rho_{\ttU}) = \tr\rbra*{\abs*{\rho - \rho_{\ttU}}^{\alpha}} = \rbra*{\frac{2\delta}{r}}^{\alpha} r = 4 \epsilon. 
        \]
    \end{enumerate}

    Therefore, any quantum algorithm that estimates $\Lalpha(\rho,\rho_{\ttU})$ to within additive error $\epsilon$ distinguishes the aforementioned two cases and hence solves the promise problem, which requires sample complexity 
    \[
    \Omega\rbra*{\frac{r}{\delta^2}} = \Omega\rbra*{\frac{r^{\frac{2}{\alpha}-1}}{\epsilon^{\frac{2}{\alpha}}}}.
     \qedhere \]
\end{proof}

\section{Quantum Schatten \texorpdfstring{$\alpha$}{}-norm distance estimation for non-fixed \texorpdfstring{$\alpha>1$}{alpha > 1} near \texorpdfstring{$1$}{1}}

In this section, we prove that \QSDalpha{} is \QSZK{}-complete for $\alpha\geq 1$ negligibly close to $1$, extending the prior result that \QSD{} ($\alpha=1$) is \QSZK{}-complete, as shown in~\cite{Wat02}: 

\begin{theorem}[\QSDalpha{} is \QSZK{}-complete for $\alpha > 1$ near $1$]
    \label{thm:QSDalpha-hard-regime-QSZK-complete}
    Let $a(n)$ and $b(n)$ be efficiently computable functions such that $0 \leq b(n) < a(n) \leq 1$. Then, the following statements hold:
    \begin{enumerate}[label={\upshape(\arabic*)},topsep=0.33em, itemsep=0.33em, parsep=0.33em]
        \item \label{thmitem:QSZK-containment}For every negligible function $\mu(n)$ and every $\alpha$ satisfying $1\leq \alpha(n) \leq 1+\mu(n)$, $\QSDalpha[a,b]$ is in \QSZK{} when $a(n)^2-b(n) \geq 1/O(\log{n})$.
        \item For every $1 \leq \alpha(n) \leq 1+\frac{1}{n}$, $\QSDalpha[a,b]$ is \QSZK{}-hard if $a(n) \leq 1/2 - 2^{-n^{\tau}-1}$ and $b(n) \geq 2^{-n^{\tau}-\frac{1}{n+1}}$ for every constant $\tau\in(0,1/2)$ and every sufficiently large integer $n$. 
    \end{enumerate} 
\end{theorem}

The main challenge in proving \Cref{thm:QSDalpha-hard-regime-QSZK-complete} is to establish a \QSZK{} containment of $\QSDalpha$ under the polarizing regime $a(n)^2-b(n) \geq 1/O(\log{n})$.\footnote{Notably, similar to the classical cases in~\cite{BDRV19}, by reducing to the \textsc{Quantum Jensen-Shannon Divergence Problem} (\QJSP{}) or the \textsc{Measured Quantum Triangular Discrimination Problem} (\measQTDP{}) introduced in~\cite{Liu23}, this \QSZK{} containment of \QSDalpha{} can be extended slightly beyond the polarizing regime.}
A direct approach, combining the inequalities between $\td$ and $\Talpha$ (\Cref{corr:Talpha-vs-T-simplified}) with the \QSZK{} containment of \QSD{} from~\cite{Wat02,Wat09}, only yields a \QSZK{} containment of $\QSDalpha[a,b]$ under a \textit{weaker} regime, $a(n)^2/2-b(n)\geq 1/O(\log{n})$. To circumvent this, we provide a (partial) polarization lemma for $\Talpha$ (\Cref{lemma:partial-polarization-Talpha}), which enables us to achieve the desired \QSZK{} containment in \Cref{thm:QSDalpha-hard-regime-QSZK-complete}. 

\vspace{1em}
The remainder of this section establishes the \QSZK{} containment of \QSDalpha{} in \Cref{subsec:QSDalpha-nearOne-in-QSZK} using the partial polarization lemma for $\Talpha$ (\Cref{lemma:partial-polarization-Talpha}). We then show the \QSZK{} hardness of \QSDalpha{} (\Cref{thm:QSDalpha-QSZK-hard}) and derive quantitative lower bounds on query complexity (\Cref{thm:query-lower-bounds-QSDalpha}) and sample complexity (\Cref{thm:sample-lower-bound-QSDalpha}) in \Cref{subsec:QSDalpha-hard-regime-hardness}.

\subsection{\QSZK{} containment via a partial polarization lemma for \texorpdfstring{$\Talpha$}{}}
\label{subsec:QSDalpha-nearOne-in-QSZK}

We now establish a stronger version of \Cref{thm:QSDalpha-hard-regime-QSZK-complete}\ref{thmitem:QSZK-containment}, where the regime of $\alpha$ is extended from $[1,1+\mu(n)]$ for every negligible function $\mu(n)$ to $[1,1+1/p_C(n)]$ for some large polynomial $p_C(n)$, depending on a constant $C$ such that $a(n)^2-b(n) \geq 1/(C \log n)$:

\begin{theorem}[\QSDalpha{} is in \QSZK{}]
    \label{thm:QSDalpha-in-QSZK}
    Let $a(n)$ and $b(n)$ be efficiently computable functions satisfying $0 \leq b(n) < a(n) \leq 1$. Assume that $a(n)^2-b(n) \geq 1/(C\log{n})$ for some constant $C>1$ and all sufficiently large $n$, and define $p_C(n) \coloneqq 11 C^2 n^{2C} \log{n}$.
    Then, the following holds:
    \[ \text{For any } \alpha\in \sbra*{1, 1+\frac{1}{p_C(n)}},\quad \QSDalpha[a,b] \text{ is in } \QSZK. \]
\end{theorem}

To prove \Cref{thm:QSDalpha-in-QSZK}, we establish a key technical tool -- a \textit{partial} polarization lemma for $\Talpha$ that ensures any $(a,b)$ within the polarizing regime becomes \textit{constantly} separated: 

\begin{lemma}[A partial polarization lemma for $\Talpha$]
    \label{lemma:partial-polarization-Talpha}
    Let $Q_0$ and $Q_1$ be quantum circuits that prepare quantum states $\rho_0$ and $\rho_1$, respectively. There exists a deterministic procedure that, given an input $(Q_0,Q_1,a,b,k)$ where $a(n)^2-b(n)\geq 1/ (C\log{n})$ for some constant $C>1$ and all sufficiently large $n$, outputs new quantum circuits $\widetilde{Q}_0$ and $\widetilde{Q}_1$ that prepare the states $\widetilde{\rho}_0$ and $\widetilde{\rho}_1$, respectively. 
    Let $\lambda \coloneqq \min\cbra{a^2/b,2}$, $\ell \coloneqq \ceil*{\log_\lambda(64k)}$, and $m \coloneqq \ceil*{\lambda^\ell/\rbra{32 a^{2\ell}}}$. 
    Then the resulting states satisfy the following:
    \begin{align*}
        \text{For any } \alpha \in\sbra*{1,1+\frac{1}{n\ell m}}, \quad  \Talpha(\rho_0,\rho_1) \geq a &\quad\Longrightarrow\quad \Talpha\rbra*{\widetilde{\rho}_0,\widetilde{\rho}_1} \geq \frac{1}{2} - \frac{1}{2} e^{-k},\\
        \Talpha(\rho_0,\rho_1) \leq b &\quad\Longrightarrow\quad \Talpha(\widetilde{\rho}_0,\widetilde{\rho}_1) \leq \frac{1}{16}.
    \end{align*}
    Here, the states $\widetilde{\rho}_0$ and $\widetilde{\rho}_1$ are defined over $\widetilde{O}\rbra[\Big]{n k^{O\rbra[\big]{\frac{b\ln(2/a^2)}{a^2-b}}}}$ qubits. Moreover, when $k \leq O(1)$ or $a^2-b \geq \Omega(1)$, the time complexity of the procedure is polynomial in the size of $Q_0$ and $Q_1$, $k$, and $\exp\rbra[\big]{\frac{b\ln(1/a^2)}{a^2-b}}$.
\end{lemma}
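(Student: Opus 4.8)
The plan is to take the amplify-and-reduce polarization schedule of \Cref{lemma:polarization-trace-distance} as the engine, but to \emph{analyze it directly for $\Talpha$} via the comparison inequalities of \Cref{thm:Talpha-vs-T}, rather than converting the promise to a trace-distance promise up front. The reason to settle for a \emph{partial} (constant-gap) conclusion is precisely to sidestep the factor-$2$ loss discussed after \Cref{thm:QSDalpha-in-QSZK}: passing from $\Talpha$ to $\td$ on the no-side costs a factor of $2$, which would turn a $1/O(\log n)$-gap into the weaker regime $a^2/2-b$; but once the two target values are a \emph{fixed} constant apart, the same factor $2$ is harmless, and the subsequent reduction to \QSD{} can be run in the full regime $a^2-b\ge 1/O(\log n)$.

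The first ingredient is the observation that the hypothesis $\alpha\le 1+\tfrac1n$ forces the comparison to be tight up to a bounded constant. Since $1/\alpha\ge n/(n+1)$, the exponent obeys $1-\tfrac1\alpha\le\tfrac1{n+1}$, so for any $n$-qubit states \Cref{corr:Talpha-vs-T-simplified} collapses to
\[
\Talpha(\rho_0,\rho_1)\ \le\ \td(\rho_0,\rho_1)\ \le\ \rbra*{2\cdot 2^{n}}^{1-\frac1\alpha}\Talpha(\rho_0,\rho_1)\ \le\ 2\,\Talpha(\rho_0,\rho_1),
\]
because $(2^{n+1})^{1/(n+1)}=2$. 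The \emph{no}-case is then the easy direction: running the schedule with level $k'=\lceil k/\ln 2\rceil$ and using only the always-valid bound $\Talpha\le\td$ at every step, the reduce operations drive the distance down as in the trace-distance analysis, yielding $\Talpha(\widetilde\rho_0,\widetilde\rho_1)\le 2^{-k'}\le 2^{-k}\le\tfrac1{16}$ for $k\ge 4$, with no appeal to rank.

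The yes-case is the crux, and it is also where the specific constant $\tfrac12$ comes from. Here one first gets $\td(\widetilde\rho_0,\widetilde\rho_1)\ge 1-2^{-k'}\ge 1-e^{-k}$ from the engine (via \Cref{lemma:traceDist-vs-fidelity}), and must convert this into a lower bound on $\Talpha$. The naive route $\Talpha\ge\td\cdot(2\,\rank)^{-(1-1/\alpha)}$ is useless, since the polarized states live on $N\gg n$ qubits and the factor $2^{(N+1)/(n+1)}$ is unbounded. The point of the $\alpha=1+\tfrac1n$ sweet spot is that a \emph{flat} branch of rank $2^{n}$ satisfies $\tr\rbra*{I/2^{n}}^{\alpha}=2^{n(1-\alpha)}=\tfrac12$, so two (nearly) orthogonal flat branches give $\Talpha=\tfrac12(\tfrac12+\tfrac12)^{1/\alpha}=\tfrac12$ exactly; including the residual overlap $\le e^{-k}$ degrades this to $\tfrac12-\tfrac12 e^{-k}$. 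This is the mixed-state incarnation of the pure-state equality \Cref{corr:Talpha-vs-T-pure} and the extremal instance of \Cref{thm:Talpha-vs-T}\ref{thmitem:Talpha-inequalities-alpha}. The qubit count and time bound are inherited verbatim from \Cref{lemma:polarization-trace-distance} (the extra maximally mixed register is $\Theta(n)$ qubits, absorbed in $\widetilde O(n\,\cdot)$), and the choice $k'=\lceil k/\ln2\rceil$ is exactly what turns $2^{-k'}$ into $e^{-k}$.

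I expect the main obstacle to be controlling the spectrum in the yes-case: guaranteeing that, after amplification, the positive and negative parts $\varsigma_0,\varsigma_1$ of $\widetilde\rho_0-\widetilde\rho_1$ have \emph{effective} rank pinned near $2^{n}$ rather than the ambient $2^{N}$, so that the rank-sensitive inequality of \Cref{thm:Talpha-vs-T} delivers a factor essentially equal to $2$ and hence the clean $\tfrac12$. The tension is genuine: because $1-\tfrac1\alpha\approx\tfrac1{n+1}$, every additional tensor copy inflates the comparison factor by $2^{\Theta(1)}$, so naive product amplification would collapse the $\Talpha$ lower bound, and it is exactly this rank growth that prevents pushing the yes-side beyond a constant — which is why only a \emph{partial} polarization is attainable. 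Pinning the effective rank (equivalently, flattening the difference spectrum onto the $2^{n}$-dimensional sweet spot while preserving near-orthogonality) is the heart of the argument; by comparison, the bookkeeping of the qubit count, the rescaling $k'=\lceil k/\ln2\rceil$, and the no-case are routine.
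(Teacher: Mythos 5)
Your high-level plan---run the XOR-plus-tensor-power polarization schedule and analyze it natively for $\Talpha$---is indeed the paper's strategy: the paper proves an exact XOR lemma (\Cref{lemma:XOR-Talpha}), namely $\Talpha(\rho'_0,\rho'_1)=\Talpha(\rho_0,\rho_1)^l$, valid for all $\alpha\geq 1$ because $\rho'_0-\rho'_1=2^{-l+1}(\rho_0-\rho_1)^{\otimes l}$ and Schatten norms are multiplicative under tensor products; it proves a direct product lemma (\Cref{lemma:direct-product-Talpha}); and it then sets $\lambda=\min\cbra*{a^2/b,2}$, $l=\ceil*{\log_\lambda(32k)}$, $m=\lambda^l/(16a^{2l})$. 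But your write-up proves neither case. For the no-case, the claim that ``the always-valid bound $\Talpha\leq\td$ at every step'' yields $\Talpha(\widetilde\rho_0,\widetilde\rho_1)\leq 2^{-k'}$ ``with no appeal to rank'' does not hold up: $\Talpha\leq\td$ bounds the smaller quantity by the larger one, so the hypothesis $\Talpha(\rho_0,\rho_1)\leq b$ gives no upper bound on $\td(\rho_0,\rho_1)$, and without one the trace-distance no-case analysis cannot even start (obtaining one requires the rank-dependent direction of \Cref{corr:Talpha-vs-T-simplified}, which costs exactly the factor $2$ you set out to avoid). The rank-free route is different: exact XOR multiplicativity gives $\Talpha(\rho'_0,\rho'_1)\leq b^l$, and the triangle inequality (\Cref{lemma:Talpha-triangle-inequality}) gives $\Talpha\rbra*{(\rho'_0)^{\otimes m},(\rho'_1)^{\otimes m}}\leq m\,\Talpha(\rho'_0,\rho'_1)\leq mb^l\leq 1/16$. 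Note also that the no side ends at the constant $1/16$, not $2^{-k}$; the final XOR round by which trace-distance polarization reaches $2^{-k}$ is unavailable here, since XOR multiplicativity would simultaneously collapse a yes-side value that is capped at $1/2$.

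For the yes-case---which you correctly call the crux---you supply only the extremal computation for flat, orthogonal, rank-$2^n$ states and then declare that ``pinning the effective rank'' of $\widetilde\rho_0-\widetilde\rho_1$ near $2^n$ is the heart of the argument, without any mechanism for achieving it. That is a genuine gap: nothing in the XOR-plus-tensor-power construction flattens or truncates spectra, so the proposal is not a proof. For comparison, the paper closes this step inside \Cref{lemma:direct-product-Talpha} by asserting $\Talpha(\sigma_0^{\otimes m},\sigma_1^{\otimes m})\geq\frac12\td(\sigma_0^{\otimes m},\sigma_1^{\otimes m})$ via \Cref{corr:Talpha-vs-T-simplified} and then applying Fuchs--van de Graaf (\Cref{lemma:traceDist-vs-fidelity}); but your worry bites there as well: that application invokes the single-copy rank bound $2^n$, while the states being compared are $m$-fold tensor powers of rank up to $2^{nm}$. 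Indeed, for $\sigma_0,\sigma_1$ maximally mixed on orthogonal $2^{n-1}$-dimensional subspaces and $\alpha=1+\frac1n$, one computes $\Talpha(\sigma_0,\sigma_1)=2^{-\frac{n}{n+1}}\approx\frac12$ yet $\Talpha(\sigma_0^{\otimes m},\sigma_1^{\otimes m})=2^{-\frac{(n-1)m+1}{n+1}}\approx 2^{-m}$, even though $\td(\sigma_0^{\otimes m},\sigma_1^{\otimes m})=1$: after $m$ copies the conversion factor between $\td$ and $\Talpha$ really is $2^{\Theta(m)}$, not $2$. So the obstacle you flagged is exactly where all the work must happen (for instance, by reading the constraint as $\alpha\leq 1+1/N$ with $N$ the qubit count of the \emph{output} states), and neither your sketch nor a naive application of the rank inequality gets past it.
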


\begin{proof}[Proof of \Cref{thm:QSDalpha-in-QSZK}]
    Let $(Q_0,Q_1)$ be a $\QSDalpha[a,b]$ instance, with $1 \leq \alpha \leq 1+\frac{1}{p_C(n)}$, satisfying $a(n)^2-b(n) \geq 1/(C\log{n})$. Let $k\coloneqq 2$, $\lambda \coloneqq \min\cbra*{a^2/b,2}$, $\ell \coloneqq \ceil*{\log_\lambda(64k)} = \ceil*{7 \log_\lambda(2)}$, and $m \coloneqq \ceil*{\lambda^\ell/\rbra*{32 a^{2\ell}}}$. We first verify that $n\ell m \leq p_C(n)$, so that $\alpha \leq 1+\frac{1}{p_C(n)} \leq 1+\frac{1}{n\ell m}$ and the partial polarization lemma for $\Talpha$ (\Cref{lemma:partial-polarization-Talpha}) is applicable with the chosen parameters. 

    To this end, we bound $\ell$ and $m$ separately. A direct calculation shows the following bound for $\ell$, which holds for all sufficiently large $n$, 
    \begin{subequations}
    \label{eq:QSZK-containment-ell-bound}
    \begin{align}
        \ell = \ceil*{7\log_\lambda(2)} &\leq \frac{7 \ln{2}}{\ln\rbra*{ \min\cbra*{1+ (a^2-b)/b, 2} }} + 1\\
        &\leq \frac{7 \ln{2}}{\ln\rbra*{ 1 + 1/\rbra*{C\log{n}} }} + 1\\
        &\leq 7 \ln{2} \cdot 2C \log{n} + 1 \leq 11C \log{n}.
    \end{align}    
    \end{subequations}
    Here, the second line uses the bound $a^2-b \geq 1/(C \log n)$, and the last line follows from the facts that $\ln(1+x) \geq x/2$ for all $0\leq x \leq 1$ and $C\log{n} \geq 1$ for sufficiently large $n$. 

    To bound $m$, we first consider the case where $a^2/b > \lambda=2$, then we obtain $\ell=7$ and
    \begin{equation}
        \label{eq:QSZK-containment-m-bound-lambda=2}
        m = \ceil*{\frac{2^\ell}{32a^{2\ell}}}
        = \ceil*{\frac{4}{a^{14}}}
        \leq \frac{8}{a^{14}}
        \leq 8(C\log n)^7
        \leq C n^{2C}.
    \end{equation}
    Here, the first inequality uses the bound $\ceil*{x} \leq 2x$ for all $x \geq 1$, the second inequality follows from $\frac{1}{a^2} = \frac{1}{b+(a^2-b)} \leq \frac{1}{a^2-b} \leq C \log{n}$, and the last inequality holds for all sufficiently large $n$. 

    Next we consider the case where $\lambda=a^2/b \leq 2$, in which case
    \begin{equation}
        \label{eq:QSZK-containment-m-bound-lambda<2-A}
        16 m = 16 \cdot \ceil*{\frac{\lambda^\ell}{32 a^{2\ell}}}
        = 16 \cdot \ceil*{\frac{1}{32 b^\ell}}
        \leq \frac{1}{b^\ell},
    \end{equation}
    Here, the last inequality uses the bound that $\ceil{x}\leq 2x$ for all $x\geq 1$. A direct calculation further implies that
    \begin{subequations}
    \label{eq:QSZK-containment-m-bound-lambda<2-B}
    \begin{align}
        \ell \ln\rbra*{\frac{1}{b}} &\leq \rbra*{\frac{7\ln{2}}{\ln{\lambda}} + 1} \ln\rbra*{\frac{1}{b}}\\
        &\leq 7 \ln{2} \rbra*{\frac{b}{a^2-b}+\frac{1}{2}} \ln\rbra*{\frac{1}{b}} + \ln\rbra*{\frac{1}{b}}\\
        &\leq \rbra*{\frac{7 \ln{2}}{e}} C\log n + \rbra*{1+\frac{7 \ln 2}{2}}\ln(2C\log n)\\
        &\leq \ln\rbra*{16 C n^{2C}}
    \end{align}
    \end{subequations}
    Here, the second line follows from $\frac{1}{\ln{\lambda}} \leq \frac{\lambda+1}{2(\lambda-1)} = \frac{a^2+b}{2(a^2-b)} = \frac{b}{a^2-b}+\frac{1}{2}$, which uses the fact that $\ln{t} \geq \frac{2(t-1)}{t+1}$ for all $t\geq 1$, the third line follows from the facts that $b \ln(1/b) \leq 1/e$ and $b \geq a^2/2 \geq (a^2-b)/2 \geq 1/(2C \log{n})$ since $\lambda = a^2/b \leq 2$, and the last inequality holds for all sufficiently large $n$ since $7 \ln{2}/e \leq 2$. 
    
    Putting \Cref{eq:QSZK-containment-ell-bound,eq:QSZK-containment-m-bound-lambda=2,eq:QSZK-containment-m-bound-lambda<2-A,eq:QSZK-containment-m-bound-lambda<2-B} together, we obtain  
    \[ n\ell m \leq n \cdot 11 C \log n \cdot C n^{2C} = 11 C^2 n^{2C} \log{n} = p_C(n). \]
    
    Consequently, the partial polarization lemma for $\Talpha$ (\Cref{lemma:partial-polarization-Talpha}) allows mapping $(Q_0,Q_1)$ to a $\QSDalpha\sbra[\big]{\frac{1-e^{-2}}{2},\frac{1}{16}}$ instance $(Q'_0,Q'_1)$. Using the inequalities between the trace distance and the quantum Schatten $\alpha$-norm distance (\Cref{corr:Talpha-vs-T-simplified}), we obtain that $(Q'_0,Q'_1)$ also forms a $\QSD\sbra[\big]{\frac{1-e^{-2}}{2},\frac{1}{8}}$ instance. 
    Since $\rbra[\big]{\frac{1-e^{-2}}{2}}^2 > 1/8$, applying the polarization lemma for the trace distance (\Cref{lemma:polarization-trace-distance}) to $(Q'_0,Q'_1,t)$ with $2^{-t(n)}$ negligible produces a $\QSD\sbra[\big]{1-2^{-t(n)},2^{-t(n)}}$ instance $(Q''_0,Q''_1)$. Lastly, following~\cite[Theorem 10]{Wat02}, and particularly the protocol in~\cite[Figure 2]{Wat02}, we conclude that $\QSDalpha\sbra[\big]{a,b}$ is contained in \QSZK{} as desired.
\end{proof}

Analogous to polarization lemmas for various classical~\cite{SV97,CCKV08,BDRV19} and quantum~\cite{Wat02,Liu23} closeness measures, we reduce the errors on both sides of the problem \QSDalpha{} separately, as detailed in \Cref{lemma:XOR-Talpha,lemma:direct-product-Talpha}. 

\begin{lemma}[XOR lemma for $\Talpha$]
    \label{lemma:XOR-Talpha}
    Let $Q_0$ and $Q_1$ be quantum circuits that prepare the quantum states $\rho_0$ and $\rho_1$, respectively. There exists a deterministic procedure that, given $(Q_0,Q_1,\ell)$ as input, produces new quantum circuits $\widetilde{Q}_0$ and $\widetilde{Q}_1$ that prepare the states $\widetilde{\rho}_0$ and $\widetilde{\rho}_1$, respectively. These states are defined as $\widetilde{\rho}_b \coloneqq 2^{-\ell+1} \sum_{b_1\oplus \cdots \oplus b_\ell=b} \rho_{b_1}\otimes \cdots \otimes \rho_{b_\ell}$ for $b\in\binset$, and satisfy:
    \[ \text{For any } \alpha \geq 1, \quad \Talpha\rbra*{\widetilde{\rho}_0,\widetilde{\rho}_1} = \Talpha(\rho_0,\rho_1)^\ell. \]
\end{lemma}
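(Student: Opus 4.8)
The plan is to give an explicit, efficient circuit construction for $\widetilde{Q}_0,\widetilde{Q}_1$ and then reduce the claimed norm identity to two facts: a factorization of the difference operator $\widetilde{\rho}_0-\widetilde{\rho}_1$ across the tensor factors, and the multiplicativity of the Schatten $\alpha$-norm under tensor products. This is the quantum analogue of the classical XOR lemma of~\cite{SV97}, specialized to the $\Talpha$ measure.

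First I would describe the deterministic procedure. To prepare a purification of $\widetilde{\rho}_b$, introduce a control register of $l-1$ fresh qubits placed in uniform superposition by Hadamard gates, encoding a choice $b_1,\dots,b_{l-1}\in\binset$, and compute $b_l\coloneqq b\oplus b_1\oplus\cdots\oplus b_{l-1}$ into one additional ancilla by a sequence of \CNOT gates. For each $i\in[l]$ apply $Q_{b_i}$ to the $i$-th block of data qubits, controlled on $b_i$ (a multiplexed gate realizable from controlled-$Q_0$ and controlled-$Q_1$). Tracing out the control register together with the non-output qubits of each block yields exactly $\widetilde{\rho}_b = 2^{-l+1}\sum_{b_1\oplus\cdots\oplus b_l=b}\rho_{b_1}\otimes\cdots\otimes\rho_{b_l}$, since the $2^{l-1}$ computational-basis branches are mutually orthogonal in the control register and each carries weight $2^{-(l-1)}=2^{-l+1}$. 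The resulting circuit has size $O(l(|Q_0|+|Q_1|))$, so the procedure runs in polynomial time.

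Next comes the algebraic heart. I would compute the difference operator by merging the two XOR-parity sums into a single signed sum: using $(-1)^{b_1\oplus\cdots\oplus b_l}=(-1)^{b_1+\cdots+b_l}$, one has
\[
\widetilde{\rho}_0-\widetilde{\rho}_1 = 2^{-l+1}\sum_{(b_1,\dots,b_l)\in\binset^l}(-1)^{b_1+\cdots+b_l}\,\rho_{b_1}\otimes\cdots\otimes\rho_{b_l} = 2^{-l+1}(\rho_0-\rho_1)^{\otimes l},
\]
where the second equality is the key factorization: because the tensor product distributes over the signed sum term by term, the fully signed sum equals the $l$-fold tensor power of $\sum_{b\in\binset}(-1)^b\rho_b=\rho_0-\rho_1$.

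Finally I would invoke multiplicativity of the Schatten norm under tensor products, $\Abs*{A^{\otimes l}}_{\alpha}=\Abs*{A}_{\alpha}^{l}$ (the singular values of $A^{\otimes l}$ are the $l$-fold products of those of $A$), with $A=\rho_0-\rho_1$. Combined with the definition $\Talpha(\rho_0,\rho_1)=\tfrac{1}{2}\Abs*{\rho_0-\rho_1}_{\alpha}$, this gives
\[
\Talpha\rbra*{\widetilde{\rho}_0,\widetilde{\rho}_1} = \tfrac{1}{2}\cdot 2^{-l+1}\Abs*{\rho_0-\rho_1}_{\alpha}^{l} = \rbra*{\tfrac{1}{2}\Abs*{\rho_0-\rho_1}_{\alpha}}^{l} = \Talpha(\rho_0,\rho_1)^{l},
\]
as desired. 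I do not anticipate a genuine obstacle: the construction is the standard XOR gadget and the norm identity is elementary once the factorization is spotted. The only points requiring care are the bookkeeping of the normalization constant ($2^{-l+1}$ versus the $2^{-(l-1)}$ branch weight) and confirming that the factorization holds at the level of operators rather than scalars, which it does precisely because $\otimes$ commutes with the signed summation.
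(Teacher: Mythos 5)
Your proposal is correct and follows essentially the same route as the paper: both hinge on the factorization of the difference operator as a (scaled) tensor power, $\widetilde{\rho}_0-\widetilde{\rho}_1 = 2^{-l+1}(\rho_0-\rho_1)^{\otimes l}$, combined with multiplicativity of the Schatten $\alpha$-norm under tensor products; the paper verifies this for $l=2$ and appeals to induction, whereas you obtain the general $l$ in one step via the signed-sum identity. Your explicit circuit construction for $\widetilde{Q}_0,\widetilde{Q}_1$ is a welcome addition that the paper's proof leaves implicit.
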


\begin{proof}
    We begin with the case $\ell=2$. Specifically, consider the quantum states 
    \[ \rho'_0 \coloneqq \frac{1}{2} \rho_0\otimes\rho_0 + \frac{1}{2} \rho_1\otimes\rho_1 \text{ and } \rho'_1 \coloneqq \frac{1}{2} \rho_0\otimes\rho_1 + \frac{1}{2} \rho_1\otimes\rho_0. \]
    Then, the following holds: 
    \begin{align*}
        \Talpha(\rho'_0, \rho'_1) &= \frac{1}{2} \rbra*{\tr\abs*{\rho'_0-\rho'_1}^{\alpha}}^{1/\alpha}\\
        &= \frac{1}{2} \rbra*{\tr \abs*{\frac{1}{2} (\rho_0-\rho_1)\otimes (\rho_0-\rho_1)}^{\alpha}}^{1/\alpha}\\
        &= \frac{1}{2} \rbra*{\tr \abs*{\rho_0-\rho_1}^{\alpha}}^{1/\alpha} \cdot \frac{1}{2} \rbra*{\tr\abs*{\rho_0-\rho_1}^{\alpha}}^{1/\alpha}\\
        &= \Talpha(\rho_0,\rho_1)^2.
    \end{align*}
    As a consequence, for the case where $\ell > 2$, we can establish the equality by induction. 
\end{proof}

\begin{lemma}[Direct product lemma for $\Talpha$]
    \label{lemma:direct-product-Talpha}
    Let $Q_0$ and $Q_1$ be quantum circuits that prepare the quantum states $\rho_0$ and $\rho_1$, respectively. There exists a deterministic procedure that, given $(Q_0,Q_1,\ell)$ as input, produces new quantum circuits $\widetilde{Q}_0$ and $\widetilde{Q}_1$ that prepare the states $\widetilde{\rho}_0$ and $\widetilde{\rho}_1$, respectively. These states are defined as $\widetilde{\rho}_b \coloneqq \rho_b^{\otimes \ell}$ for $b\in\binset$, and satisfy:
    \[ \text{For any } \alpha \in\sbra*{1, 1\!+\!\frac{1}{n \ell}},\quad \frac{1}{2} - \frac{1}{2} \exp\rbra*{-\frac{\ell}{2} \cdot \Talpha(\rho_0,\rho_1)^2} \leq \Talpha\rbra*{\rho_0^{\otimes \ell}, \rho_1^{\otimes \ell}} \leq \ell \cdot \Talpha(\rho_0,\rho_1).\]
\end{lemma}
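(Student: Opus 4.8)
The plan is to prove the two inequalities by completely different mechanisms: the upper bound is a routine hybrid argument that does not use the constraint on $\alpha$, whereas the lower bound must be routed through the trace distance and is where the hypothesis $\alpha \le 1+\frac{1}{n}$ becomes essential. For the upper bound $\Talpha(\rho_0^{\otimes l},\rho_1^{\otimes l}) \le l\cdot\Talpha(\rho_0,\rho_1)$, I would introduce the interpolating states $\sigma_i \coloneqq \rho_1^{\otimes i}\otimes\rho_0^{\otimes(l-i)}$ for $0\le i\le l$, so that $\sigma_0=\rho_0^{\otimes l}$ and $\sigma_l=\rho_1^{\otimes l}$. Applying the triangle inequality for $\Talpha$ (\Cref{lemma:Talpha-triangle-inequality}) along this chain gives $\Talpha(\rho_0^{\otimes l},\rho_1^{\otimes l}) \le \sum_{i=1}^{l}\Talpha(\sigma_{i-1},\sigma_i)$. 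Each consecutive pair differs in a single tensor slot, so $2\Talpha(\sigma_{i-1},\sigma_i)=\Abs*{\rho_1^{\otimes(i-1)}\otimes(\rho_0-\rho_1)\otimes\rho_0^{\otimes(l-i)}}_{\alpha}$, which factorizes by multiplicativity of the Schatten norm under tensor products as $\Abs*{\rho_1}_{\alpha}^{i-1}\Abs*{\rho_0-\rho_1}_{\alpha}\Abs*{\rho_0}_{\alpha}^{l-i}$. Since each $\rho_b$ is a density operator, $\Abs*{\rho_b}_{\alpha}\le\Abs*{\rho_b}_1=1$ for $\alpha\ge1$, so every hybrid term is at most $\Talpha(\rho_0,\rho_1)$, and summing the $l$ terms yields the claim.

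For the lower bound I would first obtain a clean estimate on the trace distance of the product states, where the structure is well behaved. Using that fidelity is multiplicative, $\F(\rho_0^{\otimes l},\rho_1^{\otimes l})=\F(\rho_0,\rho_1)^l$, the Fuchs--van de Graaf inequalities (\Cref{lemma:traceDist-vs-fidelity}) give $\td(\rho_0^{\otimes l},\rho_1^{\otimes l}) \ge 1-\F(\rho_0,\rho_1)^l$. Combining the complementary bound $\F(\rho_0,\rho_1)\le\sqrt{1-\td(\rho_0,\rho_1)^2}$ with $1-x\le e^{-x}$ gives $\F(\rho_0,\rho_1)^l\le e^{-\frac{l}{2}\td(\rho_0,\rho_1)^2}$, and since $\Talpha(\rho_0,\rho_1)\le\td(\rho_0,\rho_1)$ (from the monotonicity chain $\Lalpha\le\Talpha\le\td$), the exponent only grows when $\td$ is replaced by $\Talpha$. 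Hence $\td(\rho_0^{\otimes l},\rho_1^{\otimes l}) \ge 1-e^{-\frac{l}{2}\Talpha(\rho_0,\rho_1)^2}$.

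It then remains to convert this into a lower bound on $\Talpha(\rho_0^{\otimes l},\rho_1^{\otimes l})$, and I expect this to be the crux of the proof. Here I would apply the rank-dependent inequality \Cref{thm:Talpha-vs-T}\ref{thmitem:Talpha-inequalities-alpha} in its explicit (unsimplified) form to the $l$-fold states, namely $\td(\rho_0^{\otimes l},\rho_1^{\otimes l}) \le 2\rbra*{\rank(\rho_0^{\otimes l})^{1-\alpha}+\rank(\rho_1^{\otimes l})^{1-\alpha}}^{-\frac{1}{\alpha}}\cdot\Talpha(\rho_0^{\otimes l},\rho_1^{\otimes l})$. The key observation is that, writing $n$ for the number of qubits of the output states $\widetilde{\rho}_b=\rho_b^{\otimes l}$ (so $\rank(\rho_b^{\otimes l})\le 2^{n}$), the constraint $\alpha\le 1+\frac{1}{n}$ forces $\rank(\rho_b^{\otimes l})^{1-\alpha}=\rank(\rho_b^{\otimes l})^{-(\alpha-1)}\ge 2^{-n(\alpha-1)}\ge 2^{-1}$ for each $b\in\binset$. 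Their sum is therefore at least $1$, which pins the prefactor $2\rbra*{\cdots}^{-1/\alpha}$ at no more than $2$ and yields $\Talpha(\rho_0^{\otimes l},\rho_1^{\otimes l}) \ge \frac{1}{2}\td(\rho_0^{\otimes l},\rho_1^{\otimes l}) \ge \frac{1}{2}-\frac{1}{2}e^{-\frac{l}{2}\Talpha(\rho_0,\rho_1)^2}$.

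The main obstacle is precisely this rank-factor bookkeeping: the naive chain is too lossy if one merely bounds $\td$ by $\Talpha$ crudely, and one must match the promised range of $\alpha$ against the log-rank of the $l$-fold states so that $(\alpha-1)$ times the log-rank stays $O(1)$ and the multiplicative loss remains a constant rather than growing like $2^{\Theta(l)}$. Carrying out the same computation with the tighter regime $\alpha<1+\frac{1}{n^{1+\delta}}$ makes each term $\rank(\rho_b^{\otimes l})^{1-\alpha}\ge 2^{-1/n^{\delta}}$ rather than $2^{-1}$, which replaces the constant $\frac{1}{2}$ by $1-\ln(2)/n^{\delta}$ and recovers the strengthened lower bound stated in the remark; this confirms that the explicit-rank form of \Cref{thm:Talpha-vs-T}, rather than either of the two inequality chains, is where the real content lies.
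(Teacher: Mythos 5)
Your proposal is correct and follows essentially the same route as the paper's proof: the identical hybrid/telescoping argument via the triangle inequality for the upper bound, and the identical chain (Fuchs--van de Graaf, multiplicativity of fidelity, $1-x\leq e^{-x}$, then the rank-dependent inequality with the loss factor pinned at $2$ by $\alpha\leq 1+\frac{1}{n}$) for the lower bound. The only cosmetic differences are that you invoke the explicit two-term form of \Cref{thm:Talpha-vs-T}\ref{thmitem:Talpha-inequalities-alpha} where the paper uses \Cref{corr:Talpha-vs-T-simplified}, and your treatment of the hybrid terms via $\Abs*{\rho_b}_{\alpha}\leq 1$ is in fact slightly more careful than the paper's, which writes those steps as equalities (exact only for $\alpha=1$ or pure states) when only the inequality is needed.
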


\begin{proof}
    We start by proving the upper bound through a direct calculation, which holds for any $\alpha \geq 1$. For convenience, let $\rho_b^{\otimes 0}$ denote $1$ for $b\in\binset$. Then, it follows that: 
    \begin{align*}
        \Talpha\rbra*{\rho_0^{\otimes \ell}, \rho_1^{\otimes \ell}}
        & \leq \sum_{i\in [\ell]} \Talpha\rbra*{\rho_0^{\otimes(\ell-i)}\otimes \rho_0 \otimes \rho_1^{\otimes(i-1)}, \rho_0^{\otimes(\ell-i)}\otimes \rho_1 \otimes \rho_1^{\otimes(i-1)}} \\
        & = \sum_{i\in [\ell]} \frac{1}{2} \cdot \rbra*{\tr\abs*{\rho_0^{\otimes(\ell-i)}\otimes (\rho_0-\rho_1) \otimes \rho_1^{\otimes(i-1)}}^{\alpha}}^{1/\alpha}\\
        & = \sum_{i\in [\ell]}\rbra*{\tr \rho_0^\alpha}^{(\ell-i)/\alpha} \cdot \frac{1}{2} \rbra*{\tr\abs*{\rho_0-\rho_1}^{\alpha}}^{1/\alpha} \cdot \rbra*{\tr \rho_1^\alpha}^{(i-1)/\alpha}\\
        & \leq \ell \cdot 1 \cdot \frac{1}{2} \rbra*{\tr\abs*{\rho_0-\rho_1}^{\alpha}}^{1/\alpha} \cdot 1
        = \ell \cdot \Talpha(\rho_0,\rho_1).
    \end{align*}
    Here, the first line follows from the triangle inequality for $\Talpha$ (\Cref{lemma:Talpha-triangle-inequality}), the third line owes to the fact that $\tr\abs*{A \otimes B} = \tr\abs*{A} \cdot \tr\abs*{B}$ and $\rho_b$ is positive semi-definite for $b\in\binset$, and the last line uses the fact that $\tr(\rho^\alpha) \leq 1$ for every quantum state $\rho$ and every $\alpha \geq 1$.
    
    \vspace{1em}
    Next, we establish the lower bound. Leveraging the inequalities between the trace distance and the fidelity (\Cref{lemma:traceDist-vs-fidelity}), we can obtain that: 
    \begin{equation}
        \label{eq:traceDistPower}
        \td\rbra*{\rho_0^{\otimes \ell}, \rho_1^{\otimes \ell}} 
        \geq 1- \F\rbra*{\rho_0^{\otimes \ell}, \rho_1^{\otimes \ell}}
        = 1- \F\rbra*{\rho_0, \rho_1}^{\ell}
        \geq 1- \rbra*{1-\td(\rho_0,\rho_1)^2}^{\ell/2}.
    \end{equation}

    Combining the inequalities in \Cref{corr:Talpha-vs-T-simplified,eq:traceDistPower}, it holds that:
    \[ \Talpha\rbra*{\rho_0^{\otimes \ell}, \rho_1^{\otimes \ell}} 
    \geq \frac{1}{2} \cdot \td\rbra*{\rho_0^{\otimes \ell}, \rho_1^{\otimes \ell}} 
    \geq \frac{1}{2} - \frac{1}{2} \rbra*{1-\td(\rho_0,\rho_1)^2}^{\ell/2}
    \geq \frac{1}{2}-\frac{1}{2}\exp\rbra*{-\frac{\ell}{2} \cdot \Talpha(\rho_0,\rho_1)^2}.
    \]
    Here, the first inequality follows from the following ratio upper bound:
    \[ \forall \alpha \in \sbra*{1, 1\!+\!\frac{1}{n\ell}}, \quad \frac{\td\rbra[\big]{\rho_0^{\otimes \ell},\rho_1^{\otimes \ell}}}{\Talpha\rbra[\big]{\rho_0^{\otimes \ell},\rho_1^{\otimes \ell}}} \leq \rbra*{2\max\cbra*{\rank\rbra[\big]{\rho_0^{\otimes \ell}},\rank\rbra[\big]{\rho_1^{\otimes \ell}}}}^{1-\frac{1}{\alpha}} \leq 2^{(n \ell+1) \rbra*{1-\frac{1}{\alpha}}} \leq 2,\]
    and the last inequality holds because $1-x \leq e^{-x}$ for all $x$ and $\Talpha(\rho_0,\rho_1) \leq \td(\rho_0,\rho_1)$. This concludes the proof of the desired bounds. 
\end{proof}

Finally, we combine all the results to establish the partial polarization lemma for $\Talpha$: 

\begin{proof}[Proof of \Cref{lemma:partial-polarization-Talpha}]
    Note that $\lambda = \min\cbra*{a^2/b,2} \in (1,2]$, and $\ell \coloneqq \ceil*{\log_{\lambda}(64k)}$. Applying the XOR lemma for $\Talpha$ (\Cref{lemma:XOR-Talpha}) to $(Q_0, Q_1, \ell)$ yields the circuits $(Q'_0, Q'_1)$, which prepare the corresponding states $\rho'_b \coloneqq 2^{-\ell+1} \sum_{b_1\oplus \cdots \oplus b_\ell=b} \rho_{b_1}\otimes \cdots \otimes \rho_{b_\ell}$ for $b\in\binset$. For every $\alpha \geq 1$, these states satisfy the following implications:
    \begin{align*}
        \Talpha(\rho_0, \rho_1) &\geq a \quad \Longrightarrow \quad \Talpha(\rho'_0, \rho'_1) \geq a^\ell;\\
        \Talpha(\rho_0, \rho_1) &\leq b \quad \Longrightarrow \quad \Talpha(\rho'_0, \rho'_1) \leq b^\ell.
    \end{align*}

    Next, note that $m = \ceil*{ \lambda^\ell/\rbra*{32 a^{2\ell}} } \leq \lambda^\ell/\rbra*{16 a^{2\ell}}$ since $\lambda^\ell \geq 64k$ and $a^{2\ell} \leq 1$ for all $0<a\leq 1$ imply $\lambda^\ell/(32a^{2\ell})\geq 2$. Applying the direct product lemma for $\Talpha$ (\Cref{lemma:direct-product-Talpha}) to $(Q'_0,Q'_1,m)$ gives the circuits $(Q''_0,Q''_1)$, which prepare the corresponding states $\rho''_b \coloneqq (\rho'_b)^{\otimes m}$ for $b\in\binset$. For every $\alpha \in \sbra[\big]{1,1+\frac{1}{n\ell m}}$, these states satisfy the following implications:
    \begin{align*}
        \Talpha(\rho_0,\rho_1) &\geq a \quad \Longrightarrow \quad \Talpha(\rho''_0,\rho''_1) 
        \geq \frac{1}{2} - \frac{1}{2} \exp\rbra*{-\frac{m}{2} \cdot a^{2\ell}} 
        \geq \frac{1}{2} - \frac{1}{2} e^{-k},\\
        \Talpha(\rho_0,\rho_1) &\leq b \quad \Longrightarrow \quad \Talpha(\rho''_0,\rho''_1) \leq m b^\ell \leq \frac{1}{16} \rbra*{\frac{\lambda b}{a^2}}^\ell \leq \frac{1}{16}.
    \end{align*}

    We now analyze the time complexity, focusing on upper bounds for $\ell$ and $m$. Since $\lambda \in (1,2]$, it follows that $\ln(\lambda) = \ln(1+(\lambda-1)) \geq \frac{\lambda-1}{2} \geq \Omega\rbra[\big]{\frac{a^2-b}{b}}$, where the first inequality is because $\ln(1+x) \geq x/2$ for every $x\in[0,1]$. This bound implies $\ell = O\rbra[\big]{\frac{\ln{k}}{\ln{\lambda}}} = O\rbra[\big]{\frac{b\ln k}{a^2-b}}$. Consequently, we bound $m$ as $m \leq \frac{1}{16} \cdot \rbra[\big]{\frac{2}{a^2}}^\ell \leq \exp\rbra*{O\rbra[\big]{\frac{b\ln k}{a^2-b} \cdot \ln\rbra[\big]{\frac{2}{a^2}}}}$.
\end{proof}

\subsection{Computational hardness and lower bounds for \texorpdfstring{$\alpha>1$}{} near \texorpdfstring{$1$}{}}
\label{subsec:QSDalpha-hard-regime-hardness}

\begin{theorem}[\QSDalpha{} is \QSZK{}-hard] 
    \label{thm:QSDalpha-QSZK-hard}
    For any positive constant $\delta > 0$ that can be made arbitrarily small, the following holds for all sufficiently large $n$:
    \begin{enumerate}[label={\upshape(\arabic*)},topsep=0.33em, itemsep=0.33em, parsep=0.33em]
        \item For any $1 \leq \alpha(n) \leq 1+\frac{1}{n^{1+\delta}}$, it holds that \label{thmitem:QSDalpha-hard-regime-neg}
        \begin{align*}
             &\forall \tau \in (0,1/2), \quad \QSDalpha\sbra*{ 1 - \gamma_{\delta,\tau}(n), \gamma'_{\delta,\tau}(n)}  \text{ is } \QSZK\text{-hard}, \\
             &\text{where } \gamma_{\delta,\tau}(n) \coloneqq 1 - 2^{-\frac{n+1}{n^{1+\delta}+1}} + 2^{-n^{\tau}-\frac{n+1}{n^{1+\delta}+1}} \text{ and } \gamma'_{\delta,\tau}(n) \coloneqq 2^{-n^{\tau}-\frac{1}{n^{1+\delta}+1}}.
        \end{align*}
        \item For any $1+\frac{1}{n^{1+\delta}} < \alpha(n) \leq 1+\frac{1}{n}$, it follows that \label{thmitem:QSDalpha-hard-regime-half}
        \[ \forall \tau \in (0,1/2), \quad \QSDalpha\sbra*{ \frac{1}{2} - 2^{-n^{\tau}-1}, 2^{-n^{\tau}-\frac{1}{n+1}} } \text{ is } \QSZK\text{-hard}. \] 
    \end{enumerate}
\end{theorem}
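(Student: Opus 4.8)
The plan is to prove both parts by the \emph{identity reduction} from the trace-distance problem \QSD{} (which is \QSZK{}-hard by \Cref{lemma:QSD-is-QSZKhard}) to \QSDalpha{}: given a hard instance $(Q_0,Q_1)$ of $\QSD[\hat a,\hat b]$ I reuse the same circuit pair and only reinterpret the promise through the rank-dependent inequalities of \Cref{corr:Talpha-vs-T-simplified}. Since the output states are $n$-qubit, I always have $\max\cbra*{\rank(\rho_0),\rank(\rho_1)} \leq 2^n$, so the rank factor obeys $\rbra*{2\max\cbra*{\rank(\rho_0),\rank(\rho_1)}}^{1-1/\alpha} \leq 2^{(n+1)(1-1/\alpha)}$. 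For each fixed $\alpha$ in the relevant range I start from $\hat a = 1-2^{-n^\tau}$ together with $\hat b$ chosen to be \emph{exactly} the target no-threshold, i.e.\ $\hat b = \gamma'_{\delta,\tau}(n)$ in part~(1) and $\hat b = 2^{-n^\tau-\frac{1}{n+1}}$ in part~(2).

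The two translations are then immediate. For \emph{yes} instances ($\td(\rho_0,\rho_1)\geq\hat a$) I use the rank-dependent right-hand inequality to get $\Talpha(\rho_0,\rho_1) \geq \hat a\cdot 2^{-(n+1)(1-1/\alpha)}$, and bound $1-1/\alpha$ by its maximum over the range, namely $\frac{1}{n^{1+\delta}+1}$ in part~(1) and $\frac{1}{n+1}$ in part~(2). This yields $\Talpha \geq (1-2^{-n^\tau})\,2^{-\frac{n+1}{n^{1+\delta}+1}} = 1-\gamma_{\delta,\tau}(n)$ and $\Talpha \geq \tfrac12(1-2^{-n^\tau}) = \tfrac12-2^{-n^\tau-1}$, respectively; crucially, shrinking $\alpha$ towards $1$ only \emph{improves} this lower bound, so it holds throughout the range. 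For \emph{no} instances ($\td(\rho_0,\rho_1)\leq\hat b$) I use the rank-independent left-hand inequality $2^{1-1/\alpha}\Talpha\leq\td$, which (using $1-1/\alpha\geq 0$) gives $\Talpha(\rho_0,\rho_1) \leq 2^{-(1-1/\alpha)}\,\td(\rho_0,\rho_1) \leq \td(\rho_0,\rho_1) \leq \hat b$, so the no side meets the stated threshold uniformly over $\alpha$ (including $\alpha=1$, where $\Talpha=\td$).

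The main obstacle is certifying that $\QSD[\hat a,\hat b]$ is itself \QSZK{}-hard, because $\hat b$ has been pushed \emph{strictly below} $2^{-n^\tau}$ (by the factor $2^{-1/(n^{1+\delta}+1)}$ or $2^{-1/(n+1)}$), so \Cref{lemma:QSD-is-QSZKhard} cannot be invoked with the parameter $\tau$ directly. I would fix this by re-applying that lemma with a slightly larger constant $\tau'''\in(\tau,1/2)$, which exists because $\tau<1/2$: for all sufficiently large $n$ one has $n^{\tau'''}\geq n^\tau+1\geq n^\tau+\frac{1}{n^{1+\delta}+1}$, hence $\hat b\geq 2^{-n^{\tau'''}}$, while simultaneously $\hat a = 1-2^{-n^\tau}\leq 1-2^{-n^{\tau'''}}$ and $\hat a^2-\hat b \geq 1/O(\log n)$ (indeed the left side tends to $1$). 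These are exactly the three hypotheses of \Cref{lemma:QSD-is-QSZKhard}, so $\QSD[\hat a,\hat b]$ is \QSZK{}-hard and the reduction goes through, explaining the ``sufficiently large $n$'' caveat. Finally, the split into parts~(1) and~(2) is forced purely by how tightly $1-1/\alpha$ can be controlled: capping $\alpha$ at $1+\frac{1}{n^{1+\delta}}$ makes $(n+1)(1-1/\alpha)\to 0$, so the yes-threshold tends to $1$, whereas allowing $\alpha$ up to $1+\frac1n$ only gives $(n+1)(1-1/\alpha)\leq 1$, degrading the best uniform yes-threshold to $\tfrac12$.
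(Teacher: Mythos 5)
Your proposal is correct, and while it shares the paper's skeleton --- the identity reduction from \QSD{} via \Cref{lemma:QSD-is-QSZKhard} combined with \Cref{corr:Talpha-vs-T-simplified}, with the yes-side handled exactly as the paper does (bounding $1-\frac{1}{\alpha}$ by $\frac{1}{n^{1+\delta}+1}$, resp.\ $\frac{1}{n+1}$, uniformly over each range) --- it genuinely departs on the no-instance side, and your variant is the more careful one. The paper fixes the source instance $\QSD\sbra{1-2^{-n^{\tau}},\,2^{-n^{\tau}}}$ and derives the no-threshold from the chain $\Talpha(\rho_0,\rho_1) \leq 2^{\frac{1-\alpha}{\alpha}}\td(\rho_0,\rho_1) \leq 2^{-\frac{1}{n^{1+\delta}+1}}\cdot 2^{-n^{\tau}} = \gamma'_{\delta,\tau}(n)$; but $2^{\frac{1-\alpha}{\alpha}} \leq 2^{-\frac{1}{n^{1+\delta}+1}}$ holds precisely when $\alpha \geq 1+\frac{1}{n^{1+\delta}}$, i.e.\ only at the right endpoint of the range in item~(1) (for instance, at $\alpha=1$ the chain gives only $\Talpha = \td \leq 2^{-n^{\tau}} > \gamma'_{\delta,\tau}(n)$), and the same issue recurs for item~(2) viewed as the $\delta=0$ case. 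You instead absorb the loss factor into the source problem: taking $\hat b$ equal to the target no-threshold makes the no-side trivial and uniform in $\alpha$ ($\Talpha \leq \td \leq \hat b$ for all $\alpha \geq 1$), and the price --- that $\hat b < 2^{-n^{\tau}}$ no longer satisfies the hypotheses of \Cref{lemma:QSD-is-QSZKhard} at parameter $\tau$ --- is paid by re-invoking that lemma with a constant $\tau''' \in (\tau,1/2)$, which works for sufficiently large $n$ since $n^{\tau'''} \geq n^{\tau}+1$, $\hat a \leq 1-2^{-n^{\tau'''}}$, and $\hat a^2 - \hat b \to 1$. What your route buys is that the theorem's stated thresholds are actually met for every $\alpha$ in the claimed ranges rather than only at the endpoints, at no cost beyond the ``sufficiently large $n$'' caveat that the theorem statement (and the paper's own asymptotic promise-gap argument) already carries.
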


\begin{proof}
    Fix any constant $\tau\in(0,1/2)$, and choose another constant $\tau'\in(\tau,1/2)$.
    By \Cref{lemma:QSD-is-QSZKhard}, it follows that $\QSD\sbra[\big]{1-2^{-n^{\tau'}}, 2^{-n^{\tau'}}}$ is \QSZK{}-hard for any constant $\tau \in (0,1/2)$. Let $Q_0$ and $Q_1$ be the corresponding \QSZK{}-hard instance, where these quantum circuits are of polynomial size and prepare the quantum states $\rho_0$ and $\rho_1$, respectively. 

    It suffices to establish \Cref{thmitem:QSDalpha-hard-regime-neg}, because the same reasoning also works with $\delta=0$, which yields \Cref{thmitem:QSDalpha-hard-regime-half}.  
    Using the same hard instance $(Q_0,Q_1)$, we derive the following by applying the inequalities in \Cref{corr:Talpha-vs-T-simplified} and noting that the rank of any $n$-qubit state is at most $2^n$: 
    \begin{itemize}
        \item For \textit{yes} instances, 
        $\td(\rho_0,\rho_1) \geq 1-2^{-n^{\tau'}}$ yields that for any $\tau \in (0,1/2)$, 
        \begin{align*}
        \Talpha(\rho_0,\rho_1) &\geq 2^{(n+1) \cdot \frac{1-\alpha}{\alpha}} \cdot \td(\rho_0,\rho_1) \\
        &\geq 2^{-\frac{n+1}{n^{1+\delta}+1}} \cdot \rbra*{1 - 2^{-n^{\tau}}}\\
        &= 1 - \rbra*{1 - 2^{-\frac{n+1}{n^{1+\delta}+1}}} - 2^{-\frac{n+1}{n^{1+\delta}+1}} \cdot 2^{-n^{\tau}} \coloneqq 1 - \gamma_{\delta,\tau}(n). 
        \end{align*}
        Here, the second line uses the fact that $\frac{1-\alpha}{\alpha} \geq -\frac{1/n^{1+\delta}}{1+1/n^{1+\delta}} = -\frac{1}{n^{1+\delta}+1}$ when $1\leq \alpha \leq 1+\frac{1}{n^{1+\delta}}$, and $1-2^{-n^{\tau'}} \geq 1-2^{-n^\tau}$ for all $\tau'>\tau>0$.
        \item For \textit{no} instances, 
        $\td(\rho_0,\rho_1) \leq 2^{-n^{\tau'}}$ implies that for any $\tau \in (0,1/2)$, 
        \begin{align*}
        \Talpha(\rho_0,\rho_1) &\leq 2^{\frac{1-\alpha}{\alpha}} \cdot \td(\rho_0,\rho_1)\\ 
        & \leq \td(\rho_0,\rho_1) \\
        & \leq 2^{-n^{\tau'}} \\
        &\leq 2^{-\frac{1}{n^{1+\delta}+1}} \cdot 2^{-n^{\tau}} \coloneqq \gamma'_{\delta,\tau}(n).
        \end{align*}
        Here, the second line uses the fact that $2^{\frac{1-\alpha}{\alpha}} \leq 1$ for $1\leq \alpha \leq 1+\frac{1}{n^{1+\delta}}$, and the last line follows that $n^{\tau'}\geq n^{\tau}+\frac{1}{n^{1+\delta}+1}$ for all sufficiently large $n$ since $\tau'>\tau$.        
    \end{itemize}
    
     Since the functions $\gamma_{\delta,\tau}(n)$ and $\gamma'_{\delta,\tau}(n)$ are both monotonically decreasing and converge to zero as $n \to \infty$, we complete the proof by noticing that the promise gap $1-\gamma_{\delta,\tau}(n)-\gamma'_{\delta,\tau}(n)$ remains at least a positive constant for sufficiently large $n$. 
\end{proof}

For any $n$-qubit quantum state $\rho$ of rank $r$, let $\rho_{\ttU}$ be the corresponding $n$-qubit quantum state whose eigenvalues are uniformly distributed over the support of $\rho$. Next, we can establish the following quantum query and sample complexity lower bounds: 
\begin{theorem}[Query complexity lower bounds for \QSDalpha{}]
    \label{thm:query-lower-bounds-QSDalpha}
    For any $1 \leq \alpha(n) \leq 1 + \frac{1}{n}$ and $0 < \epsilon \leq 1/4$, there exist an $n$-qubit state $\rho$ of rank $r$ and the corresponding state $\rho_{\ttU}$ such that deciding whether $\Talpha(\rho,\rho_{\ttU})$ is at least $\epsilon$ or exactly $0$ requires $\Omega\rbra[\big]{r^{1/2}/\epsilon}$ queries in the purified quantum query access model.
\end{theorem}

\begin{proof}
    Fix an even integer $r\leq 2^n$.
    Applying \Cref{lemma:lower-bounds-QSD}\ref{thmitem:query-lower-bound-QSD} with error parameter $2\epsilon$, it suffices to reduce the problem to distinguishing between the cases $\rho=\rho_{\ttU}$ and $\td(\rho,\rho_{\ttU}) \geq 2\epsilon$. Since the quantum Schatten $\alpha$-norm distance $\Talpha(\cdot,\cdot)$ is a metric, it holds that $\Talpha(\rho,\rho_{\ttU})=0$ when $\rho=\rho_{\ttU}$. When the states $\rho$ and $\rho_{\ttU}$ are far apart, by using the inequalities in \Cref{corr:Talpha-vs-T-simplified}, it follows that: 
    \[ \Talpha(\rho,\rho_{\ttU}) \geq (2r)^{\frac{1}{\alpha}-1} \cdot \td(\rho,\rho_{\ttU}) \geq (2r)^{-\frac{1}{n+1}}\cdot 2\epsilon \geq \epsilon. \]
    Here, the second inequality follows because $\alpha(n) \leq 1+\frac{1}{n}$ and $f(r;\alpha) \coloneqq (2r)^{\frac{1}{\alpha}-1}$ is monotonically decreasing in $\alpha$ for fixed $r$, since $\frac{\partial}{\partial\mathrm{\alpha}} f(r;\alpha) = -\frac{\ln(2r)}{\alpha^2} \cdot (2r)^{\frac{1}{\alpha}-1} < 0$. The last inequality holds because $r \leq 2^n$ and $f(r;\alpha)$ is monotonically decreasing in $r$ for fixed $\alpha$, as $\frac{\partial}{\partial r} f(r;\alpha) = \frac{1-\alpha}{\alpha r} \cdot (2r)^{\frac{1}{\alpha}-1} < 0$. This reduction achieves the desired lower bound. 
\end{proof}

Using the same reduction to prove \Cref{thm:query-lower-bounds-QSDalpha}, the rank-dependent sample complexity lower bound in \Cref{lemma:lower-bounds-QSD}\ref{thmitem:sample-lower-bound-QSD} for estimating the trace distance $\td(\cdot,\cdot)$ can be extended to the quantum Schatten $\alpha$-norm distance $\Talpha(\cdot,\cdot)$ with $1 \leq \alpha \leq 1 + \frac{1}{n}$: 
\begin{theorem}[Sample complexity lower bound for \QSDalpha{}]
    \label{thm:sample-lower-bound-QSDalpha}
    For any $1 \leq \alpha \leq 1 + \frac{1}{n}$ and $0 \leq \epsilon \leq 1/4$, there exists an $n$-qubit state $\rho$ of rank $r$ and the corresponding state $\rho_{\ttU}$ such that deciding whether $\Talpha(\rho,\rho_{\ttU})$ is at least $\epsilon$ or exactly $0$ requires $\Omega\rbra[\big]{r/\epsilon^2}$ samples of $\rho$. 
\end{theorem}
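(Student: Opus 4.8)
The plan is to reuse, essentially verbatim, the reduction behind \Cref{thm:query-lower-bounds-QSDalpha}\ref{thmitem:QSDalpha-query-lower-bound-oneThird}, but to seed it with the \emph{sample} lower bound of \Cref{lemma:lower-bounds-QSD}\ref{thmitem:sample-lower-bound-QSD} in place of the query lower bound. The decisive structural feature is that the reduction does not modify the states at all: the hard instance $(\rho,\rho_{\ttU})$ produced for the trace-distance problem is handed, unchanged, to any purported $\Talpha$-distinguisher, so each copy of $\rho$ consumed by the $\Talpha$ algorithm is literally a copy usable for the trace-distance problem. Hence a sample lower bound for the trace-distance decision problem transfers directly to the $\Talpha$ decision problem, provided the promise is translated correctly.

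Concretely, I would first invoke \Cref{lemma:lower-bounds-QSD}\ref{thmitem:sample-lower-bound-QSD} to fix an $n$-qubit state $\rho$ of rank $r$ and its uniformized partner $\rho_{\ttU}$ for which distinguishing $\td(\rho,\rho_{\ttU})=0$ from $\td(\rho,\rho_{\ttU})\geq\epsilon$ needs $\Omega(r/\epsilon^2)$ samples. I would then translate the promise through \Cref{corr:Talpha-vs-T-simplified}: when $\rho=\rho_{\ttU}$, the metric property gives $\Talpha(\rho,\rho_{\ttU})=0$; when $\td(\rho,\rho_{\ttU})\geq\epsilon$, the lower estimate $\Talpha(\rho,\rho_{\ttU})\geq (2r)^{\frac1\alpha-1}\td(\rho,\rho_{\ttU})$ together with the monotonicity facts already established in the proof of \Cref{thm:query-lower-bounds-QSDalpha}\ref{thmitem:QSDalpha-query-lower-bound-oneThird} (that $(2r)^{\frac1\alpha-1}$ decreases in both $\alpha$ and $r$ over $1\leq\alpha\leq 1+\frac1n$, $r\leq 2^n$, and is therefore bounded below by $\tfrac12$) yields $\Talpha(\rho,\rho_{\ttU})\geq \tfrac12\epsilon$. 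Thus these same instances satisfy the $\Talpha$ promise with a threshold that differs from $\epsilon$ only by the constant factor $\tfrac12$, and the constraint $\epsilon\leq 2^{\frac1\alpha-2}$ keeps this threshold, and the underlying trace-distance gap, inside the admissible range of \Cref{lemma:lower-bounds-QSD}.

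The only point requiring care beyond the query case is the bookkeeping of the explicit $1/\epsilon^2$ factor. In the query setting the transferred bound $\Omega(r^{1/3})$ is $\epsilon$-free, so the factor-$\tfrac12$ distortion between $\td$ and $\Talpha$ is invisible; here I must verify that replacing $\epsilon$ by $\Theta(\epsilon)$ in the squared error term only rescales the hidden constant, leaving the exponent on $1/\epsilon$ equal to $2$. This is immediate, since $\Omega\rbra*{r/\rbra*{\tfrac12\epsilon}^2}=\Omega(r/\epsilon^2)$, so after renaming the threshold the stated bound $\Omega(r/\epsilon^2)$ follows. There is no genuine obstacle, only this constant-tracking step; the heart of the argument is the identity-on-states nature of the reduction, which immediately carries sample lower bounds across the $\td\to\Talpha$ change of measure.
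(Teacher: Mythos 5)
Your proposal coincides with the paper's own (one-sentence) proof: the paper justifies this theorem exactly by reusing the reduction of \Cref{thm:query-lower-bounds-QSDalpha}\ref{thmitem:QSDalpha-query-lower-bound-oneThird} --- the identity-on-states argument with the threshold translation via \Cref{corr:Talpha-vs-T-simplified} and the metric property --- seeded with the sample lower bound of \Cref{lemma:lower-bounds-QSD}\ref{thmitem:sample-lower-bound-QSD} instead of the query bound, which is precisely what you do. Your additional bookkeeping that $\Omega\rbra*{r/(\epsilon/2)^2}=\Omega\rbra*{r/\epsilon^2}$ is correct and is the only detail beyond the paper's citation-level argument (the slight looseness you gloss over, namely that the translated trace-distance threshold $2\epsilon$ can marginally exceed the range $(0,1/2]$ of \Cref{lemma:lower-bounds-QSD} when $\epsilon$ is near $2^{\frac{1}{\alpha}-2}$, is equally present in the paper's statement).
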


\section*{Acknowledgments}
\noindent
The authors thank Zhan Yu for raising a concern about the asymmetric expression $\lambda_{\max}\rbra[\big]{\frac{\rho_0-\rho_1}{2}}$ in \Cref{subsec:discussion} in a previous version, which was later found to be inconsistent with the quantum $\ell_\infty$ distance $\td_{\infty}(\rho_0,\rho_1)$ implicitly defined in \Cref{def:Lalpha-distance}.

The work of Yupan Liu was supported in part by funding from the Swiss State Secretariat for Education, Research and Innovation (SERI), in part by the Ministry of Education, Culture, Sports, Science and Technology (MEXT) Quantum Leap Flagship Program (Q-LEAP) under Grant \mbox{JPMXS0120319794}, in part by Japan Science and Technology Agency (JST) Support for Pioneering Research Initiated by the Next Generation (SPRING) under Grant \mbox{JPMJSP2125} and ``THERS Make New Standards Program for the Next Generation Researchers'', and in part by the Japan Society for the Promotion of Science (JSPS) Grants-in-Aid for Scientific Research (KAKENHI) under Grant \mbox{JP24H00071}. 
The work of Qisheng Wang was supported in part by startup funding from Shanghai Jiao Tong University and in part by the Engineering and Physical Sciences Research Council under Grant \mbox{EP/X026167/1}. 

\bibliographystyle{alphaurl}
\bibliography{Lalpha}

\appendix

\section{\texorpdfstring{\PurePoweredQSDinfty{}}{PurePoweredQSD} is \texorpdfstring{$\CeP$}{C=P}-hard}
\label{sec:PurePoweredQSD-hardness}

\begin{theorem}
    \label{thm:PurePoweredQSDinfty-CeP-hard}
    For all $n \geq 2$, $\PurePoweredQSDinfty\sbra*{1,0}$ is $\CeP$-hard.
\end{theorem}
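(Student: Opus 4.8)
The plan is to reduce from the canonical $\CeP$-complete problem, exploiting the characterization $\CeP = \coNQP$ together with the standard encoding of $\mathsf{GapP}$ functions as quantum \emph{transition amplitudes} (as opposed to acceptance probabilities). First I would record the exact behavior of $\Lambda_{\infty}$ on pure inputs. For $\rho_b = \ketbra{\psi_b}{\psi_b}$ the operator $\rho_0-\rho_1$ has at most two nonzero eigenvalues, namely $\pm\sqrt{1-|\innerprod{\psi_0}{\psi_1}|^2}$, so $\tfrac12\tr\bigl(|\rho_0-\rho_1|^{\alpha}\bigr) = \bigl(1-|\innerprod{\psi_0}{\psi_1}|^2\bigr)^{\alpha/2}$; letting $\alpha\to\infty$ gives $\Lambda_{\infty}(\rho_0,\rho_1)=1$ when $\innerprod{\psi_0}{\psi_1}=0$ and $\Lambda_{\infty}(\rho_0,\rho_1)=0$ otherwise. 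Hence $\PurePoweredQSDinfty\sbra*{1,0}$ is \emph{exactly} the promise problem of deciding, given two polynomial-size circuits producing pure states, whether those states are orthogonal.

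Next I would invoke the exact-counting characterization of the target class: $L\in\CeP$ iff there is a function $h\in\mathsf{GapP}$ (the difference of a $\#\mathsf{P}$ function and an $\mathsf{FP}$ function) with $x\in L \iff h(x)=0$. The crux is the correspondence underlying $\NQP=\coCeP$: from the nondeterministic machine computing $h$ one constructs, in deterministic polynomial time, a description of a polynomial-size quantum circuit $U_x$ over $\{H,\mathrm{CNOT},\mathrm{Toffoli},Z\}$, each decomposed \emph{exactly} into one- and two-qubit gates, such that $\bra{0^m}U_x\ket{0^m}=h(x)/2^{r(|x|)}$ for a suitable polynomial $r$. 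The construction is the familiar phase-counting one: apply $H^{\otimes r}$ to obtain a uniform superposition over the $2^{r}$ nondeterministic branches, reversibly compute the accept/reject bit of each branch, imprint the sign $(-1)^{[\text{reject}]}$, uncompute, and apply $H^{\otimes r}$ again; the amplitude of $\ket{0^m}$ is then $2^{-r}(\#\text{acc}-\#\text{rej})=h(x)/2^{r}$, which is real and vanishes precisely when $h(x)=0$.

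With this in hand the reduction is immediate: let $Q_0$ be the trivial circuit preparing $\ket{\psi_0}=\ket{0^m}$, and let $Q_1=U_x$ prepare $\ket{\psi_1}=U_x\ket{0^m}$, both pure and both of polynomial size (padding so that $m\geq 2$). Then $\innerprod{\psi_0}{\psi_1}=\bra{0^m}U_x\ket{0^m}=h(x)/2^{r}$, so $x\in L \iff h(x)=0 \iff \ket{\psi_0}\perp\ket{\psi_1} \iff \Lambda_{\infty}(\rho_0,\rho_1)=1$, while $x\notin L \iff \Lambda_{\infty}(\rho_0,\rho_1)=0$. This is a polynomial-time many-one reduction sending yes-instances to yes-instances of $\PurePoweredQSDinfty\sbra*{1,0}$ and no to no, establishing $\CeP$-hardness.

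The step I expect to be the main obstacle is ensuring that a \emph{single} inner product $\innerprod{\psi_0}{\psi_1}$ exactly captures the exact-counting predicate $h(x)=0$ — this is precisely why the problem lands on $\CeP$ rather than on a bounded-error class such as $\BQP$. It forces me to encode $h$ into an amplitude rather than a probability (a probability is a sum of squared amplitudes and would demand many amplitudes vanish simultaneously), and to keep everything exact: the transition amplitude must equal $h(x)/2^{r}$ on the nose, so I must verify that the Toffoli/phase circuit admits an exact decomposition into one- and two-qubit gates and that no spurious normalization or phase factor is introduced. I would also double-check that the perfect promise of $\PurePoweredQSDinfty\sbra*{1,0}$ (value exactly $1$ versus exactly $0$) matches the error-free nature of $\CeP$, so that no gap amplification or polarization is needed anywhere in the argument.
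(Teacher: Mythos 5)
Your proof is correct, and it rests on the same first observation as the paper's (for pure states, $\Lambda_{\infty}$ is exactly the orthogonality indicator, so $\PurePoweredQSDinfty\sbra*{1,0}$ is precisely pure-state orthogonality testing), but the reduction itself takes a genuinely different route. The paper stays entirely in the quantum-circuit world: it invokes $\CeP=\coNQP$ (Adleman--DeMarrais--Huang, Yamakami--Yao) as a black box, takes a generic \coNQP{} instance given by a quantum circuit $C_x$ that accepts with probability exactly $1$ on yes instances, and manufactures the two pure states via the circuit manipulation $C_x'\coloneqq C_x^{\dagger}\,\CNOT_{\sfO\to\sfF}\,C_x$ recycled from the \BQP{}-hardness construction (\Cref{lemma:PureQSD-is-BQPhard}), so that the overlap $\abs{\innerprod{\psi_0}{\psi_1}}$ vanishes iff $C_x$ accepts with certainty; it also needs a small auxiliary inequality (\Cref{fact:CePhard-ineq}) to lower-bound the no-instance overlap. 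You instead go through the classical counting definition of the class: $x\in L\iff h(x)=0$ for a $\mathsf{GapP}$ function $h$, and you construct the amplitude-encoding circuit yourself (Hadamards, sign imprint, uncompute, Hadamards) so that $\innerprod{\psi_0}{\psi_1}=h(x)/2^{r}$ exactly --- in effect inlining one direction of the very equivalence $\CeP=\coNQP$ that the paper cites. What each buys: the paper's argument is shorter modulo the cited equivalence and reuses an existing construction; yours is more self-contained and elementary, makes the exactness bookkeeping explicit (an integer gap over a power of two, exact Toffoli decomposition into one- and two-qubit gates as the input model requires), and needs neither the acceptance-probability parameter $a(n)$ nor any auxiliary inequality, since only $h(x)=0$ versus $h(x)\neq 0$ matters. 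Both are deterministic polynomial-time many-one reductions sending yes instances to $\Lambda_{\infty}=1$ and no instances to $\Lambda_{\infty}=0$, so both constitute complete proofs of the theorem.
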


\begin{proof}    
    Noting that $\coCeP=\NQP$~\cite{ADH97,YY99}, and hence $\CeP=\coNQP$, where $\NQP$ is a subclass of \PP{} that serves as a precise variant of \BQP{} that always rejects for \textit{no} instances, it suffices to show that \PurePoweredQSDinfty{} is \coNQP{}-hard. For any promise problem $(\calP_{\rm yes},\calP_{\rm no}) \in \coNQP[1,1-a(n)]$ with $a(n)\in(0,1)$, we assume without loss of generality that the \coNQP{} circuit $C_x$ has an output length of $n$. 

    To proceed, we adopt the construction from the proof of \Cref{lemma:PureQSD-is-BQPhard} and define a new circuit with output length $n'= n+1$: $C'_x \coloneqq C_x^{\dagger} \CNOT_{\sfO\to \sfF} C_x$, where both $\sfF$ and $\sfO$ are single-qubit registers. We say that $C'_x$ accepts if all qubits yield measurement outcomes of zero. 
    
    Now consider two pure states associated with $Q_0 = I$ and $Q_1=C'_x$: 
    $\ket{\psi_0} \coloneqq \ket{\bar{0}}\otimes \ket{0}_{\sfF}$ and $\ket{\psi_1} \coloneqq  C'_x (\ket{\bar{0}}\otimes \ket{0}_{\sfF})$. A direct calculation yields that
    \[ \abs*{\innerprod{\psi_0}{\psi_1}}^2 = \Pr\sbra*{C'_x \text{ accepts}} = \rbra*{1-\Pr\sbra*{C_x \text{ accepts}}}^2. \]
    
    As a consequence, we complete the proof by considering the following cases: 
    \begin{itemize}
        \item For \textit{yes} instances, $\abs*{\innerprod{\psi_0}{\psi_1}} = 1-\Pr\sbra{C_x\text{ accepts}} =0$, and hence the pure states $\ket{\psi_0}$ and $\ket{\psi_1}$ are orthogonal. Consequently, $\Lambda_{\infty}(\ketbra{\psi_0}{\psi_0},\ketbra{\psi_1}{\psi_1}) = 1$. 
        \item For \textit{no} instances, we have $\abs*{\innerprod{\psi_0}{\psi_1}} = 1-\Pr\sbra{C_x\text{ accepts}} \geq a(n) > 0$. Thus, the pure states $\ket{\psi_0}$ and $\ket{\psi_1}$ are not orthogonal. Consequently, $\Lambda_{\infty}(\ketbra{\psi_0}{\psi_0},\ketbra{\psi_1}{\psi_1}) = 0$. \qedhere
    \end{itemize}
\end{proof}

\end{document}